\newcommand{\PP}{{\mathbb{P}}}
\newcommand{\pt}{\mathrm{pt}}
\newcommand{\p}{{\partial}}
\newcommand{\mbZ}{\mathbb Z}
\newcommand{\mbC}{\mathbb C}
\newcommand{\tG}{\widetilde{G}}
\newcommand{\tU}{{\widetilde{U}}}
\def\d{{\partial}}
\newcommand{\eps}{\varepsilon}
\newcommand{\Coef}{\mathrm{Coef}}
\newcommand{\tv}{\widetilde v}
\renewcommand{\top}{\mathrm{top}}
\newcommand{\tOmega}{\widetilde\Omega}
\newcommand{\tR}{\widetilde{R}}
\newcommand{\tQ}{\widetilde{Q}}
\newcommand{\mcF}{\mathcal{F}}
\renewcommand{\tt}{\widetilde{t}}
\newcommand{\tr}{\mathrm{tr}}
\newcommand{\diag}{\mathrm{diag}}
\newcommand{\mcB}{\mathcal{B}}
\newcommand{\mcL}{\mathcal{L}}
\newcommand{\orig}{\mathrm{orig}}
\newcommand{\mcO}{\mathcal{O}}
\newcommand{\Id}{\mathrm{Id}}
\newcommand{\tD}{\widetilde{D}}
\newcommand{\const}{\mathrm{const}}
\newcommand{\cL}{\mathcal{L}}
\newcommand{\Mat}{\mathrm{Mat}}
\newcommand{\tmu}{\widetilde{\mu}}
\newcommand{\hatt}{\widehat{t}}
\newcommand{\onabla}{\overline{\nabla}}
\newcommand{\tnabla}{\widetilde{\nabla}}
\newcommand{\mcU}{\mathcal{U}}
\newcommand{\txi}{\widetilde{\xi}}
\newcommand{\oR}{\overline{R}}
\newcommand{\ext}{\mathrm{ext}}
\newcommand{\rspin}{\text{$r$-spin}}
\newcommand{\PST}{\mathrm{PST}}
\newtheorem{theorem}{Theorem}[section]
\newtheorem{proposition}[theorem]{Proposition}
\newtheorem{lemma}[theorem]{Lemma}
\newtheorem{definition}[theorem]{Definition}
\newtheorem{remark}[theorem]{Remark}
\def\&{\vspace{-5pt}&}
\numberwithin{equation}{section}
\begin{document}

\title{Open WDVV equations and Virasoro constraints}

\author{Alexey Basalaev}
\address{Skolkovo Institute of Science and Technology, Nobelya Ulitsa 3, Moscow, Russian Federation 121205}
\email{a.basalaev@skoltech.ru}

\author{Alexandr Buryak\textsuperscript{*}}
\address{School of Mathematics, University of Leeds, Leeds, LS2 9JT, United Kingdom}
\email{a.buryak@leeds.ac.uk}

\thanks{\textsuperscript{*} Corresponding author}

\dedicatory{Dedicated to Rafail Kalmanovich Gordin on the occasion of his 70th birthday}

\date{\today}

\begin{abstract}
In their fundamental work, B. Dubrovin and Y. Zhang, generalizing the Virasoro equations for the genus $0$ Gromov--Witten invariants, proved the Virasoro equations for a descendent potential in genus $0$ of an arbitrary conformal Frobenius manifold. More recently, a remarkable system of partial differential equations, called the open WDVV equations, appeared in the work of A. Horev and J. P. Solomon. This system controls the genus $0$ open Gromov--Witten invariants. In our paper, for an arbitrary solution of the open WDVV equations, satisfying a certain homogeneity condition, we construct a descendent potential in genus $0$ and prove an open analog of the Virasoro equations. We also present conjectural open Virasoro equations in all genera and discuss some examples. 
\end{abstract}

\maketitle

\tableofcontents

\section{Introduction}

The {\it WDVV equations}, also called the {\it associativity equations}, is a system of non-linear partial differential equations for one function, depending on a finite number of variables. Let $N\ge 1$ and $\eta=(\eta_{\alpha\beta})$ be an $N\times N$ symmetric non-degenerate matrix with complex coefficients. The WDVV equations is the following system of PDEs for a function $F(t^1,\ldots,t^N)$ defined on some open subset $M\subset\mbC^N$:
\begin{gather}\label{eq:WDVV equations}
\frac{\d^3 F}{\d t^\alpha\d t^\beta\d t^\mu}\eta^{\mu\nu}\frac{\d^3 F}{\d t^\nu\d t^\gamma\d t^\delta}=\frac{\d^3 F}{\d t^\delta\d t^\beta\d t^\mu}\eta^{\mu\nu}\frac{\d^3 F}{\d t^\nu\d t^\gamma\d t^\alpha},\quad 1\le\alpha,\beta,\gamma,\delta\le N,
\end{gather}
where $(\eta^{\alpha\beta}):=\eta^{-1}$ and we use the convention of sum over repeated Greek indices. Suppose that the function~$F$ satisfies the additional assumption $\frac{\d^3 F}{\d t^1\d t^\alpha\d t^\beta}=\eta_{\alpha\beta}$. Then the function~$F$ defines a structure of {\it Frobenius manifold} on~$M$ and is also called the \textit{Frobenius manifold potential}. Such a structure appears in different areas of mathematics, including the singularity theory and curve counting theories in algebraic geometry (Gromov--Witten theory, Fan--Jarvis--Ruan--Witten theory). A systematic study of Frobenius manifolds was first done by B.~Dubrovin~\cite{Dub96,Dub99}. 

\bigskip

Consider formal variables $t^\alpha_p$, $1\le\alpha\le N$, $p\ge 0$, where we identify $t^\alpha_0=t^\alpha$. There is a natural way to associate to the function $F$ a {\it descendent potential} $\mcF$, which is a function of the variables $t^\alpha_p$, such that the difference $\left.\mcF\right|_{t^*_{\ge 1}=0}-F$ is at most quadratic in the variables $t^1,\ldots,t^N$ and the following equations are satisfied:
\begin{gather}\label{eq:TRR for Frob. man.,1}
\frac{\d^3\mcF}{\d t^\alpha_{a+1}t^\beta_b\d t^\gamma_c}=\frac{\d^2\mcF}{\d t^\alpha_a\d t^\mu_0}\eta^{\mu\nu}\frac{\d^3\mcF}{\d t^\nu_0\d t^\beta_b\d t^\gamma_c},\quad 1\le\alpha,\beta,\gamma\le N,\quad a,b,c\ge 0.
\end{gather}
These equations are called the {\it topological recursion relations (TRR)}. In Gromov--Witten theory, where the function $F$ is the generating series of intersection numbers on the moduli space of maps from a Riemann surface of genus $0$ to a target variety, a natural descendent potential $\mcF$ is given by the generating series of intersection numbers with the Chern classes of certain line bundles over the moduli space. Note that the system of equations~\eqref{eq:TRR for Frob. man.,1} can be equivalently written as
\begin{gather}
d\left(\frac{\d^2\mcF}{\d t^\alpha_{a+1}\d t^\beta_b}\right)=\frac{\d^2\mcF}{\d t^\alpha_a\d t^\mu_0}\eta^{\mu\nu}d\left(\frac{\d^2\mcF}{\d t^\nu_0\d t^\beta_b}\right),\quad 1\le\alpha,\beta\le N,\quad a,b\ge 0,\label{eq:TRR for Frob. man.,2}
\end{gather}
where $d \left( \cdot \right)$ denotes the full differential.

\bigskip

Let $\eps$ be a formal variable and $\tt^\alpha_p:=t^\alpha_p-\delta^{\alpha,1}\delta_{p,1}$. If our Frobenius manifold is {\it conformal}, meaning that the function~$F$ satisfies a certain homogeneity condition, then in~\cite{DZ99} the authors constructed differential operators $L_m$, $m\ge -1$, of the form
\begin{align*}
L_m=&\sum_{p,q\ge 0}\left(\eps^2a_m^{\alpha,p;\beta,q}\frac{\d^2}{\d t^\alpha_p\d t^\beta_q}+b_{m;\alpha,p}^{\beta,q}\tt^\alpha_p\frac{\d}{\d t^\beta_q}+\eps^{-2}c_{m;\alpha,p;\beta,q}\tt^\alpha_p\tt^\beta_q\right)+\\
&+\const\cdot\delta_{m,0},\quad a_m^{\alpha,p;\beta,q},b_{m;\alpha,p}^{\beta,q},c_{m;\alpha,p;\beta,q}\in\mbC,
\end{align*}
satisfying the commutation relations
$$
[L_i,L_j]=(i-j)L_{i+j},\quad i,j\ge -1,
$$
and such that the following equations, called the {\it Virasoro constraints}, are satisfied:
\begin{gather}\label{eq:Virasoro constraints}
\Coef_{\eps^{-2}}\left(\frac{L_m e^{\eps^{-2}\mcF}}{e^{\eps^{-2}\mcF}}\right)=0,\quad m\ge -1.
\end{gather}
We recall the details in Section~\ref{section:Virasoro for Frobenius}.

\bigskip

In Gromov--Witten theory, for each $g\ge 0$ one defines the generating series $\mcF_g(t^*_*)$ of intersection numbers on the moduli space of maps from a Riemann surface of genus $g$ to a target variety, $\mcF_0=\mcF$. The {\it Virasoro conjecture} says that the following equations are satisfied:
\begin{gather}\label{eq:Virasoro in all genera}
L_m e^{\sum_{g\ge 0}\eps^{2g-2}\mcF_g}=0,\quad m\ge -1.
\end{gather}
One can see that equation~\eqref{eq:Virasoro constraints} is the genus $0$ part of equation~\eqref{eq:Virasoro in all genera}. The Virasoro conjecture is proved in a wide class of cases, but is still open in the whole generality.

\bigskip

More recently, a remarkable system of PDEs, extending the WDVV equations~\eqref{eq:WDVV equations}, appeared in the literature. Let $s$ be a formal variable. The {\it open WDVV equations} are the following PDEs for a function $F^o(t^1,\ldots,t^N,s)$:
\begin{align}
\frac{\d^3F}{\d t^\alpha\d t^\beta\d t^\mu}\eta^{\mu\nu}\frac{\d^2F^o}{\d t^\nu\d t^\gamma}+\frac{\d^2F^o}{\d t^\alpha\d t^\beta}\frac{\d^2F^o}{\d s\d t^\gamma}=&\frac{\d^3F}{\d t^\gamma\d t^\beta\d t^\mu}\eta^{\mu\nu}\frac{\d^2F^o}{\d t^\nu\d t^\alpha}+\frac{\d^2F^o}{\d t^\gamma\d t^\beta}\frac{\d^2F^o}{\d s\d t^\alpha},&&1\le\alpha,\beta,\gamma\le N,\label{eq:open WDVV,1}\\
\frac{\d^3F}{\d t^\alpha\d t^\beta\d t^\mu}\eta^{\mu\nu}\frac{\d^2F^o}{\d t^\nu\d s}+\frac{\d^2F^o}{\d t^\alpha\d t^\beta}\frac{\d^2F^o}{\d s^2}=&\frac{\d^2F^o}{\d s\d t^\beta}\frac{\d^2F^o}{\d s\d t^\alpha},&&1\le\alpha,\beta\le N.\label{eq:open WDVV,2}
\end{align}
These equations first appeared in~\cite[Theorem~2.7]{HS12} in the context of open Gromov--Witten theory. The open WDVV equations also appeared in the works~\cite{PST14,BCT17,BCT18}. The solutions of equations~\eqref{eq:open WDVV,1},~\eqref{eq:open WDVV,2}, considered in these works, also satisfy the additional condition
\begin{gather}\label{eq:unit condition for Fo}
\frac{\d^2 F^o}{\d t^1\d t^\alpha}=0,\qquad \frac{\d^2 F^o}{\d t^1\d s}=1.
\end{gather}
\begin{remark}\label{remark about open WDVV}
The works~\cite{PST14,BCT17,BCT18} don't mention the open WDVV explicitly, but, as it is explained in~\cite[Section~4]{Bur18}, the open WDVV equations follow immediately from the open TRR equations~\cite[Theorem~1.5]{PST14}, \cite[Lemma~3.6]{BCT17}, \cite[Theorem~4.1]{BCT18}. 
\end{remark}

\bigskip

There is an open analog of equations~\eqref{eq:TRR for Frob. man.,2}. Let $s_p$, $p\ge 0$, be formal variables, where we identify $s_0=s$. The {\it open topological recursion relations} are the following PDEs for a function~$\mcF^o$, depending on the variables $t^\alpha_p$ and $s_p$, and such that the difference $\left.\mcF^o\right|_{\substack{t^*_{\ge 1}=0\\s_{\ge 1}=0}}-F^o$ is at most linear in the variables $t^1,\ldots,t^N$ and $s$:
\begin{align}
d\left(\frac{\d\mcF^o}{\d t^\alpha_{p+1}}\right)=&\frac{\d^2\mcF}{\d t^\alpha_p\d t^\mu_0}\eta^{\mu\nu}d\left(\frac{\d \mcF^o}{\d t^\nu_0}\right)+\frac{\d\mcF^o}{\d t^\alpha_p}d\left(\frac{\d\mcF^o}{\d s}\right),&1\le\alpha\le N,\quad & p\ge 0,\label{eq:open TRR,1}\\
d\left(\frac{\d\mcF^o}{\d s_{p+1}}\right)=&\frac{\d\mcF^o}{\d s_p}d\left(\frac{\d\mcF^o}{\d s}\right),& & p\ge 0.\label{eq:open TRR,2}
\end{align}
As we already mentioned in Remark~\ref{remark about open WDVV}, these equations appeared in the works~\cite{PST14,BCT17,BCT18}.

\bigskip

The simplest Frobenius manifold has dimension $1$ and the potential $F=F^\pt=\frac{(t^1)^3}{6}$. A natural descendent potential $\mcF^\pt$, associated to it, is given by the generating series of the integrals of monomials in the psi-classes over the moduli space of stable curves of genus $0$. Here "$\pt$" means "point", because such integrals can be considered as the Gromov--Witten invariants of a point. One can easily see that the function $F^o=F^{\pt,o}=t^1 s+\frac{s^3}{6}$ satisfies the open WDVV equations and condition~\eqref{eq:unit condition for Fo}. In~\cite{PST14} the authors, using the intersection theory on the moduli space of stable pointed disks, constructed a solution $\mcF^{\pt,o}$ of the open TRR equations~\eqref{eq:open TRR,1},~\eqref{eq:open TRR,2}. Moreover, they introduced the operators
\begin{gather}\label{eq:open Virasoro operators}
\cL^\pt_m:=L^\pt_m+\eps^{-1}\delta_{m,-1}s+\left(\sum_{i\ge 0}\frac{(i+m+1)!}{i!}s_i\frac{\d}{\d s_{m+i}}+\delta_{m,0}\frac{3}{4}\right)+\eps\frac{3(m+1)!}{4}\frac{\d}{\d s_{m-1}},\quad m\ge -1,
\end{gather}
where $L^\pt_m$ are the Virasoro operators for our Frobenius manifold, and proved the equations~\cite[Theorem~1.1]{PST14}
\begin{gather}\label{eq:open Virasoro in the trivial case}
\Coef_{\eps^{-1}}\left(\frac{\cL^\pt_m e^{\eps^{-2}\mcF^\pt+\eps^{-1}\mcF^{\pt,o}}}{e^{\eps^{-2}\mcF^\pt+\eps^{-1}\mcF^{\pt,o}}}\right)=0,\quad m\ge -1,
\end{gather}
which they called the {\it open Virasoro constraints}.

\begin{remark}
Strictly speaking, in~\cite{PST14} the authors constructed a function $\mcF^{\PST}$, related to our function $\mcF^{\pt,o}$ by $\left.\mcF^{\PST}=\mcF^{\pt,o}\right|_{s_{\ge 1}=0}$. The function $\mcF^{\pt,o}$ can be reconstructed from the function $\mcF^{\PST}$, using the differential equations $\frac{\d\mcF^{\pt,o}}{\d s_n}=\frac{1}{(n+1)!}\left(\frac{\d \mcF^{\pt,o}}{\d s}\right)^{n+1}$. The Virasoro equations, proved in~\cite{PST14}, look as
\begin{align}
\Coef_{\eps^{-1}}\left(\frac{\cL^{\PST}_m e^{\eps^{-2}\mcF^\pt+\eps^{-1}\mcF^{\PST}}}{e^{\eps^{-2}\mcF^\pt+\eps^{-1}\mcF^{\PST}}}\right)=0,\quad \cL^{\PST}_m:=L_m^\pt+\eps^m s\frac{\d^{m+1}}{\d s^{m+1}}+\frac{3m+3}{4}\eps^m\frac{\d^m}{\d s^m},\quad m\ge -1.\label{eq:open Virasoro of PST}
\end{align}
The fact, that equations~\eqref{eq:open Virasoro in the trivial case} and~\eqref{eq:open Virasoro of PST} are equivalent, was noticed in~\cite[Section~5.2]{Bur16}.
\end{remark}

\bigskip

In our paper, we generalize formula~\eqref{eq:open Virasoro in the trivial case} for an arbitrary conformal Frobenius manifold and a solution of the open WDVV equations. We consider an arbitrary conformal Frobenius manifold, an associated descendent potential $\mcF$ and the Virasoro operators $L_m$, $m\ge -1$. Let $F^o$ be a solution of the open WDVV equations, satisfying condition~\eqref{eq:unit condition for Fo} and a certain homogeneity condition, that we will describe later. We will construct a solution $\mcF^o$ of the open TRR equations~\eqref{eq:open TRR,1},~\eqref{eq:open TRR,2} and differential operators~$\cL_m$, $m\ge -1$, of the form
\begin{align}
\cL_m=L_m&+\eps^{-1}\left(\delta_{m,-1}s+\sum_{p\ge 0}d_{m;\alpha,p}\tt^\alpha_p\right)+\label{eq:general open Virasoro}\\
&+\left(\sum_{i\ge 0}\frac{(i+m+1)!}{i!}s_i\frac{\d}{\d s_{m+i}}+\sum_{p,q\ge 0}e_{m;\alpha,p}^q\tt^\alpha_p\frac{\d}{\d s_q}+\delta_{m,0}\frac{3}{4}\right)+\notag\\
&+\eps\left(\sum_{p\ge 0}f_m^{\alpha,p}\frac{\d}{\d t^\alpha_p}+\frac{3(m+1)!}{4}\frac{\d}{\d s_{m-1}}\right)+\notag\\
&+\eps^2\sum_{p,q\ge 0}g_m^{\alpha,p;q}\frac{\d^2}{\d t^\alpha_p\d s_q},\quad d_{m;\alpha,p}, e_{m;\alpha,p}^q, f_m^{\alpha,p}, g_m^{\alpha,p;q}\in\mbC,\notag
\end{align}
such that the equations
\begin{gather*}
\Coef_{\eps^{-1}}\left(\frac{\cL_m e^{\eps^{-2}\mcF+\eps^{-1}\mcF^o}}{e^{\eps^{-2}\mcF+\eps^{-1}\mcF^o}}\right)=0,\quad m\ge -1,
\end{gather*}
hold. The details are given in Section~\ref{section:open Virasoro}, with the main result, formulated in Theorem~\ref{theorem:open Virasoro}.   

\bigskip

It occurs that, given a function $F = F(t^1,\dots,t^N)$, defining a Frobenius manifold and a solution $F^o$ of the open WDVV equations, satisfying~\eqref{eq:unit condition for Fo}, the $(N+1)$-tuple of functions~$\left(\eta^{1\mu}\frac{\d F}{\d t^\mu},\ldots,\eta^{N\mu}\frac{\d F}{\d t^\mu},F^o\right)$ forms a {\it vector potential} of a {\it flat F-manifold}. This was observed by Paolo Rossi. We say that this flat F-manifold \textit{extends} the Frobenius manifold given. In Section~\ref{section:flat F-manifolds} we prove Virasoro type constraints for flat F-manifolds and derive Theorem~\ref{theorem:open Virasoro} as a special case of this result.

\bigskip

Let us return to the particular case, considered in the paper~\cite{PST14}. The construction of the intersection theory on the moduli space of stable pointed disks, given there, can be generalized to higher genera. This has been announced by J.~P. Solomon and R.~J. Tessler, some of the details of their construction are presented in~\cite{Tes15}. As a result, one gets a sequence of functions $\mcF^{\pt,o}_0=\mcF^{\pt,o},\mcF^{\pt,o}_1,\mcF^{\pt,o}_2,\ldots$, and already in~\cite{PST14} the authors conjectured that the following equations should hold:
\begin{gather}\label{eq:open Virasoro for the point in all genera}
\mcL^\pt_m e^{\sum_{g\ge 0}\eps^{2g-2}\mcF^\pt_g+\sum_{g\ge 0}\eps^{g-1}\mcF^{\pt,o}_g}=0,\quad m\ge -1,
\end{gather}
where $\mcF^\pt_g$ is the generating series of the intersection numbers of psi-classes on the moduli space of curves of genus $g$. This conjecture was proved in~\cite{BT17}. 

\bigskip

In Section~\ref{section:open Virasoro in all genera} we discuss a conjectural generalization of equations~\eqref{eq:open Virasoro for the point in all genera} for an arbitrary conformal Frobenius manifold and a solution of the open WDVV equations.

\bigskip

Finally, in Section~\ref{section: examples} we present examples of solutions of the open WDVV equations, for which our main result can be applied. In the separate paper~\cite{BB19} we discuss solutions of the open WDVV equations for the Coxeter groups.

\subsection{Acknowledgements}

We are very grateful to Paolo Rossi for informing us about his observation that solutions of the open WDVV equations correspond to flat F-manifolds of certain type. 

We would like to thank Ezra Getzler for drawing our attention to his paper~\cite{Get04} containing some of the constructions for flat F-manifolds that we present here. He also informed us about Lemma~3.1 from~\cite{Get99} (the proof was provided by E.~Frenkel), which we re-proved in the previous version of the paper. 

We would like to thank Alessandro Arsie, Claus Hertling, Paolo Lorenzoni, Jake Solomon and Ran Tessler for useful discussions. 

This project has received funding from the European Union's Horizon 2020 research and innovation programme under the Marie Sk\l odowska-Curie grant agreement No. 797635. The first named author was partially supported by the Grant RFFI-18-31-20046. The second named author was partially supported by the Grant RFFI-16-01-00409. 

%%%%%%%%%%%%%%%%%%%%%%%%%%%%%%%%%%%%%%%%%%%%%%%%%%%%%%%%%%%%%%%%%%%%%%%%%
%%%%%%%%%%%%%%%%%%%%%%%%%%%%%%%%%%%%%%%%%%%%%%%%%%%%%%%%%%%%%%%%%%%%%%%%%

\section{Virasoro constraints for Frobenius manifolds}\label{section:Virasoro for Frobenius}

In this section we review the construction of a descendent potential associated to a solution of the WDVV equations and recall the Virasoro constraints.  

Let us fix $N\ge 1$ and let~$M$ be a simply connected open neighbourhood of a point $(t^1_\orig,\ldots,t^N_\orig)\in\mbC^N$. Denote by $\mcO$ the sheaf of analytic functions on~$\mbC^N$. Consider a solution $F\in\mcO(M)$ of the WDVV equations~\eqref{eq:WDVV equations}, satisfying $\frac{\d^3 F}{\d t^1\d t^\alpha\d t^\beta}=\eta_{\alpha\beta}$. In order to include the case, when~$F$ is a formal power series, in our considerations, we allow~$M$ to be a formal neighbourhood of $(t^1_\orig,\ldots,t^N_\orig)\in\mbC^N$ meaning that $\mcO(M)$ denotes in this case the ring of formal power series in the variables~$(t^\alpha-t^\alpha_{\orig})$.

\subsection{Descendent potential}\label{subsection:descendent potential}

In order to construct a descendent potential, one has to choose an additional structure, called a {\it calibration} of the Frobenius manifold. Denote 
$$
c^\alpha_{\beta\gamma}:=\eta^{\alpha\mu}\frac{\d^3 F}{\d t^\mu\d t^\beta\d t^\gamma}.
$$
A calibration is a collection of functions~$\Omega^{\alpha,0}_{\beta,d}\in\mcO(M)$, $1\le\alpha,\beta\le N$, $d\ge -1$, satisfying the following properties:
\begin{align*}
\Omega^{\alpha,0}_{\beta,-1}=&\delta^\alpha_\beta, && \\
\frac{\d\Omega^{\alpha,0}_{\beta,d}}{\d t^\gamma}=&c^\alpha_{\gamma\mu}\Omega^{\mu,0}_{\beta,d-1}, && d\ge 0,\\
\sum_{\substack{p+q=d\\p,q\ge -1}}(-1)^{p+1}\Omega^{\alpha,0}_{\mu,p}\eta^{\mu\nu}\Omega^{\beta,0}_{\nu,q}=&0, && d\ge -1.
\end{align*}
The space of all calibrations is non-empty and is parameterized by elements $G(z)\in\Mat_{N,N}(\mbC)[[z]]$, satisfying $G(0)=\Id$ and $G(-z)\eta^{-1} G^T(z)=\eta^{-1}$ \cite[Lemma 2.2 and Exercise 2.8]{Dub99}. 

Let us choose a calibration. One can immediately see that $\frac{\d\Omega^{\alpha,0}_{1,0}}{\d t^\beta}=\delta^\alpha_\beta$, which implies that $\Omega^{\alpha,0}_{1,0}-t^\alpha$ is a constant. Let us make the change of coordinates $t^\alpha\mapsto\Omega^{\alpha,0}_{1,0}$, so that we have now $\Omega^{\alpha,0}_{1,0}=t^\alpha$.

Let $v^1,\ldots,v^N$ be formal variables and consider the system of partial differential equations
$$
\frac{\d v^\alpha}{\d t^\beta_d}=\d_x\left(\left.\Omega^{\alpha,0}_{\beta,d}\right|_{t^\gamma=v^\gamma}\right),\quad 1\le\alpha,\beta\le N,\quad d\ge 0,
$$
called the {\it principal hierarchy}. We see that the equations for the flow $\frac{\d}{\d t^1_0}$ look as $\frac{\d v^\alpha}{\d t^1_0}=v^\alpha_x$. This allows to identify $t^1_0=x$. Denote by $(v^\top)^\alpha\in\mcO(M)[[t^*_{\ge 1}]]$ the solution of the principal hierarchy specified by the initial condition
$$
\left.(v^\top)^\alpha\right|_{t^\beta_d=\delta^{\beta,1}\delta_{d,0}x}=\delta^{\alpha,1}x.
$$
We have $\left.(v^\top)^\alpha\right|_{t^*_{\ge 1}=0}=t^\alpha_0$.

Define functions $\Omega_{\alpha,p;\beta,q}\in\mcO(M)$ and $\Omega_{\alpha,p;\beta,q}^\top\in\mcO(M)[[t^*_{\ge 1}]]$,  $p,q\ge 0$, by
\begin{gather*}
\Omega_{\alpha,p;\beta,q}:=\sum_{i=0}^q(-1)^{q-i}\Omega_{\alpha,p+q-i}^{\mu,0}\eta_{\mu\nu}\Omega^{\nu,0}_{\beta,i-1},\qquad\Omega_{\alpha,p;\beta,q}^\top:=\left.\Omega_{\alpha,p;\beta,q}\right|_{t^\gamma=(v^\top)^\gamma}.
\end{gather*}
The descendent potential corresponding to the Frobenius manifold together with the chosen calibration is defined by
$$
\mcF:=\frac{1}{2}\sum_{p,q\ge 0}\tt^\alpha_p\tt^\beta_q\Omega^\top_{\alpha,p;\beta,q}\in\mcO(M)[[t^*_{\ge 1}]],
$$
where we recall that $\tt^\alpha_p=t^\alpha_p-\delta^{\alpha,1}\delta_{p,1}$ \cite[Section 3]{DZ99}. The difference $\left.\mcF\right|_{t^*_{\ge 1}=0}-F$ is at most quadratic in the variables $t^1,\ldots,t^N$ and the function $\mcF$ satisfies equations~\eqref{eq:TRR for Frob. man.,1} together with the equation
\begin{gather*}
\sum_{p\ge 0}\tt^\alpha_{p+1}\frac{\d\mcF}{\d t^\alpha_p}+\frac{1}{2}\eta_{\alpha\beta}t^\alpha t^\beta=0,
\end{gather*}
which is called the {\it string equation}~\cite[Sections 3, 4]{DZ99}. 

\begin{remark}
Strictly speaking, in~\cite{Dub99} and~\cite{DZ99} the authors consider the case of a conformal Frobenius manifold, but one can easily see that the results, discussed in this section, together with their proofs presented in~\cite{Dub99,DZ99}, hold for all not necessarily conformal Frobenius manifolds. We have borrowed the term "calibration" from the paper~\cite{DLYZ16}.
\end{remark}

\subsection{Virasoro constraints}\label{subsection:Virasoro constraints}

The Frobenius manifold is said to be conformal if there exists a vector field $E$ of the form
\begin{gather}\label{eq:Euler vector field}
E=\underbrace{((1-q^\gamma)t^\gamma+r^\gamma)}_{=:E^\gamma}\frac{\d}{\d t^\gamma},\quad q^\gamma,r^\gamma\in\mbC,\quad q^1=0,
\end{gather}
satisfying
\begin{gather*}
L_E F=(3-\delta)F+\frac{1}{2}A_{\alpha\beta}t^\alpha t^\beta+B_\alpha t^\alpha+C,\quad\text{for some $\delta,A_{\alpha\beta},B_\alpha,C\in\mbC$}.
\end{gather*}
The number $\delta$ is often called the {\it conformal dimension}. Denote
$$
\mu^\alpha:=q^\alpha-\frac{\delta}{2},\qquad \mu:=\diag(\mu^1,\ldots,\mu^N). 
$$

In the conformal case there exists a calibration, satisfying the property
\begin{gather}\label{eq:conformal calibration}
E^\theta\frac{\d}{\d t^\theta}\Omega^{\alpha,0}_{\beta,d}=(d+1+\mu^\beta-\mu^\alpha)\Omega^{\alpha,0}_{\beta,d}+\sum_{i=1}^{d+1}\Omega^{\alpha,0}_{\mu,d-i}(R_i)^\mu_\beta,\quad d\ge -1,
\end{gather}
for some matrices $R_n$, $n\ge 1$, satisfying
$$
[\mu,R_n]=nR_n,\qquad \eta R_n \eta^{-1}=(-1)^{n-1}R_n^T.
$$
One can actually see that the matrices $R_n$ are determined uniquely by the functions $\Omega^{\alpha,0}_{\beta,d}$. Note that only finitely many of the matrices $R_n$ are non-zero. The space of all calibrations, satisfying property~\eqref{eq:conformal calibration}, can be explicitly described, see~\cite[Section 2]{Dub99}. A calibration of a conformal Frobenius manifold will always be assumed to satisfy property~\eqref{eq:conformal calibration}.

Let us choose a calibration of our conformal Frobenius manifold and denote 
$$
R:=\sum_{i\ge 1}R_i.
$$
For an arbitrary $N\times N$ matrix $A=(A^\alpha_\beta)$ define matrices $P_m(A,R)$, $m\ge -1$, by the following recursion relation:
\begin{align*}
P_{-1}(A,R)=&\Id,\\
P_{m+1}(A,R)=&R P_m(A,R)+P_m(A,R)\left(A+m+\frac{1}{2}\right),\quad m\ge -1.
\end{align*}
Alternatively, the matrix $P_m(A,R)$ can be defined by 
$$
P_m(A,R)=:\prod_{i=0}^m\left(R+A+i-\frac{1}{2}\right):,
$$
where the symbol $::$ means that, when we take the product, we should place all $R$'s to the left of all $A$'s. Given an integer $p$, define a matrix $[A]_p$ by
$$
([A]_p)^\alpha_\beta:=
\begin{cases}
A^\alpha_\beta,&\text{if $q^\alpha-q^\beta=p$},\\
0,&\text{otherwise}.
\end{cases}
$$
The Virasoro operators $L_m$, $m\ge -1$, of our conformal Frobenius manifold are given by
\begin{align}
L_m:=&\frac{\eps^{-2}}{2}\sum_{d_1,d_2\ge 0}(-1)^{d_1}\left([P_m(\mu+d_2+1,R)]_{m+d_1+d_2+1}\right)^\mu_\beta\eta_{\alpha\mu}\tt^\alpha_{d_1}\tt^\beta_{d_2}+\label{eq:Virasoro1}\\
&+\sum_{\substack{d\ge 0\\0\le k\le m+d}}\left([P_m(\mu+d+1,R)]_{m+d-k}\right)^\beta_\alpha\tt^\alpha_d\frac{\d}{\d t^\beta_k}+\label{eq:Virasoro2}\\
&+\frac{\eps^2}{2}\sum_{\substack{d_1+d_2\le m-1\\d_1,d_2\ge 0}}(-1)^{d_2+1}\left([P_m(\mu-d_2,R)]_{m-1-d_1-d_2}\right)^\alpha_\mu\eta^{\mu\beta}\frac{\d^2}{\d t^\alpha_{d_1}\d t^\beta_{d_2}}+\label{eq:Virasoro3}\\
&+\delta_{m,0}\frac{1}{4}\tr\left(\frac{1}{4}-\mu^2\right),\quad m\ge -1,\notag
\end{align}
and equations~\eqref{eq:Virasoro constraints} hold~\cite[page 428]{DZ99}.

\begin{remark}
One can easily see that the last term $\delta_{m,0}\frac{1}{4}\tr\left(\frac{1}{4}-\mu^2\right)$ in the expression for the Virasoro operators $L_m$ doesn't play a role in equations~\eqref{eq:Virasoro constraints}. However, this term plays a role in the Virasoro constraints in all genera~\eqref{eq:Virasoro in all genera}. We discuss it in more details in Section~\ref{section:open Virasoro in all genera}. 
\end{remark}

%%%%%%%%%%%%%%%%%%%%%%%%%%%%%%%%%%%%%%%%%%%%%%%%%%%%%%%%%%%%%%%%%%%%%%%%%%%
%%%%%%%%%%%%%%%%%%%%%%%%%%%%%%%%%%%%%%%%%%%%%%%%%%%%%%%%%%%%%%%%%%%%%%%%%%%

\section{Open Virasoro constraints}\label{section:open Virasoro}

Here we present our construction of an {\it open descendent potential} associated to a solution of the open WDVV equations and present our main result -- the open Virasoro constraints, given in Theorem~\ref{theorem:open Virasoro}. 

Consider a conformal Frobenius manifold~$M$ together with its calibration from Section~\ref{subsection:Virasoro constraints}. Let $U$ be a simply connected open neighbourhood of a point $s_{\orig}\in\mbC$. Suppose that a function $F^o(t^1,\ldots,t^N,s)\in\mcO(M\times U)$ satisfies the open WDVV equations~\eqref{eq:open WDVV,1},~\eqref{eq:open WDVV,2} and condition~\eqref{eq:unit condition for Fo}. Let us assume that the function $F^o$ satisfies the homogeneity condition
\begin{gather}\label{eq:homogeneity for Fo}
E^\gamma\frac{\d F^o}{\d t^\gamma}+\underbrace{\left(\frac{1-\delta}{2}s+r^{N+1}\right)}_{=:E^{N+1}}\frac{\d F^o}{\d s}=\frac{3-\delta}{2}F^o+D_\alpha t^\alpha+\tD s+E,\quad\text{for some $r^{N+1},D_\alpha,\tD,E\in\mbC$}.
\end{gather}
This condition holds in the examples, considered in the papers~\cite{HS12,PST14,BCT17,BCT18}. Actually, in these examples the constant $r^{N+1}$ is zero, but this is not needed in our considerations.

In order to construct an open descendent potential $\mcF^o$ we need an additional structure, similar to a calibration of a Frobenius manifold. It will be convenient for us to denote the variable $s_d$ by~$t^{N+1}_d$. We adopt the conventions 
\begin{gather*}
\Omega^{\alpha,0}_{N+1,d}:=0,\quad 1\le\alpha\le N,\quad d\ge -1,\qquad\qquad \mu^{N+1}:=\frac{1}{2},
\end{gather*}
and define a diagonal $(N+1)\times(N+1)$ matrix $\tmu$ by
$$
\tmu:=\diag(\mu^1,\ldots,\mu^{N+1}).
$$

\begin{definition}
A calibration of the function $F^o$ is a sequence of functions $\Phi_{\beta,d}\in\mcO(M\times U)$, $1\le\beta\le N+1$, $d\ge -1$, satisfying the properties
\begin{align}
\Phi_{\beta,-1}=&\delta_{\beta,N+1}, && \label{eq:open calibration,property 1}\\
\frac{\d\Phi_{\beta,d}}{\d t^\gamma}=&\sum_{1\le\mu\le N}\frac{\d^2 F^o}{\d t^\gamma\d t^\mu}\Omega^{\mu,0}_{\beta,d-1}+\frac{\d^2 F^o}{\d t^\gamma\d t^{N+1}}\Phi_{\beta,d-1}, && d\ge 0,\label{eq:open calibration,property 2}\\
E^\mu\frac{\d\Phi_{\beta,d}}{\d t^\mu}=&\left(d+\frac{1}{2}+\mu^\beta\right)\Phi_{\beta,d}+\sum_{i=1}^{d+1}\Phi_{\mu,d-i}(\tR_i)^\mu_\beta, && d\ge -1,\label{eq:open calibration, property 3}
\end{align}
for some $(N+1)\times(N+1)$ matrices $\tR_n$, $n\ge 1$, satisfying 
$$
(\tR_n)^\alpha_\beta=
\begin{cases}
(R_n)^\alpha_\beta,&\text{if $1\le\alpha,\beta\le N$},\\
0,&\text{if $\beta=N+1$},
\end{cases}
\qquad [\tmu,\tR_n]=n\tR_n.
$$
\end{definition}
One can easily see that, for a calibration of the function $F^o$, matrices $\tR_n$ are uniquely determined by the functions $\Phi_{\beta,d}$.
\begin{lemma}\label{lemma:existence of open calibrations}
The space of calibrations of the function $F^o$ is non-empty.
\end{lemma}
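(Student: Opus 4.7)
The strategy is induction on $d \ge -1$, constructing the $\Phi_{\beta,d}$ and the matrices $\tR_n$ simultaneously. The base case $d=-1$ is immediate: $\Phi_{\beta,-1}=\delta_{\beta,N+1}$ satisfies (\ref{eq:open calibration,property 1}) by definition, while (\ref{eq:open calibration, property 3}) at $d=-1$ reduces to $0 = (\mu^\beta - \tfrac{1}{2})\delta_{\beta,N+1}$, which holds because $\mu^{N+1} = \tfrac{1}{2}$ and the sum over $i$ is empty.

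For the inductive step from $d-1$ to $d$, I first produce $\Phi_{\beta,d}$ from (\ref{eq:open calibration,property 2}). Reading this equation for $\gamma \in \{1,\ldots,N+1\}$ (with $t^{N+1}=s$), its right-hand side prescribes the full differential of $\Phi_{\beta,d}$ on $M \times U$. The first technical point is to check that this $1$-form is closed. Differentiating with respect to $t^{\gamma'}$, the derivatives of $\Omega^{\mu,0}_{\beta,d-1}$ are supplied by the Frobenius-manifold calibration recursion and those of $\Phi_{\beta,d-1}$ by the inductive hypothesis. A direct calculation shows that the symmetry in $(\gamma,\gamma')$ reduces precisely to the open WDVV equations: (\ref{eq:open WDVV,1}) handles the case $\gamma, \gamma' \in \{1,\ldots,N\}$, while (\ref{eq:open WDVV,2}) handles the case where $s$ appears. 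Since $M \times U$ is simply connected, this determines $\Phi_{\beta,d}$ up to an additive constant $c_{\beta,d}\in\CC$.

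Next I arrange the homogeneity property (\ref{eq:open calibration, property 3}). Consider the deviation
\[
\Psi_{\beta,d} := E^\mu\d_\mu \Phi_{\beta,d} - \bigl(d+\tfrac{1}{2}+\mu^\beta\bigr)\Phi_{\beta,d} - \sum_{i=1}^{d}\Phi_{\mu,d-i}(\tR_i)^\mu_\beta,
\]
formed using only the already-fixed data $\tR_1,\ldots,\tR_d$; the missing $i=d+1$ contribution equals $(\tR_{d+1})^{N+1}_\beta$, which I will choose at the end of the step. Differentiating $\Psi_{\beta,d}$ with respect to $t^\gamma$, substituting (\ref{eq:open calibration,property 2}) for $\d_\gamma \Phi_{\beta,d}$, and using the Euler homogeneity (\ref{eq:homogeneity for Fo}) of $F^o$ via the commutator $[\d_\gamma, E^\mu\d_\mu] = (1-q^\gamma)\d_\gamma$, the homogeneity (\ref{eq:conformal calibration}) of $\Omega$, and the inductive version of (\ref{eq:open calibration, property 3}) for $\Phi_{\mu,d-i}$ with $i\ge 1$, I expect a chain of cancellations producing $\d_\gamma \Psi_{\beta,d} = 0$. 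Hence $\Psi_{\beta,d}$ is a constant.

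Finally, this constant is absorbed. The shift $\Phi_{\beta,d} \mapsto \Phi_{\beta,d} + c_{\beta,d}$ changes $\Psi_{\beta,d}$ by $-(d+\tfrac{1}{2}+\mu^\beta)c_{\beta,d}$, so if $d+\tfrac{1}{2}+\mu^\beta \ne 0$ I eliminate $\Psi_{\beta,d}$ by a suitable choice of $c_{\beta,d}$. In the exceptional case $\mu^\beta = -d-\tfrac{1}{2}$, the constraint $[\tmu,\tR_{d+1}]=(d+1)\tR_{d+1}$ leaves the entry $(\tR_{d+1})^{N+1}_\beta$ free (since then $\mu^{N+1}-\mu^\beta = d+1$), and I set $(\tR_{d+1})^{N+1}_\beta := -\Psi_{\beta,d}$, which completes the step without disturbing earlier choices. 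The main obstacle is the bookkeeping in the homogeneity calculation showing $\d_\gamma \Psi_{\beta,d} = 0$; conceptually this cancellation is forced by the fact that the open WDVV solution extends the Frobenius manifold to a flat F-manifold (as will be made explicit in Section~\ref{section:flat F-manifolds}), for which (\ref{eq:open calibration,property 2}) and (\ref{eq:open calibration, property 3}) are simply the last row of the natural calibration recursion in the extended setting.
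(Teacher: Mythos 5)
Your construction is correct, but it follows a genuinely different route from the paper. The paper does not build the $\Phi_{\beta,d}$ by induction on $d$; instead it passes to the flat F-manifold on $M\times U$ with vector potential $\left(\eta^{1\mu}\frac{\d F}{\d t^\mu},\ldots,\eta^{N\mu}\frac{\d F}{\d t^\mu},F^o\right)$, proves existence of a calibration of a conformal flat F-manifold (Proposition~\ref{proposition:calibration of conformal F-man}) by putting the $z$-ODE at the base point into normal form via a gauge transformation $G(z)$ and Dubrovin's recursion for the matrices $\tG_n,\tR_n$, arranges the block-triangular compatibility with the closed calibration in Lemma~\ref{lemma:calibration of extension}, and then simply reads off $\Phi_{\beta,d}:=\tOmega^{N+1,0}_{\beta,d}$ (Lemma~\ref{lemma:functions Phi and calibrations}). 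Your argument instead integrates the recursion \eqref{eq:open calibration,property 2} level by level in the $t$-variables, with closedness reducing to \eqref{eq:open WDVV,1}, \eqref{eq:open WDVV,2} exactly as you say, and then enforces \eqref{eq:open calibration, property 3} by adjusting integration constants in the non-resonant case $d+\tfrac12+\mu^\beta\neq 0$ and by choosing the free entry $(\tR_{d+1})^{N+1}_\beta$ in the resonant case $\mu^\beta=-d-\tfrac12$ (which is indeed the only case where $[\tmu,\tR_{d+1}]=(d+1)\tR_{d+1}$ leaves that entry free). What your approach buys is a self-contained, elementary existence proof that never mentions the auxiliary connection $\tnabla^\lambda$ or the ODE in $z$; what the paper's approach buys is the finer block structure \eqref{eq:property of calibration of extension} of the whole extended calibration, which is then reused in the descendent construction and in the proof of Theorem~\ref{theorem:open Virasoro}, not just the bare existence statement.

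Two small blemishes, neither fatal. First, the heart of your homogeneity step, namely $\d_\gamma\Psi_{\beta,d}=0$, is asserted ("I expect a chain of cancellations") rather than computed; the computation does work: writing $\d_\gamma\Phi_{\beta,d}=\sum_{1\le\mu\le N+1}\frac{\d^2F^o}{\d t^\gamma\d t^\mu}\tOmega^{\mu,0}_{\beta,d-1}$ (with $\tOmega^{N+1,0}_{\beta,d-1}=\Phi_{\beta,d-1}$, $t^{N+1}=s$), using $[\d_\gamma,E^\nu\d_\nu]=(1-q^\gamma)\d_\gamma$, the homogeneity of the second derivatives of $F^o$ coming from \eqref{eq:homogeneity for Fo}, the closed property \eqref{eq:conformal calibration} and the inductive \eqref{eq:open calibration, property 3}, all weight factors combine to $d+\tfrac12+\mu^\beta$ and the $\tR_i$-terms reassemble, via \eqref{eq:open calibration,property 2} at lower levels, into $\sum_{i=1}^{d}\d_\gamma\Phi_{\nu,d-i}(\tR_i)^\nu_\beta$, giving $\d_\gamma\Psi_{\beta,d}=0$; you should include this verification, since it is the only place where the homogeneity hypotheses on $F$ and $F^o$ enter. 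Second, a sign: property \eqref{eq:open calibration, property 3} at level $d$ requires $\Psi_{\beta,d}=\Phi_{N+1,-1}(\tR_{d+1})^{N+1}_\beta=(\tR_{d+1})^{N+1}_\beta$, so in the resonant case you should set $(\tR_{d+1})^{N+1}_\beta:=\Psi_{\beta,d}$, not $-\Psi_{\beta,d}$.
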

The proof of the lemma will be given in Section~\ref{subsubsection:calibrations of extensions}.

Consider a calibration of the function $F^o$. One can easily see that $\frac{\d\Phi_{1,0}}{\d t^\gamma}=\delta_{\gamma,N+1}$, which implies that $\Phi_{1,0}-t^{N+1}$ is a constant. Let us make the change of coordinates $t^{N+1}\mapsto\Phi_{1,0}$, so that we have now $\Phi_{1,0}=t^{N+1}$.

Let $\tv^1,\tv^2,\ldots,\tv^{N+1}$ be formal variables and consider the system of partial differential equations
$$
\left\{\begin{aligned}
\frac{\d\tv^\alpha}{\d t^\beta_d}=&\d_x\left(\left.\Omega^{\alpha,0}_{\beta,d}\right|_{t^\gamma=\tv^\gamma}\right),& 1\le\alpha\le N,\quad & 1\le\beta\le N+1,\quad d\ge 0,\\
\frac{\d\tv^{N+1}}{\d t^\beta_d}=&\d_x\left(\left.\Phi_{\beta,d}\right|_{t^\gamma=\tv^\gamma}\right),& & 1\le\beta\le N+1,\quad d\ge 0.
\end{aligned}\right.
$$
Let $(\tv^\top)^\alpha$ be the solution, specified by the initial condition
$$
\left.(\tv^\top)^\alpha\right|_{t^\beta_d=\delta^{\beta,1}\delta_{d,0}x}=\delta^{\alpha,1}x.
$$
Clearly, $(\tv^\top)^\alpha=(v^\top)^\alpha$, for $1\le\alpha\le N$. Define
$$
\Phi_{\beta,d}^\top:=\left.\Phi_{\beta,d}\right|_{t^\gamma=(\tv^\top)^\gamma}.
$$
Then we define the open descendent potential $\mcF^o$ by
$$
\mcF^o:=\sum_{d\ge 0}\tt^\alpha_d\Phi^\top_{\alpha,d}.
$$
\begin{lemma}\label{lemma:open descendent potential}
The function $\mcF^o$ satisfies equations~\eqref{eq:open TRR,1},~\eqref{eq:open TRR,2} and the difference $\left.\mcF^o\right|_{\substack{t^*_{\ge 1}=0\\s_{\ge 1}=0}}-F^o$ is at most linear in the variables $t^1,\ldots,t^N$ and $s$.
\end{lemma}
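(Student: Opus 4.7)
The plan is to treat the two claims separately. The key observation (also used in Section~\ref{section:flat F-manifolds}) is that the pair $(\Omega^{\alpha,0}_{\beta,d},\Phi_{\beta,d})$ assembles into a unified calibration of the flat F-manifold extending~$M$, with vector potential $\bigl(\eta^{1\mu}\frac{\d F}{\d t^\mu},\ldots,\eta^{N\mu}\frac{\d F}{\d t^\mu},F^o\bigr)$. Setting $\tOmega^{\alpha,0}_{\beta,d}:=\Omega^{\alpha,0}_{\beta,d}$ for $1\le\alpha\le N$ and $\tOmega^{N+1,0}_{\beta,d}:=\Phi_{\beta,d}$, the relation~\eqref{eq:open calibration,property 2} together with the recursion $\frac{\d\Omega^{\alpha,0}_{\beta,d}}{\d t^\gamma}=c^\alpha_{\gamma\mu}\Omega^{\mu,0}_{\beta,d-1}$ unify into
$$
\frac{\d\tOmega^{\alpha,0}_{\beta,d}}{\d t^\gamma}=\widetilde c^\alpha_{\gamma\mu}\tOmega^{\mu,0}_{\beta,d-1},\qquad 1\le\alpha,\beta,\gamma\le N+1,
$$
where $\widetilde c^\alpha_{\beta\gamma}$ are the structure constants of the extended algebra ($c^\alpha_{\beta\gamma}$ for $\alpha,\beta,\gamma\le N$, zero when $\alpha\le N$ and $N+1\in\{\beta,\gamma\}$, and $\frac{\d^2F^o}{\d t^\beta\d t^\gamma}$ when $\alpha=N+1$). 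The principal hierarchy likewise takes the unified form $\frac{\d\tv^\alpha}{\d t^\beta_d}=\d_x\bigl(\tOmega^{\alpha,0}_{\beta,d}|_{t=\tv}\bigr)$.

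For the linearity claim, the unit conditions $\frac{\d^3F}{\d t^1\d t^\alpha\d t^\beta}=\eta_{\alpha\beta}$ and~\eqref{eq:unit condition for Fo} give $\widetilde c^\alpha_{1\beta}=\delta^\alpha_\beta$, so the flow $\frac{\d}{\d t^\beta_0}$ restricted to $\{t^*_{\ge 1}=0,\,s_{\ge 1}=0\}$ reduces to $\frac{\d(\tv^\top)^\alpha}{\d t^\beta_0}=\delta^\alpha_\beta$; combined with the initial condition this gives $(\tv^\top)^\alpha|_{t^*_{\ge 1}=0,\,s_{\ge 1}=0}=t^\alpha_0$ for every $\alpha$. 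Next define $G(t):=\sum_{\alpha=1}^{N+1}t^\alpha\Phi_{\alpha,0}-\Phi_{1,1}-F^o$ on $M\times U$. Applying~\eqref{eq:open calibration,property 2} with $d=0$ together with $\Omega^{\mu,0}_{\beta,-1}=\delta^\mu_\beta$ and $\Phi_{\beta,-1}=\delta_{\beta,N+1}$ yields $\frac{\d\Phi_{\beta,0}}{\d t^\gamma}=\frac{\d^2F^o}{\d t^\gamma\d t^\beta}$ for every $1\le\beta\le N+1$; applying it with $d=1$ and using the normalizations $\Omega^{\mu,0}_{1,0}=t^\mu$, $\Phi_{1,0}=t^{N+1}$ yields $\frac{\d\Phi_{1,1}}{\d t^\gamma}=\sum_\alpha t^\alpha\frac{\d^2F^o}{\d t^\gamma\d t^\alpha}$. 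Combining, $\frac{\d G}{\d t^\gamma}=\Phi_{\gamma,0}-\frac{\d F^o}{\d t^\gamma}$, hence $\frac{\d^2G}{\d t^\delta\d t^\gamma}=0$, so $G$ is linear in $t^1,\ldots,t^{N+1}$. Since $\mcF^o|_{t^*_{\ge 1}=0,\,s_{\ge 1}=0}-F^o=G|_{t=t_0}$, the first assertion follows.

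For the open TRR equations I would mirror the classical derivation of the Frobenius-manifold TRR from the principal hierarchy (Section~\ref{subsection:descendent potential}), now in the unified setting. Put $V^\alpha:=(\tv^\top)^\alpha$ and $\tOmega^{\alpha,0,\top}_{\beta,d}:=\tOmega^{\alpha,0}_{\beta,d}|_{t=V}$, so that $\frac{\d V^\alpha}{\d t^\beta_q}=\d_x\bigl(\tOmega^{\alpha,0,\top}_{\beta,q}\bigr)$ for all $\alpha,\beta$. Expanding $\frac{\d^2\mcF^o}{\d t^\alpha_{p+1}\d t^\delta_k}$ via the chain rule from $\mcF^o=\sum_{d\ge 0}\tt^\gamma_d\Phi^\top_{\gamma,d}$ and substituting the unified calibration identity produces a sum of terms whose factors are of the form $\tOmega^{\mu,0,\top}_{\gamma,r}$, $\widetilde c^{N+1,\top}_{\rho\mu}$ and $\d_x\bigl(\tOmega^{\rho,0,\top}_{\sigma,s}\bigr)$. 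The identity~\eqref{eq:open TRR,1} for $1\le\alpha\le N$ and~\eqref{eq:open TRR,2} then follow by re-grouping these pieces via: (i)~the recursion $\frac{\d\tOmega^{\alpha,0}_{\beta,d+1}}{\d t^\gamma}=\widetilde c^\alpha_{\gamma\mu}\tOmega^{\mu,0}_{\beta,d}$; (ii)~the extended associativity $\widetilde c^\alpha_{\beta\mu}\widetilde c^\mu_{\gamma\delta}=\widetilde c^\alpha_{\gamma\mu}\widetilde c^\mu_{\beta\delta}$, whose components with $\alpha=N+1$ encode exactly the open WDVV equations~\eqref{eq:open WDVV,1} and~\eqref{eq:open WDVV,2}; and (iii)~the Dubrovin--Zhang identity $\frac{\d^2\mcF}{\d t^\alpha_p\d t^\mu_0}\eta^{\mu\nu}=\Omega^{\nu,0,\top}_{\alpha,p}$ recalled in Section~\ref{subsection:descendent potential}.

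The hard part is notational rather than conceptual: the chain-rule expansion contains many terms indexed by several summation indices running over $1,\ldots,N+1$, and some care is needed to track how each term is consumed by (i)--(iii). The new feature of the open case is that the $\mu=N+1$ piece of the extended associativity, trivial in the Frobenius-manifold setting, is non-zero and produces precisely the quadratic term $\frac{\d\mcF^o}{\d t^\alpha_p}\,d\bigl(\frac{\d\mcF^o}{\d s}\bigr)$ appearing in~\eqref{eq:open TRR,1}. A cleaner alternative, once the framework of Section~\ref{section:flat F-manifolds} is in place, would be to first establish a general TRR-type statement for calibrated flat F-manifolds and deduce the lemma by specialization to the extended flat F-manifold associated to~$F^o$.
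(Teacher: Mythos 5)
Your treatment of the linearity statement is correct and complete, and it is in fact a nice shortcut: the paper deduces this part from the derivative formula $\frac{\d\mcF^o}{\d t^\beta_q}=\Phi^\top_{\beta,q}$ (part 1 of Proposition~\ref{proposition:properties of descendent vector potential}), whereas you obtain it directly from the calibration recursion applied to $G=\sum_\alpha t^\alpha\Phi_{\alpha,0}-\Phi_{1,1}-F^o$, using only the restriction $(\tv^\top)^\alpha|_{t^*_{\ge 1}=0}=t^\alpha_0$. The embedding of the data $(\Omega^{\alpha,0}_{\beta,d},\Phi_{\beta,d})$ into a calibration of the extended flat F-manifold is also exactly the paper's point of view (Lemma~\ref{lemma:functions Phi and calibrations}).

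The TRR half, however, has a genuine gap, and it is conceptual rather than notational. The ingredients you list --- the calibration recursion, the extended associativity (open WDVV), and the identity $\frac{\d^2\mcF}{\d t^\alpha_p\d t^\mu_0}\eta^{\mu\nu}=(\Omega^\top)^{\nu,0}_{\alpha,p}$ --- are algebraic identities valid for any solution of the principal hierarchy, and they do not suffice. Differentiating $\mcF^o=\sum_{d\ge 0}\tt^\gamma_d\Phi^\top_{\gamma,d}$ gives $\frac{\d\mcF^o}{\d t^\beta_q}=\Phi^\top_{\beta,q}+\sum_{d\ge 0}\tt^\gamma_d\frac{\d\Phi^\top_{\gamma,d}}{\d t^\beta_q}$, and the extra sum does not vanish termwise; for instance in \eqref{eq:open TRR,2} the right-hand side would be quadratic in these extra sums while the left-hand side is linear, so no regrouping via your (i)--(iii) can close the identity. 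What is actually needed is that the extra sum vanishes, i.e.\ $\frac{\d\mcF^o}{\d t^\beta_q}=\Phi^\top_{\beta,q}$, and this rests on two facts special to the topological solution: the symmetry $\frac{\d\Phi^\top_{\gamma,d}}{\d t^\beta_q}=\frac{\d\Phi^\top_{\beta,q}}{\d t^\gamma_d}$ coming from the commutativity of the flows (Lemma~\ref{lemma:property of Omegatop}) and the dilaton-type identity $\sum_{d\ge 0}\tt^\gamma_d\frac{\d(\tv^\top)^\theta}{\d t^\gamma_d}=0$ of \eqref{eq:dilaton for vtop}, which holds because of the initial condition defining $\tv^\top$ and fails for a generic solution. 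Once this derivative formula is established, the open TRR follows at once from the $(N+1,\cdot)$ components of \eqref{eq:TRR for Omega^p_q} pulled back through $\tv^\top$; this is precisely the paper's route, so the ``cleaner alternative'' you mention at the end is in fact the proof given in Section~\ref{subsection:descendent vector potential}, but invoking it does not discharge the missing step in your direct expansion.
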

We will prove the lemma in Section~\ref{subsection:descendent vector potential}.

Let $L_m$, $m\ge -1$, be the Virasoro operators for our conformal Frobenius manifold and define operators $\cL_m$, $m\ge -1$, by
\begin{align}
\cL_m:=L_m&+\eps^{-1}\sum_{d\ge 0}\sum_{1\le\alpha\le N+1}\left(\left[P_m(\tmu+d+1,\tR)\right]_{m+d+1}\right)^{N+1}_\alpha\tt^\alpha_d\label{eq:exact open Virasoro}\\
&+\left(\sum_{\substack{d\ge 0\\0\le k\le m+d}}\sum_{1\le\alpha\le N+1}\left(\left[P_m(\tmu+d+1,\tR)\right]_{m+d-k}\right)^{N+1}_\alpha\tt^\alpha_d\frac{\d}{\d t^{N+1}_k}+\delta_{m,0}\frac{3}{4}\right)\notag\\
&+\eps\left(\sum_{0\le k\le m}\sum_{1\le\alpha,\mu\le N}(-1)^{k+1}\left(\left[P_m(\tmu-k,\tR)\right]_{m-k}\right)^{N+1}_\alpha\eta^{\alpha\mu}\frac{\d}{\d t^\mu_k}+\frac{3(m+1)!}{4}\frac{\d}{\d t^{N+1}_{m-1}}\right)\notag\\
&+\eps^2\sum_{\substack{d_1,d_2\ge 0\\d_1+d_2\le m-1}}\sum_{1\le\alpha,\mu\le N}(-1)^{d_2+1}\left(\left[P_m(\tmu-d_2,\tR)\right]_{m-1-d_1-d_2}\right)^{N+1}_\alpha\eta^{\alpha\mu}\frac{\d^2}{\d t^{N+1}_{d_1}\d t^\mu_{d_2}}.\notag
\end{align}
\begin{theorem}\label{theorem:open Virasoro}
We have
\begin{gather}\label{eq:open Virasoro}
\Coef_{\eps^{-1}}\left(\frac{\mcL_m e^{\eps^{-2}\mcF+\eps^{-1}\mcF^o}}{e^{\eps^{-2}\mcF+\eps^{-1}\mcF^o}}\right)=0,\quad m\ge -1.
\end{gather}
\end{theorem}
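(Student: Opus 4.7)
My plan is to follow the strategy indicated in the introduction: to recognize the pair $(F,F^o)$ as determining a flat $F$-manifold of dimension $N{+}1$ extending the given Frobenius manifold, to establish Virasoro-type constraints in that broader setting, and then to read off Theorem~\ref{theorem:open Virasoro} as the coefficient of $\eps^{-1}$ in those constraints. Concretely, on $\widetilde M:=M\times U$ with coordinates $(t^1,\dots,t^N,s)$, introduce the vector potential $A=(A^1,\dots,A^{N+1})$ with $A^\alpha:=\eta^{\alpha\mu}\d F/\d t^\mu$ for $1\le\alpha\le N$ and $A^{N+1}:=F^o$. A direct computation shows that \eqref{eq:WDVV equations} together with \eqref{eq:open WDVV,1}--\eqref{eq:open WDVV,2} are equivalent to the associativity axioms of a flat $F$-manifold for $A$; the unit condition \eqref{eq:unit condition for Fo} makes $\d/\d t^1$ a unit, and \eqref{eq:homogeneity for Fo} combined with the conformality of $F$ yields homogeneity of $A$ with respect to the extended Euler field $\widetilde E=E^\gamma\d/\d t^\gamma+E^{N+1}\d/\d s$.

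I would then assemble $(\Omega^{\alpha,0}_{\beta,d},\Phi_{\beta,d})$ into a single $(N{+}1)$-dimensional calibration of $\widetilde M$, whose upper-left $N\times N$ block is $\Omega$ and whose last column is $\Phi$. The induced principal hierarchy is precisely the combined system for $\tv^1,\dots,\tv^{N+1}$, and the pair $(\mcF,\mcF^o)$ forms the topological descendent vector potential; Lemma~\ref{lemma:open descendent potential} then becomes an instance of the general TRR for flat $F$-manifolds, decomposing into \eqref{eq:TRR for Frob. man.,2}, \eqref{eq:open TRR,1} and \eqref{eq:open TRR,2}.

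Next, using the extended matrix $\tmu=\diag(\mu^1,\dots,\mu^N,1/2)$ and the matrices $\tR_n$ from \eqref{eq:open calibration, property 3}, I would build Virasoro-type operators $\widetilde L_m$ for $\widetilde M$ via formulas in the spirit of \eqref{eq:Virasoro1}--\eqref{eq:Virasoro3} with indices ranging over $\{1,\dots,N{+}1\}$, and adapt the Dubrovin--Zhang argument to show that they satisfy $[\widetilde L_i,\widetilde L_j]=(i-j)\widetilde L_{i+j}$ and annihilate the partition function $Z:=\exp(\eps^{-2}\mcF+\eps^{-1}\mcF^o)$ for all $m\ge -1$. The string ($m=-1$) and dilaton ($m=0$) cases reduce to classical identities for $(\mcF,\mcF^o)$ built into the construction of the descendent vector potential; higher $m$ follow inductively from the commutation relations together with the flat $F$-manifold TRR. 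Finally, I expand $\widetilde L_m Z/Z$ in powers of $\eps$: the condition $\mu^{N+1}=1/2$ together with the triangular structure $(\tR_n)^\alpha_{N+1}=0$ forces the $(N{+}1)$-st direction to couple with the rest only through the last row of $P_m(\tmu,\tR)$, so the $\eps^{-2}$-coefficient recovers the closed Virasoro equation \eqref{eq:Virasoro constraints} for $L_m$, while the $\eps^{-1}$-coefficient collects exactly the additional terms appearing in \eqref{eq:exact open Virasoro} acting on $Z$; vanishing of this coefficient is precisely \eqref{eq:open Virasoro}.

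The main obstacle will be the verification of the commutation relations $[\widetilde L_i,\widetilde L_j]=(i-j)\widetilde L_{i+j}$ in the flat $F$-manifold setting, despite the asymmetry between the Frobenius and open directions encoded by $(\tR_n)^\alpha_{N+1}=0$. Keeping careful track of the shifts $\tmu+d+1$ and $\tmu-d$ appearing inside $P_m$, and of the positions where the $(N{+}1)$-st row enters each of the quadratic, linear, and second-order differential terms, is what will ultimately make the explicit operator \eqref{eq:exact open Virasoro} come out in precisely the form stated.
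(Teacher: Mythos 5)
Your global strategy coincides with the paper's: pass to the flat F-manifold on $M\times U$ with vector potential $\left(\eta^{1\mu}\frac{\d F}{\d t^\mu},\ldots,\eta^{N\mu}\frac{\d F}{\d t^\mu},F^o\right)$, merge the calibration of the Frobenius manifold with the calibration of $F^o$, and recover \eqref{eq:open Virasoro} from the $(N+1)$-st direction of a Virasoro-type statement for this extension. However, the central step of your plan --- constructing operators $\widetilde L_m$ for the $(N+1)$-dimensional flat F-manifold ``in the spirit of \eqref{eq:Virasoro1}--\eqref{eq:Virasoro3} with indices ranging over $\{1,\dots,N+1\}$'' and showing that they annihilate $Z=\exp(\eps^{-2}\mcF+\eps^{-1}\mcF^o)$ --- does not work as stated. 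First, the quadratic term \eqref{eq:Virasoro1} and the second-order term \eqref{eq:Virasoro3} require the metric $\eta$ to raise and lower indices, and the extension carries no invariant pairing involving the $(N+1)$-st direction; this is exactly why the paper does not formulate the flat F-manifold constraints as operators applied to a partition function, but as the vector-valued identities $A^\alpha_m=0$ of Proposition~\ref{proposition:Virasoro for F-man}, which in the Frobenius case are the $t^\mu_0$-derivatives of \eqref{eq:Virasoro constraints} and need no $\eta$-blocks. Second, annihilation of $Z$ by the operators is the all-genera conjecture of Section~\ref{section:open Virasoro in all genera}, not the genus-$0$ statement: at genus $0$ only the coefficients of $\eps^{-2}$ and $\eps^{-1}$ in $\mcL_m Z/Z$ vanish, so your intermediate claim is strictly stronger than the theorem and false in general.

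Third, your induction scheme --- string and dilaton as base cases, higher $m$ obtained ``from the commutation relations together with the TRR'' --- cannot close: $L_{-1}$ and $L_0$ generate only a two-dimensional Lie algebra, so no sequence of commutators of the $m=-1$ and $m=0$ constraints yields the constraints for $m\ge 1$. The paper's induction on $m$ runs on a different engine: a recursion operator $\mcB$ (the flat F-manifold analog of the Dubrovin--Zhang recursion operator) for which $\mcB A_m=\d_x A_{m+1}$ (Lemmas~\ref{lemma:recursion for A_m} and~\ref{lemma:identity for F-man}), the string-type relation $L A^\alpha_m=(-m-1)A^\alpha_{m-1}$ (Lemma~\ref{lemma:string for Aalpham}), the vanishing of $A^\alpha_m$ at $t^*_{\ge 1}=0$, and Getzler's Lemma~3.1 from~\cite{Get99} to conclude $A^\alpha_m=0$ at each step. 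Your final extraction of \eqref{eq:open Virasoro} from the $(N+1)$-st component, using $\mu^{N+1}=\tfrac{1}{2}$ and $(\tR_n)^\alpha_{N+1}=0$ to match the extra terms of \eqref{eq:exact open Virasoro}, is exactly what happens in Section~\ref{subsection:proof of open Virasoro}; but it rests on the unproved middle part, so the proposal has a genuine gap there.
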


\begin{remark}
Since $\tR^\alpha_{N+1}=0$, we have
\begin{align*}
&\left(\left[P_m(\tmu+d+1,\tR)\right]_{m+d+1}\right)^{N+1}_{N+1}=\delta_{m,-1}\delta_{d,0},&& m\ge -1,\,\, d\ge 0,\\
&\left(\left[P_m(\tmu+d+1,\tR)\right]_{m+d-k}\right)^{N+1}_{N+1}=\delta_{k,m+d}\frac{(d+m+1)!}{d!},&& m\ge -1,\,\, d\ge 0, \,\, 0\le k\le m+d.
\end{align*}
Therefore, the operators $\mcL_m$, given by~\eqref{eq:exact open Virasoro}, have the form~\eqref{eq:general open Virasoro}.
\end{remark}

\begin{remark}
One can easily see that the expression $\frac{\mcL_m e^{\eps^{-2}\mcF+\eps^{-1}\mcF^o}}{e^{\eps^{-2}\mcF+\eps^{-1}\mcF^o}}$ has the form
$$
\frac{\mcL_m e^{\eps^{-2}\mcF+\eps^{-1}\mcF^o}}{e^{\eps^{-2}\mcF+\eps^{-1}\mcF^o}}=\sum_{i\ge -2}\eps^i f_i(t^*_*),
$$
for some functions $f_i$ depending on the variables $t^\alpha_p$, $1\le\alpha\le N+1$, $p\ge 0$. Note that the vanishing of the function $f_{-2}$ is equivalent to the Virasoro equations for the descendent potential~$\mcF$, 
$$
f_{-2}=\Coef_{\eps^{-2}}\left(\frac{L_m e^{\eps^{-2}\mcF}}{e^{\eps^{-2}\mcF}}\right)=0.
$$
\end{remark}

\begin{remark}
One can see that the terms $\delta_{m,0}\frac{3}{4}$ and $\eps\frac{3(m+1)!}{4}\frac{\d}{\d t^{N+1}_{m-1}}$ in the expression for the operator $\mcL_m$ don't play a role in equations~\eqref{eq:open Virasoro}. However, they play a role in our conjectural open Virasoro constraints in all genera, which we discuss in Section~\ref{section:open Virasoro in all genera}.
\end{remark}

In the next section we derive Virasoro type equations for flat F-manifolds and then get Theorem~\ref{theorem:open Virasoro} as a special case of this result.

%%%%%%%%%%%%%%%%%%%%%%%%%%%%%%%%%%%%%%%%%%%%%%%%%%%%%%%%%%%%%%%%%%%%%
%%%%%%%%%%%%%%%%%%%%%%%%%%%%%%%%%%%%%%%%%%%%%%%%%%%%%%%%%%%%%%%%%%%%%

\section{Virasoro type constraints for flat F-manifolds}\label{section:flat F-manifolds}

In this section we recall the definition of a flat F-manifold and show how such an object can be associated to a solution of the open WDVV equations. We then present a construction of {\it descendent vector potentials}, corresponding to a flat F-manifold, and prove Virasoro type constraints for them. The open Virasoro constraints from Theorem~\ref{theorem:open Virasoro} are derived as a corollary of this result. 

\subsection{Flat F-manifolds}

Here we recall the definition and the main properties of flat F-manifolds. We refer a reader to the papers~\cite{Man05,AL18} for more details. Flat F-manifolds are also studied in the paper~\cite{Get04}, where they are called Dubrovin manifolds.

\begin{definition}
A flat F-manifold $(M,\nabla,\circ)$ is the datum of an analytic manifold $M$, an analytic connection $\nabla$ in the tangent bundle $T M$, an algebra structure $(T_p M,\circ)$ with unit $e$ on each tangent space, analytically depending on the point $p\in M$, such that the one-parameter family of connections $\nabla+z\circ$ is flat and torsionless for any $z\in\mbC$, and $\nabla e=0$.
\end{definition}

From the flatness and the torsionlessness of $\nabla+z\circ$ one can deduce the commutativity and the associativity of the algebras $(T_pM,\circ)$. Moreover, if one choses flat coordinates $t^\alpha$, $1\le\alpha\le N$, $N=\dim M$, for the connection $\nabla$, with $e = \frac{\d}{\d t^1}$, then it is easy to see that locally there exist analytic functions $F^\alpha(t^1,\ldots,t^N)$, $1\leq\alpha\leq N$, such that the second derivatives 
\begin{gather}\label{eq:structure constants of flat F-man}
c^\alpha_{\beta\gamma}=\frac{\d^2 F^\alpha}{\d t^\beta \d t^\gamma}
\end{gather}
give the structure constants of the algebras $(T_p M,\circ)$,
\begin{gather*}
\frac{\d}{\d t^\beta}\circ\frac{\d}{\d t^\gamma}=c^\alpha_{\beta\gamma}\frac{\d}{\d t^\alpha}.
\end{gather*}
From the associativity of the algebras $(T_p M,\circ)$ and the fact that the vector $\frac{\d}{\d t^1}$ is the unit it follows that
\begin{align}
\frac{\d^2 F^\alpha}{\d t^1\d t^\beta} &= \delta^\alpha_\beta, && 1\leq \alpha,\beta\leq N,\label{eq:axiom1 of flat F-man}\\
\frac{\d^2 F^\alpha}{\d t^\beta \d t^\mu} \frac{\d^2 F^\mu}{\d t^\gamma \d t^\delta} &= \frac{\d^2 F^\alpha}{\d t^\gamma \d t^\mu} \frac{\d^2 F^\mu}{\d t^\beta \d t^\delta}, && 1\leq \alpha,\beta,\gamma,\delta\leq N.\label{eq:axiom2 of flat F-man}
\end{align}
The $N$-tuple of functions $(F^1,\ldots,F^N)$ is called the vector potential of the flat F-manifold.

Conversely, if $M$ is an open subset of $\mbC^N$ and $F^1,\ldots,F^N\in\mcO(M)$ are functions, satisfying equations~\eqref{eq:axiom1 of flat F-man} and~\eqref{eq:axiom2 of flat F-man}, then these functions define a flat F-manifold $(M,\nabla,\circ)$ with the connection~$\nabla$, given by $\nabla_{\frac{\d}{\d t^\alpha}}\frac{\d}{\d t^\beta}=0$, and the multiplication $\circ$, given by the structure constants~\eqref{eq:structure constants of flat F-man}.

A flat F-manifold, given by a vector potential $(F^1,\ldots,F^N)$, is called {\it conformal}, if there exists a vector field of the form~\eqref{eq:Euler vector field} such that
\begin{gather}\label{eq:homogeneity for F-man}
E^\mu\frac{\d F^\alpha}{\d t^\mu}=(2-q^\alpha)F^\alpha+A^\alpha_\beta t^\beta+B^\alpha,\quad\text{for some $A^\alpha_\beta,B^\alpha\in\mbC$}.
\end{gather}
\begin{remark}
A Frobenius manifold with a potential $F$ and a metric $\eta$ defines the flat F-manifold with the vector potential $F^\alpha=\eta^{\alpha\mu}\frac{\d F}{\d t^\mu}$. If the Frobenius manifold is conformal, then the associated flat F-manifold is also conformal. This follows from the property
$$
(q^\alpha+q^\beta-\delta)\eta_{\alpha\beta}=0.
$$
\end{remark}

A point $p\in M$ of an $N$-dimensional flat F-manifold $(M,\nabla,\circ)$ is called \textit{semisimple} if $T_pM$ has a basis of idempotents $\pi_1,\dots,\pi_N$, satisfying $\pi_k \circ \pi_l = \delta_{k,l} \pi_k$. Moreover, locally around such a point one can choose coordinates $u^i$ such that $\frac{\d}{\d u^k}\circ\frac{\d}{\d u^l}=\delta_{k,l}\frac{\d}{\d u^k}$. These coordinates are called the {\it canonical coordinates}. In particular, this means that the semisimplicity is an open property of a point. The flat F-manifold $M$ is called semisimple, if a generic point of $M$ is semisimple.

\begin{remark}
In the semisimple case, a conformal flat F-manifold is a special case of a bi-flat F-manifold, see~\cite[Theorem 4.4]{AL17}.
\end{remark}

\subsection{Extensions of flat F-manifolds and the open WDVV equations}

Consider a flat F-manifold structure, given by a vector potential $(F^1,\ldots,F^{N+1})$ on an open subset ${M\times U\in\mbC^{N+1}}$, where $M$ and $U$ are open subsets of $\mbC^N$ and $\mbC$, respectively. Suppose that the functions $F^1,\ldots,F^N$ don't depend on the variable $t^{N+1}$, varying in $U$. Then the functions $F^1,\ldots,F^N$ satisfy equations~\eqref{eq:axiom2 of flat F-man} and, thus, define a flat F-manifold structure on~$M$. In this case we call the flat F-manifold structure on $M\times U$ an {\it extension} of a flat F-manifold structure on $M$.

Consider the flat F-manifold, associated to a Frobenius manifold, given by a potential $F(t^1,\ldots,t^N)\in\mcO(M)$ and a metric $\eta$, $F^\alpha=\eta^{\alpha\mu}\frac{\d F}{\d t^\mu}$, $1\le\alpha\le N$. It is easy to check that a function $F^o(t^1,\ldots,t^N,s)\in\mcO(M\times U)$ satisfies equations~\eqref{eq:open WDVV,1},~\eqref{eq:open WDVV,2} and~\eqref{eq:unit condition for Fo} if and only if the $(N+1)$-tuple $(F^1,\ldots,F^N,F^o)$ is a vector potential of a flat F-manifold. Recall that here we identify $s=t^{N+1}$. This defines a correspondence between solutions of the open WDVV equations, satisfying property~\eqref{eq:unit condition for Fo}, and flat F-manifolds, extending the Frobenius manifold given. This observation belongs to Paolo Rossi. 

\subsection{Calibration of a flat F-manifold}

In this section we introduce the notion of a {\it calibration} of a flat F-manifold. As far as we know, calibrations of flat F-manifolds were first considered in~\cite{Get04}, where they are called fundamental solutions.

\subsubsection{General case}

Let $M$ be a simply connected open neighbourhood of a point $(t^1_\orig,\ldots,t^N_\orig)\in\mbC^N$ and consider a flat F-manifold structure on $M$ given by a vector potential $(F^1,\ldots,F^N)$, $F^\alpha\in\mcO(M)$. A calibration of our flat F-manifold is a collection of functions~$\Omega^{\alpha,d}_{\beta,0}\in\mcO(M)$, $1\le\alpha,\beta\le N$, $d\ge -1$, satisfying $\Omega^{\alpha,-1}_{\beta,0}=\delta^\alpha_\beta$ and the property
\begin{gather}\label{eq:calibration of flat F-man,eq2}
\frac{\d\Omega^{\alpha,d}_{\beta,0}}{\d t^\gamma}=c^\mu_{\gamma\beta}\Omega^{\alpha,d-1}_{\mu,0}, \quad d\ge 0.
\end{gather}

Let us describe the space of all calibrations of our flat F-manifold. Denote by $\onabla$ the family of connections, depending on a formal parameter $z$, given by
$$
\onabla_X Y:=\nabla_X Y+z X\circ Y,
$$
where $X$ and $Y$ are vector fields on $M$. Then equation~\eqref{eq:calibration of flat F-man,eq2} is equivalent to the flatness, with respect to~$\onabla$, of the $1$-forms
$$
\left(\sum_{d\ge 0}\Omega^{\alpha,d-1}_{\beta,0}z^d\right)d t^\beta,\quad 1\le\alpha\le N.
$$
From the flatness of the connection $\onabla$ it follows that a calibration of our flat F-manifold exists. In order to describe the whole space of calibrations, introduce $N\times N$ matrices $\Omega^d_0$, $d\ge -1$, by $(\Omega^d_0)^\alpha_\beta:=\Omega^{\alpha,d}_{\beta,0}$. Then equation~\eqref{eq:calibration of flat F-man,eq2} can be written as
\begin{gather}\label{eq:TRR for Omega^d_0}
\frac{\d}{\d t^\gamma}\left(\sum_{d\ge 0}\Omega^{d-1}_0z^d\right)=z\left(\sum_{d\ge 0}\Omega^{d-1}_0z^d\right) C_\gamma,
\end{gather}
where $C_\gamma:=(c_{\gamma\beta}^\alpha)$. From this it becomes clear that the generating series $\sum_{d\ge 0}\Omega^{d-1}_0z^d$ is determined uniquely up to a transformation of the form
$$
\sum_{d\ge 0}\Omega^{d-1}_0z^d\mapsto G(z)\left(\sum_{d\ge 0}\Omega^{d-1}_0z^d\right),\quad G(z)\in\Mat_{N,N}(\mbC)[[z]],\quad G(0)=\Id.
$$

Let us introduce matrices $\Omega^0_d=(\Omega^{\alpha,0}_{\beta,d})$, $d\ge 0$, by the equation
\begin{gather}\label{eq:upper-lower relation}
\left(\Id+\sum_{d\ge 1}(-1)^d\Omega^0_{d-1}z^d\right)\left(\Id+\sum_{d\ge 1}\Omega^{d-1}_0 z^d\right)=\Id.
\end{gather}
By definition, we put $\Omega^0_{-1}:=\Id$. From equation~\eqref{eq:TRR for Omega^d_0} we get
\begin{gather}\label{eq:TRR for Omega^0_d}
\frac{\d\Omega^0_d}{\d t^\gamma}=C_\gamma\Omega^0_{d-1}, \quad d\ge 0.
\end{gather}

\subsubsection{Conformal case}

Suppose now that our flat F-manifold is conformal. Introduce a diagonal matrix $Q$ by
$$
Q:=\diag(q^1,\ldots,q^N).
$$
\begin{proposition}\cite{Get04}\label{proposition:calibration of conformal F-man}
There exists a calibration such that 
\begin{gather}\label{eq:homogeneity for Omega^d_0}
E^\theta\frac{\d\Omega^d_0}{\d t^\theta}=(d+1)\Omega^d_0+[\Omega^d_0,Q]+\sum_{i=1}^{d+1}(-1)^{i-1}R_i\Omega^{d-i}_0,\quad d\ge -1,
\end{gather}
for some matrices $R_n$, $n\ge 1$, satisfying $[Q,R_n]=nR_n$.
\end{proposition}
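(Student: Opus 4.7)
My approach is to take an arbitrary calibration (which exists by the general flatness argument given earlier), measure its failure to be homogeneous by a matrix-valued power series $R(z)$, verify that $R(z)$ is constant on $M$, and then normalize it via the residual gauge freedom.

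Encode the calibration into $\Omega(z) := \sum_{d\ge 0}\Omega_0^{d-1}z^d = \Id + O(z)$, so that~\eqref{eq:TRR for Omega^d_0} reads $\d_\gamma\Omega(z) = z\Omega(z)C_\gamma$. A straightforward manipulation of generating series shows that~\eqref{eq:homogeneity for Omega^d_0} is equivalent to the single relation
\begin{gather*}
E^\theta\d_\theta\Omega(z) - z\d_z\Omega(z) - [\Omega(z),Q] = R(z)\,\Omega(z),\qquad R(z) = \sum_{n\ge 1}(-1)^{n-1}R_n z^n,
\end{gather*}
subject to $[Q,R_n] = nR_n$. For any calibration, define
\begin{gather*}
R(z) := \left(E^\theta\d_\theta\Omega(z) - z\d_z\Omega(z) - [\Omega(z),Q]\right)\Omega(z)^{-1}.
\end{gather*}
Since $\Omega(0) = \Id$, one has $R(0) = [\Id,Q] = 0$ automatically. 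It remains to show $R(z)$ is independent of $t$ and to arrange the commutation relation on its coefficients.

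The $t$-independence is the key computation and the only place where conformality enters essentially. Differentiating~\eqref{eq:homogeneity for F-man} twice in $t^\beta,t^\gamma$ produces the identity
\begin{gather*}
E^\mu\d_\mu C_\gamma = q^\gamma C_\gamma - [Q,C_\gamma].
\end{gather*}
Combined with $\d_\gamma E^\theta = (1-q^\gamma)\delta^\theta_\gamma$ and the calibration equation $\d_\gamma\Omega = z\Omega C_\gamma$, a direct expansion yields
\begin{gather*}
\d_\gamma\!\left(E^\theta\d_\theta\Omega - z\d_z\Omega - [\Omega,Q]\right) = z\!\left(E^\theta\d_\theta\Omega - z\d_z\Omega - [\Omega,Q]\right)\!C_\gamma;
\end{gather*}
together with $\d_\gamma\Omega^{-1} = -zC_\gamma\Omega^{-1}$, the two contributions to $\d_\gamma R(z)$ cancel.

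Finally, to normalize $R(z)$ I use the gauge freedom $\Omega(z)\mapsto G(z)\Omega(z)$ with $G(z)\in\Mat_{N,N}(\mbC)[[z]]$, $G(0) = \Id$. A short computation gives the transformation rule
\begin{gather*}
R(z) \mapsto G(z)R(z)G(z)^{-1} - z(\d_z G(z))G(z)^{-1} + [Q,G(z)]\,G(z)^{-1}.
\end{gather*}
Decomposing $\Mat_{N,N}(\mbC) = \bigoplus_\lambda V_\lambda$ into eigenspaces of $\mathrm{ad}_Q$, the operator $n\Id - \mathrm{ad}_Q$ vanishes on $V_n$ and is invertible on its complement. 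Writing $G(z) = \Id + \sum_{n\ge 1}G_n z^n$ and comparing coefficients of $z^n$ in the transformation rule produces, at each order, an equation of the form $\tilde R_n + (n\Id - \mathrm{ad}_Q)G_n = X_n$, where $X_n$ is determined by $G_1,\dots,G_{n-1}$ and $\tilde R_1,\dots,\tilde R_{n-1}$ fixed at previous steps. Taking $\tilde R_n$ to be the $V_n$-component of $X_n$ (so that $[Q,\tilde R_n] = n\tilde R_n$) and inverting $n\Id - \mathrm{ad}_Q$ on the complement solves for $G_n$, completing the induction. The main obstacle is the $t$-independence step: it hinges on a delicate cancellation of several terms involving $\d_\gamma E^\theta$, $E^\theta\d_\theta C_\gamma$, and $[\d_\gamma\Omega,Q]$ that must leave exactly the factor $z\cdot C_\gamma$ on the right, which is where conformality is used in an essential way.
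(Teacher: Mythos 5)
Your argument is correct, and it takes a genuinely different route from the paper. The paper follows Dubrovin: it introduces the flat pencil $\tnabla^\lambda$ on $M\times\mbC^*$, normalizes the $z$-ODE $\frac{\d\xi}{\d z}=\xi\left(U_1+\frac{\lambda-Q}{z}\right)$ at the base point by a gauge transformation $G(z)$ (this is \cite[Lemma~2.5]{Dub99}, where the matrices $R_n$ with $[Q,R_n]=nR_n$ first appear), and then propagates the resulting section $z^{\oR}z^{\lambda-Q}\left(\sum_{d\ge 0}\Omega^{d-1}_0z^d\right)$ over all of $M$ using flatness, reading off \eqref{eq:homogeneity for Omega^d_0} from the coefficient of $z^d$ in the $z$-component of the flatness equations. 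You instead take any calibration (whose existence the paper has already established), define the homogeneity defect $R(z)=\left(E^\theta\d_\theta\Omega(z)-z\d_z\Omega(z)-[\Omega(z),Q]\right)\Omega(z)^{-1}$, prove it is constant in $t$, and then kill its components outside the $\mathrm{ad}_Q$-eigenspaces by the residual gauge action $\Omega\mapsto G(z)\Omega$. Your key computation checks out: the $t$-independence rests precisely on $E^\mu\d_\mu c^\alpha_{\beta\gamma}=(q^\beta+q^\gamma-q^\alpha)c^\alpha_{\beta\gamma}$, equivalently $E^\mu\d_\mu C_\gamma=q^\gamma C_\gamma-[Q,C_\gamma]$, obtained by differentiating \eqref{eq:homogeneity for F-man} twice; this is the same conformal identity the paper invokes later in the proof of Lemma~\ref{lemma:recursion for Omega^0_d}. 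Your normalization recursion $\rho_n+(n\,\Id-\mathrm{ad}_Q)G_n=X_n$, with $\rho_n$ taken to be the eigenvalue-$n$ component of $X_n$, is exactly the recursion the paper delegates to Dubrovin and writes out explicitly as \eqref{eq:recursion for extension} in the proof of Lemma~\ref{lemma:calibration of extension}. What your route buys is self-containedness: no auxiliary connection on $M\times\mbC^*$ and no ODE theory in $z$ are needed, only a first-order computation plus the description of the gauge freedom on calibrations. What the paper's route buys is the explicit flat sections of $\tnabla^\lambda$, which make the link to the extended-connection picture transparent and are reused almost verbatim when comparing calibrations of $M$ and of the extension $M\times U$ in Lemma~\ref{lemma:calibration of extension}.
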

\begin{proof}
Similarly to the work~\cite[pages 310, 312]{Dub99}, the proposition is proved by considering a certain flat connection on $M\times\mbC^*$. 

Introduce a family of connections $\tnabla^\lambda$, depending on a complex parameter~$\lambda$, on~$M\times\mbC^*$ by
\begin{gather*}
\tnabla^\lambda_X Y:=\nabla_X Y+z X\circ Y,\qquad\tnabla^\lambda_{\frac{\d}{\d z}}Y:=\frac{\d Y}{\d z}+E\circ Y+\frac{\lambda-Q}{z}Y,\qquad\tnabla^\lambda_X\frac{\d}{\d z}=\tnabla^\lambda_{\frac{\d}{\d z}}\frac{\d}{\d z}:=0,
\end{gather*}
where $X$ and $Y$ are vector fields on $M\times\mbC^*$ having zero component along $\mbC^*$.
\begin{remark}
Note that for the flat F-manifold, associated to a conformal Frobenius manifold~$M$, the connection $\tnabla^{\frac{\delta}{2}}$ coincides with the flat connection $\tnabla$ on $M\times\mbC^*$ from the paper~\cite[page 310]{Dub99}).
\end{remark}
\begin{lemma}
The connection $\tnabla^\lambda$ is flat.
\end{lemma}
\begin{proof}
Direct computation analogous to the one from~\cite[proof of Proposition~2.1]{Dub99}.
\end{proof}
A differential form $\zeta_\alpha(t^*,z)dt^\alpha$ on $M\times\mbC^*$ is flat with respect to the connection $\tnabla^\lambda$ if and only if the following equations are satisfied:
\begin{align*}
\frac{\d\zeta_\alpha}{\d t^\beta}=&z c^\gamma_{\beta\alpha}\zeta_\gamma,\notag\\
\frac{\d\zeta_\alpha}{\d z}=&\mcU^\gamma_\alpha\zeta_\gamma+\frac{\lambda-q^\alpha}{z}\zeta_\alpha,
\end{align*}
where $\mcU^\alpha_\beta:=E^\mu c^\alpha_{\mu\beta}$. Denote by $\zeta$ the row vector $(\zeta_1,\ldots,\zeta_N)$, then the last two equations can be written as
\begin{align}
\frac{\d\zeta}{\d t^\beta}=&z \zeta C_\beta,\label{eq:equation for zeta,1}\\
\frac{\d\zeta}{\d z}=&\zeta\left(\mcU+\frac{\lambda-Q}{z}\right),\label{eq:equation for zeta,2}
\end{align}
where $\mcU:=(\mcU^\alpha_\beta)$.

Let us construct a certain matrix solution of the system~\eqref{eq:equation for zeta,1},~\eqref{eq:equation for zeta,2}. We first consider equation~\eqref{eq:equation for zeta,2} along the punctured line $(t^1_\orig,\ldots,t^N_\orig)\times\mbC^*\subset M\times\mbC^*$. Let $U_1:=\mcU|_{t^\alpha=t^\alpha_{\orig}}$ and consider the equation
\begin{gather}\label{eq:equation for xi}
\frac{\d\xi}{\d z}=\xi\left(U_1+\frac{\lambda-Q}{z}\right)\quad\Leftrightarrow\quad \frac{\d\xi^T}{\d z}=\left(U_1^T+\frac{\lambda-Q}{z}\right)\xi^T,
\end{gather}
for an $N\times N$ matrix $\xi$. Then there exists a transformation $\xi'=G(z)\xi^T$ with $G(z)\in\Mat_{N,N}(\mbC)[[z]]$, $G(0)=\Id$, that transforms equation~\eqref{eq:equation for xi} to
\begin{gather}\label{eq:equation for txi}
\frac{\d\xi'}{\d z}=\left(\frac{\lambda-Q}{z}+\sum_{n\ge 1}(-1)^{n-1}R_n^T z^{n-1}\right)\xi',
\end{gather}
where matrices $R_n$ satisfy $[Q,R_n]=nR_n$ (see e.g.~\cite[Lemma~2.5]{Dub99}). The fact that $G(z)$ and the matrices $R_n$ can be chosen not to depend on $\lambda$ is obvious. The matrix $\xi'=z^{\lambda-Q}z^{\oR^T}$, where $\oR:=\sum_{n\ge 1}(-1)^{n-1}R_n$, satisfies equation~\eqref{eq:equation for txi}. Therefore, the matrix $\xi=z^\oR z^{\lambda-Q}(G^T(z))^{-1}$ satisfies equation~\eqref{eq:equation for xi}. 

Using equation~\eqref{eq:equation for zeta,1}, we can extend the constructed function $\xi$ on the punctured line $(t^1_\orig,\ldots,t^N_\orig)\times\mbC^*\subset M\times\mbC^*$ to a function~$\zeta$ on the whole space $M\times\mbC^*$. The function~$\zeta$ has the form 
\begin{gather}\label{eq:zeta solution}
\zeta=z^\oR z^{\lambda-Q}\left(\sum_{d\ge 0}\Omega^{d-1}_0z^d\right),
\end{gather}
where $\Omega^d_0\in\Mat_{N,N}(\mcO(M))$, $\Omega^{-1}_0=\Id$ and $\left.\left(\sum_{d\ge 0}\Omega^{d-1}_0z^d\right)\right|_{t^\alpha=t^\alpha_{\orig}}=(G^T(z))^{-1}$. Since the connection~$\tnabla^\lambda$ is flat, the function $\zeta$ satisfies equation~\eqref{eq:equation for zeta,2}.

Equation~\eqref{eq:equation for zeta,1} for a function $\zeta$ of the form~\eqref{eq:zeta solution} implies that the sequence of matrices $\Omega^d_0$, $d\ge -1$, is a calibration of our flat F-manifold. Equation~\eqref{eq:equation for zeta,2} gives
$$
\left(\frac{\lambda-Q}{z}+\sum_{n\ge 1}(-1)^{n-1}R_nz^{n-1}\right)\left(\sum_{d\ge 0}\Omega^{d-1}_0z^d\right)+\sum_{d\ge 0}(d+1)\Omega^d_0z^d=\left(\sum_{d\ge 0}\Omega^{d-1}_0z^d\right)\left(\mcU+\frac{\lambda-Q}{z}\right).
$$
Taking the coefficient of $z^d$, $d\ge 0$, in this equation, we get
$$
\Omega^{d-1}_0\mcU=(d+1)\Omega^d_0+[\Omega^d_0,Q]+\sum_{i=1}^{d+1}(-1)^{i-1}R_i\Omega^{d-i}_0,\quad d\ge 0.
$$
By~\eqref{eq:equation for zeta,1}, the left-hand side is equal to $E^\theta\frac{\d\Omega^d_0}{\d t^\theta}$. This completes the proof of the proposition.
\end{proof} 

A calibration of a conformal flat F-manifold will always be assumed to satisfy property~\eqref{eq:homogeneity for Omega^d_0}.

From equation~\eqref{eq:homogeneity for Omega^d_0} it is easy to deduce that
\begin{gather}\label{eq:homogeneity for Omega^0_d}
E^\theta\frac{\d\Omega^0_d}{\d t^\theta}=(d+1)\Omega^0_d+[\Omega^0_d,Q]+\sum_{i=1}^{d+1}\Omega^0_{d-i} R_i, \quad d\ge -1.
\end{gather}

\begin{remark}
We see that a calibration of a conformal Frobenius manifold is the same as a calibration of the associated flat F-manifold, satisfying the additional properties $\eta\Omega^0_d\eta^{-1}=(\Omega^d_0)^T$ and $\eta R_n\eta^{-1}=(-1)^{n-1}R_n^T$.
\end{remark}

\subsubsection{Calibrations of extensions of flat F-manifolds}\label{subsubsection:calibrations of extensions}

For an $(N+1)\times(N+1)$ matrix $A$ denote by $\pi_N(A)$ the $N\times N$ matrix formed by the first $N$ raws and the first $N$ columns of $A$.

\begin{lemma}\label{lemma:calibration of extension}
Consider a flat F-manifold, given by a vector potential $(F^1,\ldots,F^N)$, $F^\alpha\in\mcO(M)$, and its extension with a vector potential $(F^1,\ldots,F^{N+1})$, $F^{N+1}\in\mcO(M\times U)$, where the open subsets $M\in\mbC^N$ and $U\in\mbC$ are simply connected. Suppose that the flat F-manifold $M\times U$ is conformal with an Euler vector field $E=\sum_{\alpha=1}^{N+1}((1-q^\alpha)t^\alpha+r^\alpha)\frac{\d}{\d t^\alpha}$. Let us also fix a calibration of the flat F-manifold $M$, given by matrices $\Omega^d_0$ and $R_n$. 

Then there exists a calibration of the flat F-manifold $M\times U$ with matrices $\tOmega^d_0$ and $\tR_n$ satisfying the properties 
\begin{gather}\label{eq:property of calibration of extension}
\pi_N(\tOmega^d_0)=\Omega^d_0,\qquad \pi_N(\tR_n)=R_n,\qquad \tOmega^{\le N,d}_{N+1,0}=(\tR_n)^{\le N+1}_{N+1}=0.
\end{gather}
\end{lemma}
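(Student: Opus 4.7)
My plan is to repeat the construction from the proof of Proposition~\ref{proposition:calibration of conformal F-man} for the extended flat F-manifold $M \times U$, carefully tracking a block-triangular structure coming from the hypothesis that $F^1, \ldots, F^N$ are independent of $t^{N+1}$. This hypothesis forces $\widetilde c^\alpha_{\beta\gamma} := \frac{\d^2 F^\alpha}{\d t^\beta \d t^\gamma}$ to vanish whenever $\alpha \le N$ and $\beta = N+1$ or $\gamma = N+1$, while $\widetilde c^\alpha_{\beta\gamma} = c^\alpha_{\beta\gamma}$ when all three indices are $\le N$. Hence the $(N+1) \times (N+1)$ matrices $\widetilde C_\beta$ and $\widetilde{\mcU} = \widetilde E^\theta \widetilde C_\theta$ have vanishing entries in positions $(i, N+1)$ for $i \le N$, and their upper-left $N \times N$ blocks reduce to $C_\beta$ (for $\beta \le N$) and $\mcU$ respectively; since $\widetilde Q$ is diagonal, the subspace of $(N+1) \times (N+1)$ matrices $\widetilde{\xi}$ whose entries in positions $(i, N+1)$ for $i \le N$ vanish is preserved both by the flow $\frac{\d \widetilde{\xi}}{\d z} = \widetilde{\xi}(\widetilde U_1 + \frac{\lambda - \widetilde Q}{z})$ (where $\widetilde U_1 := \widetilde{\mcU}|_{t_\orig}$) and by the flows $\frac{\d \widetilde{\xi}}{\d t^\beta} = z \widetilde{\xi} \widetilde C_\beta$; on this subspace the upper-left $N \times N$ block satisfies the original system used to construct the given calibration of $M$.

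I would then re-run the construction of Proposition~\ref{proposition:calibration of conformal F-man} for $M \times U$, constraining all choices to respect this block structure and to reproduce the given calibration $(\Omega^d_0, R_n)$ on the upper-left block. The fundamental solution $\widetilde{\xi}$ of the $z$-equation on the base line $(t_\orig) \times \mbC^*$ is chosen in the block-preserving subspace with upper-left $N \times N$ block equal to the fundamental solution $\xi$ used in the original construction for $M$, and the gauge transformation $\widetilde G(z) \in \Mat_{N+1, N+1}(\mbC)[[z]]$ is required to have vanishing lower-left $1 \times N$ block and upper-left block equal to the previously fixed $G(z)$. The order-by-order normal-form procedure of~\cite[Lemma~2.5]{Dub99} then produces matrices $\tR_n$ whose upper-left blocks equal $R_n$ (because the equations for the upper-left blocks of $\widetilde G_m$ and $\tR_m^T$ decouple from the extra data by block-triangularity) and whose $(N+1)$-th column vanishes: the diagonal entry $(\tR_n)^{N+1}_{N+1}$ is forced to be zero by the resonance condition $[\widetilde Q, \tR_n] = n \tR_n$ and $n \ge 1$, while each entry $(\tR_n)^{i}_{N+1}$ with $i \le N$ is forced to vanish by the compatibility equation at position $(N+1, i)$, whose right-hand side is zero by the inductive vanishing of the lower-left block of $\widetilde G_k$ at lower orders and of $(\widetilde U_1^T)_{N+1, j}$ for $j \le N$.

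Extending $\widetilde{\xi}$ from the base line to the whole of $(M \times U) \times \mbC^*$ via the $t^\beta$-equations preserves the block structure, yielding a flat section of the form~\eqref{eq:zeta solution},
$$\widetilde{\zeta} = z^{\widetilde{\oR}} z^{\lambda - \widetilde Q}\biggl(\sum_{d \ge 0} \tOmega^{d-1}_0 z^d\biggr),$$
where $\widetilde{\oR} := \sum_{n \ge 1}(-1)^{n-1}\tR_n$ has zero $(N+1)$-th column by the previous step. Hence $z^{\widetilde{\oR}} z^{\lambda - \widetilde Q}$ has $(N+1)$-th column equal to $z^{\lambda - q^{N+1}}$ times the standard basis vector $e_{N+1}$ and its upper-left $N \times N$ block is invertible, so the vanishing of the $(i, N+1)$-entries of $\widetilde{\zeta}$ for $i \le N$ is equivalent to $(\tOmega^{d-1}_0)^\mu_{N+1} = 0$ for all $\mu \le N$ and $d \ge 0$. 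By uniqueness of solutions of~\eqref{eq:equation for zeta,1},~\eqref{eq:equation for zeta,2} with the prescribed initial data on the base line, the upper-left block of $\widetilde{\zeta}$ coincides with the $\zeta$ used for the original calibration of $M$, so $\pi_N(\tOmega^d_0) = \Omega^d_0$. The final paragraph of the proof of Proposition~\ref{proposition:calibration of conformal F-man} then shows that $(\tOmega^d_0, \tR_n)$ is a conformal calibration of $M \times U$, delivering all three equalities in~\eqref{eq:property of calibration of extension}.

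The main technical step is the block-compatible gauging in the second paragraph. At order $m$ one must verify that the inductive ansatz is consistent, i.e.\ that the equations for the lower-left block of $\widetilde G_m$ and for the entries $(\tR_m)^{i}_{N+1}$ with $i \le N$ can be solved with all these entries set to zero. This reduces, via a direct expansion of the conjugation relation $\widetilde G(z)(\widetilde U_1^T + \frac{\lambda - \widetilde Q}{z}) = (\frac{\lambda - \widetilde Q}{z} + \sum_{n \ge 1}(-1)^{n-1}\tR_n^T z^{n-1})\widetilde G(z) - \frac{\d \widetilde G(z)}{\d z}$, to checking that the right-hand sides of the compatibility equations at positions $(N+1, j)$ with $j \le N$ vanish identically; this is an immediate consequence of the inductive hypothesis together with the vanishing of the lower-left block of $\widetilde U_1^T$.
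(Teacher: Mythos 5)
Your proposal is correct and follows essentially the same route as the paper's proof: it runs the normal-form recursion of~\cite[Lemma~2.5, eq.~(2.53)]{Dub99} for the extended manifold, uses the block-triangular structure coming from $(\tU_1)^{\le N}_{N+1}=0$ (and $c^{\le N}_{\beta,N+1}=0$) to choose $\tG_n$ with vanishing lower-left block and $\pi_N(\tG_n)=G_n$, forcing $(\tR_n)^{\le N+1}_{N+1}=0$ and $\pi_N(\tR_n)=R_n$, and then extends by the $t$-flows to get $\tOmega^d_0$ with the stated properties. Your explicit factorization argument via $\widetilde\zeta=z^{\widetilde{\oR}}z^{\lambda-\tQ}\bigl(\sum_{d\ge 0}\tOmega^{d-1}_0z^d\bigr)$ just spells out the "one can easily check" step at the end of the paper's proof.
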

\begin{proof}
Consider the construction of a calibration of the conformal flat F-manifold $M\times U$ from the proof of Proposition~\ref{proposition:calibration of conformal F-man} in more details. So we consider the differential equation
$$
\frac{\d\txi^T}{\d z}=\left(\tU_1^T+\frac{\lambda-\tQ}{z}\right)\txi^T
$$
for an $(N+1)\times(N+1)$ matrix $\txi$, where $(\tU_1)^\alpha_\beta:=(E^\mu c^\alpha_{\mu\beta})|_{t^\theta=t^\theta_{\orig}}$, $1\le\alpha,\beta\le N+1$, and $\tQ:=\diag(q^1,\ldots,q^{N+1})$. A transformation $\xi'=\tG(z)\txi^T$, $\tG(z)=\Id+\sum_{n\ge 1}\tG_n z^n$, transforming this differential equation to the form
\begin{gather*}
\frac{\d\xi'}{\d z}=\left(\frac{\lambda-\tQ}{z}+\sum_{n\ge 1}(-1)^{n-1}\tR_n^T z^{n-1}\right)\xi',
\end{gather*}
where $[\tQ,\tR_n]=n\tR_n$, is determined by the recursion relation~\cite[equation~(2.53)]{Dub99}
\begin{gather}\label{eq:recursion for extension}
(-1)^{n-1}\tR_n^T=\delta_{n,1}\tU_1^T+n\tG_n+[\tQ,\tG_n]+\sum_{k=1}^{n-1}\left(\tG_{n-k}\delta_{k,1}\tU_1^T-(-1)^{k-1}\tR_k^T\tG_{n-k}\right),\quad n\ge 1.
\end{gather}
If one has computed the matrices $\tG_i$ and $\tR_i$ for $i<n$, then equation~\eqref{eq:recursion for extension} determines the matrix $\tR_n$ and the elements $(\tG_n)^\alpha_\beta$ with $q^\alpha-q^\beta\ne -n$. The elements $(\tG_n)^\alpha_\beta$ with $q^\alpha-q^\beta=-n$ can be chosen arbtitrarily. Note that $(\tU_1)^{\le N}_{N+1}=0$. Therefore, if $(\tR_i)^{\le N}_{N+1}=(\tG_i)^{N+1}_{\le N}=0$ for $i<n$, then $(\tR_n)^{\le N}_{N+1}=0$ and, choosing the elements $(\tG_n)^{N+1}_\alpha$ with $q^\alpha-q^\beta=-n$ to be zero, we can guarantee that $(\tG_n)^{N+1}_{\le N}=0$. 

Let $U_1:=\pi_N(\tU_1)$ and $Q:=\pi_N(\tQ)$. We know that the $N\times N$ matrices $G_n$, given by $\left.\left(\sum_{d\ge 0}\Omega^{d-1}_0z^d\right)\right|_{t^\alpha=t^\alpha_{\orig}}=(\Id+\sum_{n\ge 1}G_n^T z^n)^{-1}$, together with the matrices $R_n$ satisfy the equations
\begin{gather*}
(-1)^{n-1}R_n^T=\delta_{n,1}U_1^T+n G_n+[Q,G_n]+\sum_{k=1}^{n-1}\left(G_{n-k}\delta_{k,1}U_1^T-(-1)^{k-1}R_k^T G_{n-k}\right),\quad n\ge 1.
\end{gather*}
Therefore, there exist matrices $\tR_n$ and $\tG_n$, $n\ge 1$, satisfying equations~\eqref{eq:recursion for extension} and the properties $\pi_N(\tR_n)=R_n$, $\pi_N(\tG_n)=G_n$ and $(\tR_n)^{\le N}_{N+1}=(\tG_n)^{N+1}_{\le N}=0$. Note that the property $[\tQ,\tR_n]=n\tR_n$ implies that the diagonal elements of $\tR_n$ are also zero.

We construct the matrices $\tOmega^d_0$ as a solution of equation~\eqref{eq:TRR for Omega^d_0}, satisfying the initial condition~$\left.\left(\sum_{d\ge 0}\tOmega^{d-1}_0z^d\right)\right|_{t^\alpha=t^\alpha_{\orig}}=(\Id+\sum_{n\ge 1}\tG_n^T z^n)^{-1}$, and one can easily check that $\pi_N(\tOmega^d_0)=\Omega^d_0$ and $\tOmega^{\le N,d}_{N+1,0}=0$. \end{proof}

Let us now apply this lemma to the conformal flat F-manifold, associated to a solution of the open WDVV equations, satisfying property~\eqref{eq:unit condition for Fo} and the homogeneity condition~\eqref{eq:homogeneity for Fo}. We assume that the matrices $\Omega^d_0$ and $R_n$ give a calibration of the Frobenius manifold. By Lemma~\ref{lemma:calibration of extension}, there exists a calibration of the flat F-manifold $M\times U$, given by matrices $\tOmega^d_0$ and~$\tR_n$, satisfying properties~\eqref{eq:property of calibration of extension}. Consider the matrices $\tOmega^0_d$ and define functions $\Phi_{\beta,d}$, $1\le\beta\le N$, $d\ge -1$, by
\begin{gather}\label{eq:Phi and Omega}
\Phi_{\beta,d}:=\tOmega^{N+1,0}_{\beta,d}.
\end{gather}
\begin{lemma}\label{lemma:functions Phi and calibrations}
1. The functions $\Phi_{\beta,d}$ together with the matrices $\tR_n$ give a calibration of the function~$F^o$. As a corollary, Lemma~\ref{lemma:existence of open calibrations} is true.\\
2. Equation~\eqref{eq:Phi and Omega} defines a correspondence between calibrations of the flat F-manifold $M\times U$, satisfying properties~\eqref{eq:property of calibration of extension}, and calibrations of the function $F^o$.
\end{lemma}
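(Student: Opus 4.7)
The plan is to show that, under the block structure~\eqref{eq:property of calibration of extension}, each of the three axioms~\eqref{eq:open calibration,property 1}--\eqref{eq:open calibration, property 3} in the definition of a calibration of $F^o$ is precisely the $(N+1,\beta)$-row of the corresponding flat F-manifold calibration identity for $\tOmega^0_d$, and conversely. Both directions of the correspondence~\eqref{eq:Phi and Omega} will then follow by extracting a distinguished row from equations already established in Section~\ref{section:flat F-manifolds}, and part~1 together with Lemma~\ref{lemma:existence of open calibrations} drops out as an immediate consequence of Lemma~\ref{lemma:calibration of extension}.

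The preliminary step is a block-matrix computation. Property~\eqref{eq:property of calibration of extension} says that $P(z):=\Id+\sum_{d\ge 1}\tOmega^{d-1}_0 z^d$ is lower block-triangular, with upper-left $N\times N$ block $\Id+\sum_{d\ge 1}\Omega^{d-1}_0 z^d$ and vanishing upper-right column. Inverting blockwise shows that $P(z)^{-1}=\Id+\sum_{d\ge 1}(-1)^d\tOmega^0_{d-1}z^d$ inherits the same block shape. Reading off coefficients gives $\pi_N(\tOmega^0_d)=\Omega^0_d$ and $\tOmega^{\mu,0}_{N+1,d}=0$ for all $\mu\le N$ and $d\ge -1$, which is exactly what makes the convention $\Omega^{\mu,0}_{N+1,d}:=0$, adopted in Section~\ref{section:open Virasoro}, consistent with the flat F-manifold picture.

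With this in hand, the $(N+1,\beta)$-entry of the TRR equation $\frac{\d\tOmega^{\alpha,0}_{\beta,d}}{\d t^\gamma}=c^\alpha_{\gamma\mu}\tOmega^{\mu,0}_{\beta,d-1}$ for the extension (the analog of~\eqref{eq:TRR for Omega^0_d}), together with $c^{N+1}_{\gamma\mu}=\d^2F^o/\d t^\gamma\d t^\mu$ and the block structure just established, reduces exactly to~\eqref{eq:open calibration,property 2}; axiom~\eqref{eq:open calibration,property 1} is immediate from $\tOmega^0_{-1}=\Id$. For~\eqref{eq:open calibration, property 3}, taking the $(N+1,\beta)$-entry of~\eqref{eq:homogeneity for Omega^0_d} for the extension and using $(\tR_i)^\mu_{N+1}=0$ to discard the $\mu=N+1$ contributions yields
$$E^\theta\frac{\d\Phi_{\beta,d}}{\d t^\theta}=(d+1+q^\beta-q^{N+1})\Phi_{\beta,d}+\sum_{i=1}^{d+1}\Phi_{\nu,d-i}(\tR_i)^\nu_\beta.$$
Reading off $q^{N+1}=(1+\delta)/2$ from~\eqref{eq:homogeneity for Fo} turns the coefficient of $\Phi_{\beta,d}$ into $d+\tfrac12+\mu^\beta$ uniformly for $\beta\in\{1,\ldots,N+1\}$; and since $\tmu-\tQ=-(\delta/2)\Id$ is a scalar, $[\tQ,\tR_n]=n\tR_n$ becomes $[\tmu,\tR_n]=n\tR_n$, which in particular forces $(\tR_n)^{N+1}_{N+1}=0$. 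This proves part~1, and Lemma~\ref{lemma:existence of open calibrations} now follows directly from Lemma~\ref{lemma:calibration of extension}.

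For part~2, given any calibration $(\Phi,\tR_n)$ of $F^o$, define $\tOmega^0_d$ by $\pi_N(\tOmega^0_d):=\Omega^0_d$, $\tOmega^{\mu,0}_{N+1,d}:=0$ for $\mu\le N$, and $\tOmega^{N+1,0}_{\beta,d}:=\Phi_{\beta,d}$, and then recover $\tOmega^d_0$ from~\eqref{eq:upper-lower relation}; the block argument of the preliminary step guarantees that $\tOmega^d_0$ satisfies~\eqref{eq:property of calibration of extension}. The flat F-manifold calibration identities for $\tOmega^0_d$ are then verified block-by-block: the $(\le N,\le N)$ block is the Frobenius calibration by hypothesis, the two off-diagonal blocks vanish automatically using $c^\alpha_{\gamma,N+1}=0$ for $\alpha\le N$ and $(\tR_i)^\mu_{N+1}=0$, while the $(N+1,\cdot)$ row is~\eqref{eq:open calibration,property 2} and~\eqref{eq:open calibration, property 3} by the computation above. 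Uniqueness of this construction makes~\eqref{eq:Phi and Omega} a bijection. The sole non-routine step in the whole argument is the block inversion giving $\tOmega^{\mu,0}_{N+1,d}=0$ for $\mu\le N$; once that is in hand, every remaining verification is bookkeeping.
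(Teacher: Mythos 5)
Your proposal is correct and follows essentially the same route as the paper: the paper's own proof simply asserts that equations~\eqref{eq:TRR for Omega^0_d} and~\eqref{eq:homogeneity for Omega^0_d} for $\tOmega^0_d$, read in the $(N+1)$-st row, are exactly properties~\eqref{eq:open calibration,property 2} and~\eqref{eq:open calibration, property 3}, and that the converse assembly is obvious. Your block-inversion of~\eqref{eq:upper-lower relation} (giving $\pi_N(\tOmega^0_d)=\Omega^0_d$ and $\tOmega^{\mu,0}_{N+1,d}=0$ for $\mu\le N$) and the bookkeeping $q^{N+1}=\tfrac{1+\delta}{2}$, $\tmu=\tQ-\tfrac{\delta}{2}\Id$ just make explicit the steps the paper leaves implicit, and they are carried out correctly.
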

\begin{proof}
1. Property~\eqref{eq:open calibration,property 1} is obvious. Equations~\eqref{eq:TRR for Omega^0_d} and~\eqref{eq:homogeneity for Omega^0_d} for the matrices $\tOmega^0_d$ give exactly properties~\eqref{eq:open calibration,property 2} and~\eqref{eq:open calibration, property 3}, respectively.

2. The fact that equations~\eqref{eq:TRR for Omega^0_d} and~\eqref{eq:homogeneity for Omega^0_d} for the matrices $\Omega^0_d$ together with equations~\eqref{eq:open calibration,property 2} and~\eqref{eq:open calibration, property 3} give the required equations for the matrices $\tOmega^0_d$ is obvious. 
\end{proof}

\subsection{Descendent vector potentials of a flat F-manifold}\label{subsection:descendent vector potential}

Consider a flat F-manifold, given by a vector potential $(F^1,\ldots,F^N)$, $F^\alpha\in\mcO(M)$, and let us choose a calibration. One can immediately see that $\frac{\d\Omega^{\alpha,0}_{1,0}}{\d t^\beta}=\delta^\alpha_\beta$, which implies that $\Omega^{\alpha,0}_{1,0}-t^\alpha$ is a constant. Let us make the change of coordinates $t^\alpha\mapsto\Omega^{\alpha,0}_{1,0}$, so that we have now $\Omega^{\alpha,0}_{1,0}=t^\alpha$.

Let $v^1,\ldots,v^N$ be formal variables and consider the {\it principal hierarchy} associated to our flat F-manifold and its calibration (see e.g.~\cite[Section 3.2]{AL18}):
\begin{gather}\label{eq:principal hierarchy for F-man}
\frac{\d v^\alpha}{\d t^\beta_d}=\d_x\left(\left.\Omega^{\alpha,0}_{\beta,d}\right|_{t^\gamma=v^\gamma}\right),\quad 1\le\alpha,\beta\le N,\quad d\ge 0.
\end{gather}
The flows of the principal hierarchy pairwise commute. Since $\Omega^{\alpha,0}_{1,0}=t^\alpha$, we can identify $x=t^1_0$.

Clearly the functions $v^\alpha=t^\alpha_0$ satisfy the subsystem of system~\eqref{eq:principal hierarchy for F-man}, given by the flows $\frac{\d}{\d t^\beta_0}$. Denote by $(v^\top)^\alpha\in\mcO(M)[[t^*_{\ge 1}]]$ the solution of the principal hierarchy specified by the initial condition
$$
\left.(v^\top)^\alpha\right|_{t^*_{\ge 1}=0}=t^\alpha_0.
$$
Recall that $\tt^\alpha_d=t^\alpha_d-\delta^{\alpha,1}\delta_{d,1}$. 
\begin{lemma}
We have
\begin{align}
\sum_{d\ge 0}\tt^\gamma_{d+1}\frac{\d (v^\top)^\alpha}{\d t^\gamma_d}+\delta^{\alpha,1}=&0,\label{eq:string for vtop}\\
\sum_{d\ge 0}\tt^\gamma_d\frac{\d (v^\top)^\alpha}{\d t^\gamma_d}=&0.\label{eq:dilaton for vtop}
\end{align}
\end{lemma}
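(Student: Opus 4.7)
The plan is to prove both identities by a uniform two-step strategy. Denote by $Z^\alpha$ the left-hand side in question:
\[
Z^\alpha=S^\alpha:=\sum_{d\ge 0}\tt^\gamma_{d+1}\frac{\d(v^\top)^\alpha}{\d t^\gamma_d}+\delta^{\alpha,1}
\]
for~\eqref{eq:string for vtop}, and
\[
Z^\alpha=D^\alpha:=\sum_{d\ge 0}\tt^\gamma_d\frac{\d(v^\top)^\alpha}{\d t^\gamma_d}
\]
for~\eqref{eq:dilaton for vtop}. I will show that $Z^\alpha|_{t^*_{\ge 1}=0}=0$ and that $Z^\alpha$ satisfies the homogeneous linear evolution equation
\[
\frac{\d Z^\alpha}{\d t^\beta_e}=\d_x\bigl(c^\alpha_{\mu\rho}|_{t=v^\top}\,\Omega^{\rho,0}_{\beta,e-1}|_{t=v^\top}\,Z^\mu\bigr),\qquad e\ge 0,\quad 1\le\beta\le N,
\]
with the convention $\Omega^{\rho,0}_{\beta,-1}:=\delta^\rho_\beta$. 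Such a first-order linear system with vanishing initial datum admits $Z^\alpha\equiv 0$ as its unique solution in $\mcO(M)[[t^*_{\ge 1}]]$, by induction on the total degree in the variables $t^\gamma_d$, $d\ge 1$; the identities then follow.

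The initial-value check reduces to computing $\frac{\d(v^\top)^\alpha}{\d t^\gamma_d}|_{t^*_{\ge 1}=0}$ for $d\in\{0,1\}$. The value at $d=0$ is $\delta^\alpha_\gamma$ by the initial condition defining $v^\top$. For $d=1$ and $\gamma=1$, the principal hierarchy together with the calibration identity $\frac{\d\Omega^{\alpha,0}_{\beta,d}}{\d t^\gamma}=c^\alpha_{\gamma\mu}\Omega^{\mu,0}_{\beta,d-1}$, the unit relation $c^\alpha_{1\mu}=\delta^\alpha_\mu$, and the normalization $\Omega^{\alpha,0}_{1,0}=t^\alpha$ arranged by the preceding coordinate change give
\[
\frac{\d(v^\top)^\alpha}{\d t^1_1}\bigg|_{t^*_{\ge 1}=0}=\d_x\Omega^{\alpha,0}_{1,1}(t^*_0)=\frac{\d\Omega^{\alpha,0}_{1,1}}{\d t^1}(t^*_0)=c^\alpha_{1\mu}\Omega^{\mu,0}_{1,0}(t^*_0)=t^\alpha_0.
\]
Combined with $\tt^\gamma_1|_{t^*_{\ge 1}=0}=-\delta^{\gamma,1}$ and $\tt^\gamma_{d\ge 2}|_{t^*_{\ge 1}=0}=0$, this yields $S^\alpha|_{t^*_{\ge 1}=0}=-\delta^{\alpha,1}+\delta^{\alpha,1}=0$ and $D^\alpha|_{t^*_{\ge 1}=0}=t^\alpha_0-t^\alpha_0=0$.

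The core step is the derivation of the evolution equation. Differentiating $Z^\alpha$ by $\frac{\d}{\d t^\beta_e}$ gives a coefficient contribution from $\frac{\d\tt}{\d t^\beta_e}$ and a flow contribution $\sum_d\tt\cdot\frac{\d^2(v^\top)^\alpha}{\d t^\beta_e\d t^\gamma_d}$. The principal hierarchy and the chain rule rewrite the double derivative as $\d_x\bigl(\frac{\d\Omega^{\alpha,0}_{\beta,e}}{\d t^\mu}|_{t=v^\top}\frac{\d(v^\top)^\mu}{\d t^\gamma_d}\bigr)$; commuting $\tt$ past $\d_x$ introduces a correction whenever $\d_x\tt\ne 0$. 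In the string case, $\tt^\gamma_{d+1}$ is independent of $t^1_0=x$, so no such correction arises, and after applying the calibration identity the flow contribution becomes $\d_x\bigl(c^\alpha_{\mu\rho}\Omega^{\rho,0}_{\beta,e-1}|_{t=v^\top}(S^\mu-\delta^{\mu,1})\bigr)$. The $-\delta^{\mu,1}$ piece contributes $-\d_x\Omega^{\alpha,0}_{\beta,e-1}|_{t=v^\top}$, which equals $-\frac{\d(v^\top)^\alpha}{\d t^\beta_{e-1}}$ for $e\ge 1$ and vanishes for $e=0$; this precisely cancels the coefficient contribution from $\frac{\d\tt^\gamma_{d+1}}{\d t^\beta_e}=\delta^{\gamma,\beta}\delta_{d+1,e}$, which is $\frac{\d(v^\top)^\alpha}{\d t^\beta_{e-1}}$ for $e\ge 1$ and $0$ for $e=0$. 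In the dilaton case, $\d_x\tt^\gamma_d=\delta^{\gamma,1}\delta_{d,0}$, so commuting $\tt^\gamma_d$ past $\d_x$ generates the correction $-\frac{\d\Omega^{\alpha,0}_{\beta,e}}{\d t^\mu}|_{t=v^\top}\frac{\d(v^\top)^\mu}{\d t^1_0}=-\frac{\d(v^\top)^\alpha}{\d t^\beta_e}$, which cancels the coefficient contribution $\frac{\d(v^\top)^\alpha}{\d t^\beta_e}$ from $\frac{\d\tt^\gamma_d}{\d t^\beta_e}=\delta^{\gamma,\beta}\delta_{d,e}$. In both cases the surviving terms assemble into the displayed homogeneous PDE, and the uniqueness statement from the first paragraph concludes the proof. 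The only real subtlety is recognising the two distinct cancellations that make each evolution equation homogeneous in $Z$; otherwise the argument is routine manipulation with the principal hierarchy and the defining identities of the calibration.
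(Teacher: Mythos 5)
Your proof is correct and takes essentially the same route as the paper: the paper's argument is that the string and dilaton flows are symmetries of the principal hierarchy (which is precisely your homogeneous linearized evolution equation for $S^\alpha$ and $D^\alpha$) and that they vanish on the solution $(v^\top)^\alpha$ at $t^*_{\ge 1}=0$, hence vanish identically. You have simply written out explicitly the symmetry verification that the paper calls ``easy to deduce'' and the uniqueness-by-degree-induction step that the paper leaves implicit; both of your computations (including the initial-value checks matching $\frac{\d\Omega^{\alpha,0}_{1,1}}{\d t^1}=t^\alpha$) are sound.
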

\begin{proof}
From the property $\frac{\d\Omega^0_d}{\d t^1}=\Omega^0_{d-1}$, $d\ge 0$, it is easy to deduce that the system
$$
\frac{\d v^\alpha}{\d \tau_{-1}}=\sum_{d\ge 0}\tt^\gamma_{d+1}\frac{\d v^\alpha}{\d t^\gamma_d}+\delta^{\alpha,1},\quad 1\le\alpha\le N,
$$
is a symmetry of the principal hierarchy~\eqref{eq:principal hierarchy for F-man}. Since, obviously, $\left.\frac{\d (v^\top)^\alpha}{\d \tau_{-1}}\right|_{t^*_{\ge 1}=0}=0$, we get equation~\eqref{eq:string for vtop}. 

The rescaling combined with the shift along $t^1_1$, given by
$$
\frac{\d v^\alpha}{\d \tau_0}=\sum_{d\ge 0}\tt^\gamma_d\frac{\d v^\alpha}{\d t^\gamma_d},\quad 1\le\alpha\le N,
$$
is also a symmetry of the principal hierarchy. We compute
$$
\left.\frac{\d (v^\top)^\alpha}{\d \tau_0}\right|_{t^*_{\ge 1}}=t^\alpha_0-\left.\frac{\d(v^\top)^\alpha}{\d t^1_1}\right|_{t^*_{\ge 1}=0}=t^\alpha_0-\frac{\d\Omega^{\alpha,0}_{1,1}}{\d t^1_0}=0,
$$
concluding that equation~\eqref{eq:dilaton for vtop} is true.
\end{proof}

Define matrices $\Omega^p_q=(\Omega^{\alpha,p}_{\beta,q})$, $p,q\ge 0$, by
\begin{gather*}
\Omega^p_q:=\sum_{i=0}^q(-1)^{q-i}\Omega^{p+q-i}_0\Omega^0_{i-1}\stackrel{\scriptsize{\text{eq. \eqref{eq:upper-lower relation}}}}{=}\sum_{i=0}^p(-1)^{p-i}\Omega^{i-1}_0\Omega^0_{p+q-i}.
\end{gather*}
We also adopt the convention $\Omega^{-1}_q=\Omega^q_{-1}:=\delta_{q,0}\Id$, $q\ge 0$. One can easily check that
\begin{align}
d\Omega^p_q=&\Omega^{p-1}_0 d\Omega^0_q=d\Omega^p_0\cdot\Omega^0_{q-1},&& p,q\ge 0,\label{eq:TRR for Omega^p_q}\\
\frac{\d\Omega^p_q}{\d t^1}=&\Omega^{p-1}_q+\Omega^p_{q-1},&& p,q\ge 0,\quad p+q\ge 1.\label{eq:string for Omegapq}
\end{align}

Let 
$$
(\Omega^\top)^{\alpha,p}_{\beta,q}:=\left.\Omega^{\alpha,p}_{\beta,q}\right|_{t^\gamma\mapsto(v^\top)^\gamma}.
$$
\begin{lemma}\label{lemma:property of Omegatop}
We have $\frac{\d(\Omega^\top)^{\alpha,p}_{\beta,q}}{\d t^\gamma_r}=\frac{\d(\Omega^\top)^{\alpha,p}_{\gamma,r}}{\d t^\beta_q}$, $p,q,r\ge 0$.
\end{lemma}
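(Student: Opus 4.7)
My plan is a two-step argument: (i) reduce the claim for general $p$ to the case $p=0$ using equation~\eqref{eq:TRR for Omega^p_q}, and (ii) in the case $p=0$, expand both sides by the chain rule to see that the desired symmetry is equivalent to the associativity relation~\eqref{eq:axiom2 of flat F-man}.

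For step (i), the identity $d\Omega^p_q = \Omega^{p-1}_0 \, d\Omega^0_q$ from eq.~\eqref{eq:TRR for Omega^p_q} is a matrix equation of one-forms on $M$. Pulling it back along $t^\gamma\mapsto(v^\top)^\gamma$ via the chain rule gives, on the formal space,
\[
d\bigl((\Omega^\top)^p_q\bigr) = (\Omega^\top)^{p-1}_0 \cdot d\bigl((\Omega^\top)^0_q\bigr),
\]
so extracting the coefficient of $dt^\gamma_r$ yields
\[
\frac{\d(\Omega^\top)^{\alpha,p}_{\beta,q}}{\d t^\gamma_r} = \sum_\nu (\Omega^\top)^{\alpha,p-1}_{\nu,0}\cdot\frac{\d(\Omega^\top)^{\nu,0}_{\beta,q}}{\d t^\gamma_r}.
\]
The prefactor does not involve $(\beta,q,\gamma,r)$, so once the lemma is proved for $p=0$ it will follow for all $p$.

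For step (ii), I would apply the chain rule together with the recursion~\eqref{eq:calibration of flat F-man,eq2} and the principal hierarchy~\eqref{eq:principal hierarchy for F-man}, using the convention $\Omega^{\nu,0}_{\beta,-1}=\delta^\nu_\beta$ to uniformly cover $q=0$ and $r=0$. A short computation gives
\[
\frac{\d(\Omega^\top)^{\alpha,0}_{\beta,q}}{\d t^\gamma_r} = c^\alpha_{\mu\nu}\bigr|_{v^\top}\,(\Omega^\top)^{\nu,0}_{\beta,q-1}\cdot\d_x\!\left((\Omega^\top)^{\mu,0}_{\gamma,r}\right),
\]
and expanding $\d_x$ by the same device produces
\[
\frac{\d(\Omega^\top)^{\alpha,0}_{\beta,q}}{\d t^\gamma_r} = c^\alpha_{\mu\nu}\, c^\mu_{\sigma\lambda}\bigr|_{v^\top}\cdot (\Omega^\top)^{\nu,0}_{\beta,q-1}\,(\Omega^\top)^{\lambda,0}_{\gamma,r-1}\cdot\d_x(v^\top)^\sigma.
\]
The associativity relation~\eqref{eq:axiom2 of flat F-man}, combined with commutativity of $\circ$, gives $c^\alpha_{\mu\nu} c^\mu_{\sigma\lambda} = c^\alpha_{\mu\lambda} c^\mu_{\sigma\nu}$; swapping $\nu\leftrightarrow\lambda$ as dummy summation indices then manifestly exhibits the above expression as symmetric under $(\beta,q)\leftrightarrow(\gamma,r)$, which completes the argument.

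I expect no serious obstacle: all the heavy lifting is already present in the ingredients developed earlier (the TRR-type identity~\eqref{eq:TRR for Omega^p_q}, the calibration recursion~\eqref{eq:calibration of flat F-man,eq2}, and the principal hierarchy), and the remaining content is pure associativity together with bookkeeping. The only delicate point is the uniform handling of the edge cases $q=0$ and $r=0$ via the conventions $\Omega^0_{-1}=\Omega^{-1}_0=\mathrm{Id}$ inside the chain-rule expansions.
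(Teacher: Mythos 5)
Your argument is correct, and its first half is literally the paper's: the paper also uses \eqref{eq:TRR for Omega^p_q} to write $\frac{\d(\Omega^\top)^{\alpha,p}_{\beta,q}}{\d t^\gamma_r}=(\Omega^\top)^{\alpha,p-1}_{\mu,0}\frac{\d(\Omega^\top)^{\mu,0}_{\beta,q}}{\d t^\gamma_r}$ and thereby reduces everything to $p=0$. The difference is in how the $p=0$ case is settled: the paper disposes of it in one line by invoking the fact that the flows of the principal hierarchy pairwise commute, whereas you expand both sides by the chain rule, using \eqref{eq:TRR for Omega^0_d} and \eqref{eq:principal hierarchy for F-man}, and reduce the symmetry to the associativity relation \eqref{eq:axiom2 of flat F-man} (your identity $c^\alpha_{\mu\nu}c^\mu_{\sigma\lambda}=c^\alpha_{\mu\lambda}c^\mu_{\sigma\nu}$ does follow from \eqref{eq:axiom2 of flat F-man} together with the symmetry of the lower indices, and your conventions $\Omega^0_{-1}=\Omega^{-1}_0=\Id$ do cover the $q=0$, $r=0$ cases). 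What each route buys: the paper's appeal to commuting flows is shorter but, taken literally, commutativity only yields that $\d_x$ applied to the difference $\frac{\d(\Omega^\top)^{\mu,0}_{\beta,q}}{\d t^\gamma_r}-\frac{\d(\Omega^\top)^{\mu,0}_{\gamma,r}}{\d t^\beta_q}$ vanishes, so it implicitly rests on the same pointwise identity (an avatar of associativity) that your computation establishes explicitly; your version is therefore slightly more self-contained, at the cost of a short calculation that essentially re-proves the standard compatibility property of the principal hierarchy.
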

\begin{proof}
By~\eqref{eq:TRR for Omega^p_q}, we have $\frac{\d(\Omega^\top)^{\alpha,p}_{\beta,q}}{\d t^\gamma_r}=(\Omega^\top)^{\alpha,p-1}_{\mu,0}\frac{\d(\Omega^\top)^{\mu,0}_{\beta,q}}{\d t^\gamma_r}$. The fact that the flows of the principal hierarchy pairwise commute implies that $\frac{\d(\Omega^\top)^{\mu,0}_{\beta,q}}{\d t^\gamma_r}=\frac{\d(\Omega^\top)^{\mu,0}_{\gamma,r}}{\d t^\beta_q}$. This completes the proof of the lemma.
\end{proof}

We finally define the descendent vector potentials $(\mcF^{1,p},\ldots,\mcF^{N,p})$, $p\ge 0$, associated to our flat F-manifold and its calibration, by
$$
\mcF^{\alpha,p}:=\sum_{q\ge 0}(\Omega^\top)^{\alpha,p}_{\beta,q}\tt^\beta_q\in\mcO(M)[[t^*_{\ge 1}]],\quad p\ge 0.
$$
Let us also adopt the convention
\begin{gather*}
\mcF^{\alpha,p}:=(-1)^{p+1}\tt^\alpha_{-p-1},\quad\text{if $p<0$}.
\end{gather*}

\begin{proposition}\label{proposition:properties of descendent vector potential}
1. We have
\begin{align}
\frac{\d\mcF^{\alpha,p}}{\d t^\beta_q}=&(\Omega^\top)^{\alpha,p}_{\beta,q}, && p\ge 0,\label{eq:derivative of Falphap}\\
\sum_{q\ge 0}\tt^\beta_{q+1}\frac{\d\mcF^{\alpha,p}}{\d t^\beta_q}+\mcF^{\alpha,p-1}=&0, && p\in\mbZ.\label{eq:string for Falphap}
\end{align}
2. The difference $\mcF^{\alpha,0}|_{t^*_{\ge 1}=0}-F^\alpha$ is at most linear in the variables $t^1,\ldots,t^N$.
\end{proposition}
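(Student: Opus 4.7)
The plan is to verify parts~1(a), 1(b), and 2 in order. All three reduce to combining the chain rule with the string equation \eqref{eq:string for vtop} and the dilaton equation \eqref{eq:dilaton for vtop} for $(v^\top)^\alpha$, the algebraic identities \eqref{eq:TRR for Omega^p_q}, \eqref{eq:string for Omegapq} for $\Omega^p_q$, and the symmetry from Lemma~\ref{lemma:property of Omegatop}. For 1(a), I would differentiate $\mcF^{\alpha,p}=\sum_{q\ge 0}(\Omega^\top)^{\alpha,p}_{\beta,q}\tt^\beta_q$ in $t^\gamma_r$, obtaining $(\Omega^\top)^{\alpha,p}_{\gamma,r}$ plus a sum that, after the index swap $(\beta,q)\leftrightarrow(\gamma,r)$ supplied by Lemma~\ref{lemma:property of Omegatop}, factors through the chain rule as $\frac{\d\Omega^{\alpha,p}_{\gamma,r}}{\d t^\sigma}\bigr|_{t=v^\top}\cdot\sum_{q\ge 0}\tt^\beta_q\frac{\d(v^\top)^\sigma}{\d t^\beta_q}$, and this last sum vanishes by \eqref{eq:dilaton for vtop}.

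For 1(b) with $p\ge 0$, I would apply the derivation $S:=\sum_{d\ge 0}\tt^\gamma_{d+1}\frac{\d}{\d t^\gamma_d}$ to $\mcF^{\alpha,p}=\sum_{q\ge 0}(\Omega^\top)^{\alpha,p}_{\beta,q}\tt^\beta_q$, using $S(\tt^\beta_q)=\tt^\beta_{q+1}$. The chain rule together with \eqref{eq:string for vtop} yields $S\bigl((\Omega^\top)^{\alpha,p}_{\beta,q}\bigr)=-\frac{\d\Omega^{\alpha,p}_{\beta,q}}{\d t^1}\bigr|_{t=v^\top}$, which by \eqref{eq:string for Omegapq} splits as $-(\Omega^\top)^{\alpha,p-1}_{\beta,q}-(\Omega^\top)^{\alpha,p}_{\beta,q-1}$ when $p+q\ge 1$, and equals $-\delta^\alpha_\beta$ in the exceptional case $p=q=0$ via $c^\alpha_{1\beta}=\delta^\alpha_\beta$. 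After reindexing, the contributions from $-(\Omega^\top)^{\alpha,p}_{\beta,q-1}$ with $q\ge 1$ cancel the boundary sum $\sum_{q\ge 0}(\Omega^\top)^{\alpha,p}_{\beta,q}\tt^\beta_{q+1}$; the $q=0$ instance vanishes for $p\ge 1$ (since $\Omega^p_{-1}=0$) and produces $-\tt^\alpha_0=-\mcF^{\alpha,-1}$ for $p=0$; and the piece from $-(\Omega^\top)^{\alpha,p-1}_{\beta,q}$ collects into $-\mcF^{\alpha,p-1}$. For $p<0$, the identity follows by direct substitution of $\mcF^{\alpha,p}=(-1)^{p+1}\tt^\alpha_{-p-1}$.

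For part~2, I would specialise 1(a) at $p=0$, $r=0$, $1\le\gamma\le N$ and restrict to $t^*_{\ge 1}=0$, obtaining $\frac{\d}{\d t^\gamma_0}\bigl(\mcF^{\alpha,0}|_{t^*_{\ge 1}=0}\bigr)=\Omega^{\alpha,0}_{\gamma,0}\bigr|_{t^\sigma=t^\sigma_0}$. Since $\frac{\d\Omega^{\alpha,0}_{\gamma,0}}{\d t^\theta}=c^\alpha_{\theta\gamma}=\frac{\d^2 F^\alpha}{\d t^\theta\d t^\gamma}$, the difference $\Omega^{\alpha,0}_{\gamma,0}-\frac{\d F^\alpha}{\d t^\gamma}$ is constant in $t$, so every first partial of $\mcF^{\alpha,0}|_{t^*_{\ge 1}=0}-F^\alpha$ with respect to $t^\gamma_0$ is constant; hence the difference is at most affine in $t^1_0,\ldots,t^N_0$. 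I expect the main obstacle to lie in the bookkeeping of the telescoping in 1(b), in particular arranging the boundary $q=0$ cases (with its $p=0$ sub-case) to match the convention $\mcF^{\alpha,-1}=\tt^\alpha_0$; once the chain rule has converted everything to derivatives of $\Omega^{\alpha,p}_{\beta,q}$ along $v^\top$, the collapse to $-\mcF^{\alpha,p-1}$ is essentially forced by \eqref{eq:string for Omegapq}.
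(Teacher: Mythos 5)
Your proposal is correct and follows essentially the same route as the paper's proof: part 1(a) via Lemma~\ref{lemma:property of Omegatop} and the dilaton equation~\eqref{eq:dilaton for vtop}, part 1(b) via the string equation~\eqref{eq:string for vtop} together with~\eqref{eq:string for Omegapq} and the telescoping against the conventions $\Omega^p_{-1}=\delta_{p,0}\Id$ and $\mcF^{\alpha,-1}=\tt^\alpha_0$, and part 2 by specialising 1(a) and using $\frac{\d\Omega^{\alpha,0}_{\beta,0}}{\d t^\gamma}=c^\alpha_{\beta\gamma}$. Your explicit treatment of the exceptional case $p=q=0$ (where $\frac{\d\Omega^{\alpha,0}_{\beta,0}}{\d t^1}=\delta^\alpha_\beta$ rather than the naive two-term formula) is a harmless refinement of the bookkeeping the paper handles implicitly through its index conventions.
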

\begin{proof}
1. We compute
\begin{gather*}
\frac{\d\mcF^{\alpha,p}}{\d t^\beta_q}=\sum_{r\ge 0}\frac{\d(\Omega^\top)^{\alpha,p}_{\gamma,r}}{\d t^\beta_q}\tt^\gamma_r+(\Omega^\top)^{\alpha,p}_{\beta,q}\stackrel{\scriptstyle\text{Lemma \ref{lemma:property of Omegatop}}}{=}\sum_{r\ge 0}\frac{\d(\Omega^\top)^{\alpha,p}_{\beta,q}}{\d t^\gamma_r}\tt^\gamma_r+(\Omega^\top)^{\alpha,p}_{\beta,q}\stackrel{\scriptstyle\text{eq. \eqref{eq:dilaton for vtop}}}{=}(\Omega^\top)^{\alpha,p}_{\beta,q}.
\end{gather*}
For $p<0$ equation~\eqref{eq:string for Falphap} is obvious. Suppose $p\ge 0$, then we have
\begin{align*}
\sum_{q\ge 0}\tt^\beta_{q+1}\frac{\d\mcF^{\alpha,p}}{\d t^\beta_q}=&\sum_{q,r\ge 0}\frac{\d(\Omega^\top)^{\alpha,p}_{\gamma,r}}{\d t^\beta_q}\tt^\beta_{q+1}\tt^\gamma_r+\sum_{q\ge 0}\tt^\beta_{q+1}(\Omega^\top)^{\alpha,p}_{\beta,q}\stackrel{\scriptstyle\text{eq. \eqref{eq:string for vtop}}}{=}\\
=&-\sum_{r\ge 0}\frac{\d(\Omega^\top)^{\alpha,p}_{\gamma,r}}{\d (v^\top)^1}\tt^\gamma_r+\sum_{q\ge 0}\tt^\beta_{q+1}(\Omega^\top)^{\alpha,p}_{\beta,q}\stackrel{\scriptstyle\text{eq. \eqref{eq:string for Omegapq}}}{=}\\
=&-\sum_{r\ge 0}(\Omega^\top)^{\alpha,p}_{\gamma,r}\tt^\gamma_{r+1}-\sum_{r\ge 0}(\Omega^\top)^{\alpha,p-1}_{\gamma,r}\tt^\gamma_r+\sum_{q\ge 0}\tt^\beta_{q+1}(\Omega^\top)^{\alpha,p}_{\beta,q}=\\
=&-\mcF^{\alpha,p-1}.
\end{align*}
2. By the first part of the proposition, $\frac{\d}{\d t^\beta_0}\left(\mcF^{\alpha,0}|_{t^*_{\ge 1}=0}\right)=\Omega^{\alpha,0}_{\beta,0}$. Since $\frac{\d\Omega^{\alpha,0}_{\beta,0}}{\d t^\gamma}=c^\alpha_{\beta\gamma}$, the second part of the proposition is also proved.
\end{proof}

\begin{remark}
Consider a Frobenius manifold, its calibration and the associated flat F-manifold. Then the functions $\mcF^{\alpha,p}$ are related to the descendent potential $\mcF$ of the Frobenius manifold by $\mcF^{\alpha,p}=\eta^{\alpha\mu}\frac{\d\mcF}{\d t^\mu_p}$.  
\end{remark}

Consider a conformal Frobenius manifold together with a calibration and a solution $F^o$ of the open WDVV equations, satisfying properties~\eqref{eq:unit condition for Fo},~\eqref{eq:homogeneity for Fo}, also with a calibration. We have the associated flat F-manifold with the vector potential $\left(\eta^{1\mu}\frac{\d F}{\d t^\mu},\ldots,\eta^{N\mu}\frac{\d F}{\d t^\mu},F^o\right)$. Immediately from the definitions and Lemma~\ref{lemma:functions Phi and calibrations} we see that if $(\mcF^{1,p},\ldots,\mcF^{N+1,p})$, $p\ge 0$, are the decendent vector potentials of this flat F-manifold, then $\mcF^{\alpha,p}=\eta^{\alpha\mu}\frac{\d\mcF}{\d t^\mu_p}$, $1\le\alpha\le N$, and $\mcF^{N+1,0}=\mcF^o$. Therefore, Lemma~\ref{lemma:open descendent potential} follows from Proposition~\ref{proposition:properties of descendent vector potential} and equation~\eqref{eq:TRR for Omega^p_q}.

\subsection{Virasoro constraints}

We consider a conformal flat F-manifold, given by a vector potential $(F^1,\ldots,F^N)$, $F^\alpha\in\mcO(M)$, and an Euler vector field~\eqref{eq:Euler vector field}, its calibration, described by matrices $\Omega^d_0$, $d\ge -1$, and $R_n$, $n\ge 1$, and the associated descendent vector potentials $(\mcF^{1,p},\ldots,\mcF^{N,p})$, $p\ge 0$.

Recall that $R=\sum_{i\ge 1}R_i$. Let $\lambda$ be a complex parameter and define
$$
\mu^\alpha:=q^\alpha+\lambda-\frac{3}{2},\qquad \mu:=\diag(\mu^1,\ldots,\mu^N).
$$
Define the following expressions, depending on the parameter $\lambda$:
\begin{align*}
A^\alpha_m:=&\sum_{d_1\ge -1,\,d_2\in\mbZ}(-1)^{d_2+1}\left([P_m(\mu-d_2,R)]_{m-1-d_1-d_2}\right)^\gamma_\mu(\Omega^\top)^{\alpha,0}_{\gamma,d_1}\mcF^{\mu,d_2},\quad 1\le\alpha\le N,\quad m\ge -1.
\end{align*}

\begin{proposition}\label{proposition:Virasoro for F-man}
We have $A^\alpha_m=0$, $1\le\alpha\le N$, $m\ge -1$.
\end{proposition}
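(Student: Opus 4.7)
The plan is to proceed by induction on $m \ge -1$, handling the base case $m=-1$ directly via the string equation for the descendent vector potentials, and the inductive step via the recursive definition of $P_m(A,R)$ together with the homogeneity relation \eqref{eq:homogeneity for Omega^0_d} of the calibration.

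For the base case $m=-1$, since $P_{-1}(\mu-d_2,R)=\Id$ one has $[P_{-1}(\mu-d_2,R)]_p=\delta_{p,0}\,\Id$, so the sum defining $A^\alpha_{-1}$ collapses to the diagonal $d_1+d_2=-2$. Using $(\Omega^\top)^{\alpha,0}_{\mu,-1}=\delta^\alpha_\mu$ together with the convention $\mcF^{\mu,d_2}=(-1)^{d_2+1}\tt^\mu_{-d_2-1}$ for $d_2<0$, a short computation yields
\begin{equation*}
A^\alpha_{-1}=\tt^\alpha_0+\sum_{d_1\ge 0}(\Omega^\top)^{\alpha,0}_{\mu,d_1}\,\tt^\mu_{d_1+1},
\end{equation*}
which vanishes by the string equation \eqref{eq:string for Falphap} applied to $\mcF^{\alpha,0}$, after using the identification $\frac{\d\mcF^{\alpha,0}}{\d t^\beta_q}=(\Omega^\top)^{\alpha,0}_{\beta,q}$ from \eqref{eq:derivative of Falphap}.

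For the inductive step, assuming $A^\alpha_m=0$, I would write $A^\alpha_{m+1}$ using
\begin{equation*}
P_{m+1}(\mu-d_2,R)=R\,P_m(\mu-d_2,R)+P_m(\mu-d_2,R)\bigl(\mu-d_2+m+\tfrac{1}{2}\bigr),
\end{equation*}
and split the result into two groups of contributions. The first, coming from $R\cdot P_m$, shifts the filtration degree $[\,\cdot\,]_p$ by $n$ for each $R_n$ and, after relabelling, recombines into an $A^\alpha_m$-type expression plus boundary terms that vanish for index-range reasons. The second, coming from $P_m\cdot(\mu-d_2+m+\tfrac{1}{2})$, decomposes further: the scalar $m+\tfrac{1}{2}$ part is absorbed into $A^\alpha_m$; the $\mu$ part is evaluated using the homogeneity relation \eqref{eq:homogeneity for Omega^0_d}, which expresses the Euler field acting on $\Omega^0_{d_1}$ in terms of $[\Omega^0_{d_1},Q]$ plus lower-order $R_i$ corrections; and the $-d_2$ part is converted, via the string equation \eqref{eq:string for Falphap}, into a $\tt^\beta_{q+1}\frac{\d}{\d t^\beta_q}$-type expression that again collapses by the induction hypothesis.

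The main obstacle will be the bookkeeping of the degree filtration $[\,\cdot\,]_p$ through the recursion: each passage of a factor $R_n$ across the bracket and each application of \eqref{eq:homogeneity for Omega^0_d} rebalances the index $m-1-d_1-d_2$, and one must control how the commutator $[\Omega^0_{d_1},Q]$ interacts with the diagonal operator $\mu+k$ that appears inside $P_m$. A cleaner route is to package $A^\alpha_m$ as the coefficient of $z^{m+1}$ in a generating series built from the calibration matrix $\sum_{d\ge -1}\Omega^{d-1}_0 z^d$: the recursion for $P_m$ then becomes a first-order ODE in $z$, and the desired Virasoro identity should drop out of the flatness of the connection $\tnabla^\lambda$ used in the proof of Proposition~\ref{proposition:calibration of conformal F-man}, after specialising to the solution $(v^\top)^\gamma$ of the principal hierarchy.
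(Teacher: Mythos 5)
Your base case $m=-1$ is fine and coincides with the paper's computation, but your inductive step has a genuine gap. Expanding $A^\alpha_{m+1}$ via the recursion $P_{m+1}=R\,P_m+P_m(\mu-d_2+m+\tfrac12)$ and using the homogeneity relation~\eqref{eq:homogeneity for Omega^0_d} does \emph{not} re-express $A^\alpha_{m+1}$ in terms of $A^\alpha_m$: because the Euler field acting on $\Omega^0_d$ produces $x$-derivatives once one restricts to the solution $(v^\top)^\gamma$, what actually comes out of this manipulation is a \emph{differential} relation, namely $\mcB A_m=\d_x A_{m+1}$ up to an extra term (Lemma~\ref{lemma:recursion for A_m}), and the extra term does not vanish ``for index-range reasons'' — its vanishing is the nontrivial algebraic identity $C_{m,n}=0$ of Lemma~\ref{lemma:identity for F-man}, proved by a separate induction using~\eqref{eq:upper-lower relation} and~\eqref{eq:homogeneity for Omega^0_d}. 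Consequently, the induction hypothesis $A^\alpha_m=0$ only yields $\d_x A^\alpha_{m+1}=0$ (and, via the string-type relation of Lemma~\ref{lemma:string for Aalpham}, $\sum_{d\ge 0}\tt^\gamma_{d+1}\frac{\d A^\alpha_{m+1}}{\d t^\gamma_d}=0$); this determines $A^\alpha_{m+1}$ only up to ``constants of integration,'' so you cannot conclude $A^\alpha_{m+1}=0$ by bookkeeping alone.

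Two further ingredients, absent from your plan, are what close the induction in the paper: the initial-condition vanishing $A^\alpha_{m+1}\big|_{t^*_{\ge 1}=0}=0$, which again rests on $C_{m,n}=0$ (evaluated with a shifted $\lambda$), and a uniqueness statement — \cite[Lemma 3.1]{Get99} — asserting that a formal power series annihilated by the string vector field (in the shifted variables $\hatt^\beta_d$) and vanishing at the origin is identically zero. Your alternative ``cleaner route'' through the flatness of $\tnabla^\lambda$ and a generating series in $z$ is only a heuristic: the flat-connection argument is what produces the calibration and the identity~\eqref{eq:homogeneity for Omega^0_d} in the first place, but passing from those pointwise identities on the flat F-manifold to the statement about the descendent vector potentials evaluated at $(v^\top)^\gamma$ is exactly where the string equations~\eqref{eq:string for vtop},~\eqref{eq:string for Falphap}, the identity $C_{m,n}=0$, and Getzler's uniqueness lemma are needed, so it does not bypass the missing steps.
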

\begin{remark}
For a Frobenius manifold the expressions $A^\alpha_m$ have the following interpretation:
$$
A^\alpha_m|_{\lambda=\frac{3-\delta}{2}}=\eta^{\alpha\mu}\frac{\d}{\d t^\mu_0}\left[\Coef_{\eps^{-2}}\left(\frac{L_m e^{\eps^{-2}\mcF}}{e^{\eps^{-2}\mcF}}\right)\right],
$$
where $L_m$ are the Virasoro operators described in Section~\ref{subsection:Virasoro constraints}. Therefore, we consider Proposition~\ref{proposition:Virasoro for F-man} as a generalization of the Virasoro constraints~\eqref{eq:Virasoro constraints} for an arbitrary conformal flat F-manifold.
\end{remark}
\begin{proof}[Proof of Proposition~\ref{proposition:Virasoro for F-man}]
During the proof of the proposition, for the sake of shortness, we will denote the functions~$(v^\top)^\alpha$ and $(\Omega^\top)^{\alpha,p}_{\beta,q}$ by $v^\alpha$ and $\Omega^{\alpha,p}_{\beta,q}$, respectively. We will also denote the function $c^\alpha_{\beta\gamma}|_{t^\theta\mapsto(v^\top)^\theta}$ by $c^\alpha_{\beta\gamma}$ and the function $(1-q^\alpha)(v^\top)^\alpha+r^\alpha$ by~$E^\alpha$.

Define an operator $\mcB=(\mcB^\alpha_\beta)$, depending on the parameter $\lambda$, by
$$
\mcB^\alpha_\beta:=E^\mu c^\alpha_{\mu\beta}\d_x+v^\gamma_x(\lambda-1+q^\beta)c^\alpha_{\gamma\beta}=E^\mu c^\alpha_{\mu\beta}\d_x+v^\gamma_x\left(\mu^\beta+\frac{1}{2}\right)c^\alpha_{\gamma\beta}.
$$
\begin{remark}
For a Frobenius manifold the operator $\mcB|_{\lambda=\frac{3-\delta}{2}}\d_x^{-1}$ coincides with the recursion operator $\mathcal{R}$ from the paper~\cite[equation~(3.36)]{DZ99}.
\end{remark}
We begin with the following lemma.
\begin{lemma}\label{lemma:recursion for Omega^0_d}
We have $\mcB\Omega^0_d=\d_x\left(\Omega^0_{d+1}\left(\mu+d+\frac{3}{2}\right)+\sum_{i=1}^{d+2}\Omega^0_{d+1-i}R_i\right)$, $d\ge -1$.
\end{lemma}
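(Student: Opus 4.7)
The plan is to expand both sides using the two structural equations satisfied by the calibration, namely~\eqref{eq:TRR for Omega^0_d} and~\eqref{eq:homogeneity for Omega^0_d} evaluated at $t^\gamma=v^\gamma$ (in the short-hand notation of the enclosing proof), and to reduce the lemma to a pure matrix identity in $\mcO(M)$ that is essentially a restatement of the homogeneity property.

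First I would compute the right-hand side of the lemma. Since every matrix $\Omega^0_k$ is understood to be evaluated at $t^\gamma=v^\gamma$, the chain rule combined with~\eqref{eq:TRR for Omega^0_d} gives $\d_x\Omega^0_k=v^\gamma_x C_\gamma\Omega^0_{k-1}$ for $k\ge 0$, and $\d_x\Omega^0_{-1}=0$. Applying this term by term (the $i=d+2$ summand drops because $\d_x\Omega^0_{-1}=0$) yields
\[
\d_x\!\left(\Omega^0_{d+1}\!\left(\mu+d+\tfrac{3}{2}\right)+\sum_{i=1}^{d+2}\Omega^0_{d+1-i}R_i\right) = v^\gamma_x C_\gamma\!\left(\Omega^0_d\!\left(\mu+d+\tfrac{3}{2}\right)+\sum_{i=1}^{d+1}\Omega^0_{d-i}R_i\right).
\]
Next I would unfold the definition of $\mcB$; using that the matrices $C_\mu$ commute pairwise (associativity of $\circ$), one obtains
\[
\mcB\,\Omega^0_d = v^\gamma_x C_\gamma\!\left(E^\theta C_\theta\,\Omega^0_{d-1} + \left(\mu+\tfrac{1}{2}\right)\Omega^0_d\right).
\]

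Thus it suffices to verify the matrix identity in $\mcO(M)$
\[
E^\theta C_\theta\,\Omega^0_{d-1} + \left(\mu+\tfrac{1}{2}\right)\Omega^0_d = \Omega^0_d\!\left(\mu+d+\tfrac{3}{2}\right) + \sum_{i=1}^{d+1}\Omega^0_{d-i}R_i,
\]
for after left-multiplication by $v^\gamma_x C_\gamma|_{t=v}$ this produces the claimed equality. Rearranging the $\mu$-terms recasts the identity as
\[
E^\theta C_\theta\,\Omega^0_{d-1} = (d+1)\Omega^0_d + [\Omega^0_d,\mu] + \sum_{i=1}^{d+1}\Omega^0_{d-i}R_i.
\]
Since $\mu-Q$ is a scalar matrix, $[\Omega^0_d,\mu]=[\Omega^0_d,Q]$, and by~\eqref{eq:TRR for Omega^0_d} the left-hand side equals $E^\theta\frac{\d\Omega^0_d}{\d t^\theta}$, so the displayed identity is precisely~\eqref{eq:homogeneity for Omega^0_d}.

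I do not expect any serious conceptual obstacle---the argument is essentially bookkeeping---but two subtleties must be monitored. The spectral parameter $\lambda$ hidden inside $\mu$ must cancel, and it does precisely because only the commutator $[\Omega^0_d,\mu]$ enters the final identity, so $\mu$ may be replaced by $Q$. Second, the base case $d=-1$ should be checked directly: since $\Omega^0_{-1}=\Id$ is constant, $\mcB\,\Id = v^\gamma_x C_\gamma(\mu+\tfrac{1}{2})$, matching $\d_x(\Omega^0_0(\mu+\tfrac{1}{2})+R_1)=v^\gamma_x C_\gamma(\mu+\tfrac{1}{2})$.
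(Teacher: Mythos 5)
Your proof is correct and rests on exactly the same two ingredients as the paper's argument, namely the recursion \eqref{eq:TRR for Omega^0_d} and the homogeneity property \eqref{eq:homogeneity for Omega^0_d} of the calibration; the only cosmetic difference is that the paper rewrites $\mcB$ by commuting $\d_x$ to the left (using the Euler scaling of $c^\alpha_{\beta\gamma}$) and then applies \eqref{eq:homogeneity for Omega^0_d} at level $d+1$, whereas you differentiate the right-hand side, factor out $v^\gamma_x C_\gamma$ on the left via the commutativity of the matrices $C_\gamma$, and apply the homogeneity at level $d$. Your treatment of the boundary terms (the $i=d+2$ summand, the base case $d=-1$) and of the $\lambda$-dependence hidden in $\mu$, which reduces to $[\Omega^0_d,\mu]=[\Omega^0_d,Q]$, is also correct.
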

\begin{proof}
Note that $\mcB^\alpha_\beta=\d_x\cdot(E^\mu c^\alpha_{\mu\beta})+v^\gamma_x(\lambda+q^\alpha-2)c^\alpha_{\gamma\beta}$. This follows from the property $E^\mu\frac{\d c^\alpha_{\beta\gamma}}{\d v^\mu}=(q^\beta+q^\gamma-q^\alpha)c^\alpha_{\beta\gamma}$ which is equivalent to the homogeneity condtition~\eqref{eq:homogeneity for F-man}. We then compute $\mcB\Omega^0_d=\d_x\left(E^\mu\frac{\d\Omega^0_{d+1}}{\d v^\mu}+(\lambda+Q-2)\Omega^0_{d+1}\right)$, that, by equation~\eqref{eq:homogeneity for Omega^0_d}, implies the lemma.
\end{proof}

Denote by $A_m$ the column vector $(A^1_m,\ldots,A^N_m)$.
\begin{lemma}\label{lemma:recursion for A_m}
We have 
\begin{gather*}
\mcB A_m-\d_x A_{m+1}=\left(\frac{1}{2}-\mu\right)\sum_{d_1,d_2\ge -1}(-1)^{d_2+1}\Omega^0_{d_1}[P_m(\mu-d_2,R)]_{m-d_1-d_2}\Omega^{d_2}_0\cdot e,\quad m\ge -1,
\end{gather*}
where $e$ denotes the unit vector $(1,0,\ldots,0)$.
\end{lemma}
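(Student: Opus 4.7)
The plan is to expand $\mcB A_m$ using a Leibniz-type rule, match the total-derivative part to $\d_x A_{m+1}$, and identify the residual as the stated error. Following the convention of the ongoing proof I write $\Omega^{\alpha,p}_{\beta,q}$ in place of $(\Omega^\top)^{\alpha,p}_{\beta,q}$, etc.

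First, noting that $\mcB=\mcU\d_x+v^\gamma_x C_\gamma(\mu+\tfrac{1}{2})$ with $\mcU^\alpha_\beta:=E^\mu c^\alpha_{\mu\beta}$, for a matrix $M(v)$ and a vector $Y$ the Leibniz rule gives
\[
\mcB(MY)=(\mcB M)\,Y+\mcU M\,\d_x Y.
\]
Applied to $A_m=\sum(-1)^{d_2+1}\Omega^0_{d_1}[P_m(\mu-d_2,R)]_{m-1-d_1-d_2}\mcF^{\cdot,d_2}$ (and using that $[P_m(\mu-d_2,R)]_{m-1-d_1-d_2}$ is constant in $v$), this splits $\mcB A_m$ into a part containing $\mcB\Omega^0_{d_1}$ and a part containing $\mcU\Omega^0_{d_1}$. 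Lemma~\ref{lemma:recursion for Omega^0_d} writes the first as a $\d_x$ of an explicit bracket, which I then pull outside the product with $\mcF^{\cdot,d_2}$ by a Leibniz/integration-by-parts step, producing a bulk total-derivative term together with a correction proportional to $\d_x\mcF^{\cdot,d_2}$.

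The key algebraic step is to recognize the bulk as $\d_x A_{m+1}$. After the reindexing $d_1\to d_1-1$ on the $\Omega^0_{d_1+1}(\mu+d_1+\tfrac{3}{2})$ piece and $d_1\to d_1+i-1$ on the $\Omega^0_{d_1+1-i}R_i$ pieces supplied by Lemma~\ref{lemma:recursion for Omega^0_d}, the crucial ingredient is the grading identity
\[
(\mu+d_1+\tfrac{1}{2})[P_m(\mu-d_2,R)]_{m-d_1-d_2}=[P_m(\mu-d_2,R)]_{m-d_1-d_2}(\mu-d_2+m+\tfrac{1}{2}),
\]
which holds because $[\,\cdot\,]_k$ retains only entries $(\alpha,\beta)$ with $q^\alpha-q^\beta=k$ and $\mu-Q$ is a scalar. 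Combined with the recursion $P_{m+1}(\mu-d_2,R)=RP_m(\mu-d_2,R)+P_m(\mu-d_2,R)(\mu-d_2+m+\tfrac{1}{2})$ from the definition of $P_m$, the bulk collapses to $\d_x A_{m+1}$ up to a single boundary contribution at $d_1=-1$, equal to $(\mu-\tfrac{1}{2})\sum_{d_2}(-1)^{d_2+1}[P_m(\mu-d_2,R)]_{m+1-d_2}\mcF^{\cdot,d_2}$, whose $\d_x$ acts only on $\mcF^{\cdot,d_2}$.

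For the remaining non-derivative terms I compute $\mcU\Omega^0_{d_1}$. Using the calibration property $\d\Omega^0_{d+1}/\d t^\gamma=C_\gamma\Omega^0_d$, one has $\mcU\Omega^0_{d_1}=E^\theta\d\Omega^0_{d_1+1}/\d t^\theta$, and the homogeneity relation~\eqref{eq:homogeneity for Omega^0_d} then gives
\[
\mcU\Omega^0_{d_1}=\Omega^0_{d_1+1}(\mu+d_1+2)-\mu\,\Omega^0_{d_1+1}+\sum_{i=1}^{d_1+2}\Omega^0_{d_1+1-i}R_i.
\]
Subtracting the bracket produced by Lemma~\ref{lemma:recursion for Omega^0_d} leaves exactly $(\tfrac{1}{2}-\mu)\Omega^0_{d_1+1}$, so after a shift $d_1\to d_1-1$ the non-derivative remainder becomes $(\tfrac{1}{2}-\mu)\sum_{d_1\ge 0,d_2}(-1)^{d_2+1}\Omega^0_{d_1}[P_m(\mu-d_2,R)]_{m-d_1-d_2}\d_x\mcF^{\cdot,d_2}$. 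The $d_1=-1$ entry of the asserted error is precisely the boundary term from the previous paragraph (since $\Omega^0_{-1}=\Id$). Finally the substitution $\d_x\mcF^{\cdot,d_2}=\Omega^{d_2}_0\cdot e$ from part~1 of Proposition~\ref{proposition:properties of descendent vector potential} (together with the convention $\Omega^{d_2}_0=0$ for $d_2<-1$, consistent with $\d_x\mcF^{\cdot,d_2}=0$ there) delivers the stated formula.

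The main obstacle is the simultaneous bookkeeping of the various index shifts and summation ranges, in particular ensuring that the boundary term produced while assembling $\d_x A_{m+1}$ glues correctly with the shifted non-derivative remainder to produce a single sum over $d_1,d_2\ge -1$; the purely algebraic ingredients (the grading identity, the recursion for $P_m$, and the homogeneity and calibration identities for $\Omega^0_d$) are then direct.
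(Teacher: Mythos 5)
Your proposal is correct and takes essentially the same route as the paper's proof: the same splitting of $\mcB A_m$ into the part where $\mcB$ acts on $\Omega^0_{d_1}$ and the part $\mcU\,\d_x$ acting on $\mcF^{\cdot,d_2}$, the same use of Lemma~\ref{lemma:recursion for Omega^0_d}, the grading identity for $[\cdot]_k$, the recursion defining $P_{m+1}$, the homogeneity relation~\eqref{eq:homogeneity for Omega^0_d}, and the substitution $\d_x\mcF^{\cdot,d_2}=\Omega^{d_2}_0\cdot e$. The only difference is bookkeeping: you integrate by parts to exhibit the bulk as $\d_x A_{m+1}$ minus a $d_1=-1$ boundary piece, whereas the paper directly matches the $\d_x\Omega$-terms of $\mcB A_m$ with those of $\d_x A_{m+1}$ and then compares the remaining terms; the two arrangements are equivalent and all your intermediate identities check out.
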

\begin{proof}
We compute
\begin{align}
\mcB^\alpha_\mu A^\mu_m=&\sum_{d_1\ge -1,\,d_2\in\mbZ}(-1)^{d_2+1}\left([P_m(\mu-d_2,R)]_{m-1-d_1-d_2}\right)^\gamma_\nu\mcB^\alpha_\mu\Omega^{\mu,0}_{\gamma,d_1}\cdot\mcF^{\nu,d_2}\notag\\
&+\sum_{d\ge -1}\left([P_m(\mu+1,R)]_{m-d}\right)^\gamma_1E^\theta\frac{\d\Omega^{\alpha,0}_{\gamma,d+1}}{\d v^\theta}\notag\\
&+\sum_{d_1\ge -1,\,d_2\ge 0}(-1)^{d_2+1}\left([P_m(\mu-d_2,R)]_{m-1-d_1-d_2}\right)^\gamma_\nu E^\theta\frac{\d\Omega^{\alpha,0}_{\gamma,d_1+1}}{\d v^\theta}\Omega^{\nu,d_2}_{1,0}=\notag\\
=&\sum_{d_1\ge -1,\,d_2\in\mbZ}(-1)^{d_2+1}\left([P_m(\mu-d_2,R)]_{m-1-d_1-d_2}\right)^\gamma_\nu\mcB^\alpha_\mu\Omega^{\mu,0}_{\gamma,d_1}\cdot\mcF^{\nu,d_2}\label{eq:Virasoro for F-man.,A1}\\
&+\sum_{d_1,d_2\ge -1}(-1)^{d_2+1}\left([P_m(\mu-d_2,R)]_{m-1-d_1-d_2}\right)^\gamma_\nu E^\theta\frac{\d\Omega^{\alpha,0}_{\gamma,d_1+1}}{\d v^\theta}\Omega^{\nu,d_2}_{1,0}.\label{eq:Virasoro for F-man.,A2}
\end{align}
On the other hand, we have
\begin{align}
\d_x A^\alpha_{m+1}=&\sum_{d_1\ge 0,\,d_2\in\mbZ}(-1)^{d_2+1}\left([P_{m+1}(\mu-d_2,R)]_{m-d_1-d_2}\right)^\gamma_\mu\d_x\Omega^{\alpha,0}_{\gamma,d_1}\cdot\mcF^{\mu,d_2} \label{eq:Virasoro for F-man.,B1}\\
&+\sum_{d_1,d_2\ge -1}(-1)^{d_2+1}\left([P_{m+1}(\mu-d_2,R)]_{m-d_1-d_2}\right)^\gamma_\mu \Omega^{\alpha,0}_{\gamma,d_1}\Omega^{\mu,d_2}_{1,0}. \label{eq:Virasoro for F-man.,B2}
\end{align}

Let us show that the expression in line~\eqref{eq:Virasoro for F-man.,A1} is equal to the expression in line~\eqref{eq:Virasoro for F-man.,B1}. Using Lemma~\ref{lemma:recursion for Omega^0_d}, we rewrite the first one as follows:
\begin{align*}
&\sum_{d_1\ge -1,\,d_2\in\mbZ}(-1)^{d_2+1}\mcF^{\nu,d_2}\left([P_m(\mu-d_2,R)]_{m-1-d_1-d_2}\right)^\gamma_\nu\d_x\left(\left(\mu^\gamma+d_1+\frac{3}{2}\right)\Omega^{\alpha,0}_{\gamma,d_1+1}+\right.\\
&\hspace{11.5cm}\left.+\sum_{i=1}^{d_1+1}\Omega^{\alpha,0}_{\theta,d_1+1-i}(R_i)^\theta_\gamma\right)=\\
=&\sum_{d_1\ge -1,\,d_2\in\mbZ}(-1)^{d_2+1}\mcF^{\nu,d_2}\left(\left[P_m(\mu-d_2,R)\left(\mu-d_2+m+\frac{1}{2}\right)\right]_{m-1-d_1-d_2}\right)^\gamma_\nu\d_x\Omega^{\alpha,0}_{\gamma,d_1+1}\\
&+\sum_{d_1\ge -1,\,d_2\in\mbZ}\sum_{i=1}^{d_1+1}(-1)^{d_2+1}\mcF^{\nu,d_2}([R_iP_m(\mu-d_2,R)]_{m-1-d_1-d_2+i})^\theta_\nu\d_x\Omega^{\alpha,0}_{\theta,d_1+1-i}=\\
=&\sum_{d_1\ge -1,\,d_2\in\mbZ}(-1)^{d_2+1}\mcF^{\nu,d_2}\left(\left[P_m(\mu-d_2,R)\left(\mu-d_2+m+\frac{1}{2}\right)+\right.\right.\\
&\hspace{8.35cm}+R P_m(\mu-d_2,R)\bigg]_{m-1-d_2-d_2}\bigg)^\gamma_\nu\d_x\Omega^{\alpha,0}_{\gamma,d_1+1}=\\
=&\sum_{d_1\ge -1,\,d_2\in\mbZ}(-1)^{d_2+1}\mcF^{\nu,d_2}([P_{m+1}(\mu-d_2,R)]_{m-1-d_1-d_2})^\gamma_\nu\d_x\Omega^{\alpha,0}_{\gamma,d_1+1},
\end{align*}
and we see that the last expression coincides with the expression in line~\eqref{eq:Virasoro for F-man.,B1}.

Using equation~\eqref{eq:homogeneity for Omega^0_d}, we transform the expression in line~\eqref{eq:Virasoro for F-man.,A2}:
\begin{align*}
&\sum_{d_1\ge 0,\,d_2\ge -1}(-1)^{d_2+1}\left([P_m(\mu-d_2,R)]_{m-d_1-d_2}\right)^\gamma_\nu\bigg((\mu^\gamma-\mu^\alpha+d_1+1)\Omega^{\alpha,0}_{\gamma,d_1}+\\
&\hspace{10cm}\left.+\sum_{i=1}^{d_1+1}\Omega^{\alpha,0}_{\theta,d_1-i}(R_i)^\theta_\gamma\right)\Omega^{\nu,d_2}_{1,0}=\\
=&\sum_{d_1\ge 0,\,d_2\ge -1}(-1)^{d_2+1}\left([P_m(\mu-d_2,R)(\mu-\mu^\alpha+m-d_2+1)]_{m-d_1-d_2}\right)^\gamma_\nu\Omega^{\alpha,0}_{\gamma,d_1}\Omega^{\nu,d_2}_{1,0}+\\
&+\sum_{d_1\ge 0,\,d_2\ge -1}\sum_{i=1}^{d_1+1}(-1)^{d_2+1}\left([R_iP_m(\mu-d_2,R)]_{m-d_1-d_2+i}\right)^\gamma_\nu\Omega^{\alpha,0}_{\gamma,d_1-i}\Omega^{\nu,d_2}_{1,0}=\\
=&\left(\frac{1}{2}-\mu^\alpha\right)\sum_{d_1\ge 0,\,d_2\ge -1}(-1)^{d_2+1}\left([P_m(\mu-d_2,R)]_{m-d_1-d_2}\right)^\gamma_\nu\Omega^{\alpha,0}_{\gamma,d_1}\Omega^{\nu,d_2}_{1,0}\\
&+\sum_{d_1\ge 0,\,d_2\ge -1}(-1)^{d_2+1}\left(\left[P_m(\mu-d_2,R)\left(\mu+m-d_2+\frac{1}{2}\right)\right]_{m-d_1-d_2}\right)^\gamma_\nu\Omega^{\alpha,0}_{\gamma,d_1}\Omega^{\nu,d_2}_{1,0}\\
&+\sum_{d_1,d_2\ge -1}(-1)^{d_2+1}\left([RP_m(\mu-d_2,R)]_{m-d_1-d_2}\right)^\gamma_\nu\Omega^{\alpha,0}_{\gamma,d_1}\Omega^{\nu,d_2}_{1,0}.
\end{align*}
On the other hand, the expression in line~\eqref{eq:Virasoro for F-man.,B2} is equal to
\begin{align*}
&\sum_{d_1,d_2\ge -1}(-1)^{d_2+1}\left(\left[P_m(\mu-d_2,R)\left(\mu+m-d_2+\frac{1}{2}\right)\right]_{m-d_1-d_2}\right)^\gamma_\nu\Omega^{\alpha,0}_{\gamma,d_1}\Omega^{\nu,d_2}_{1,0}\\
&+\sum_{d_1,d_2\ge -1}(-1)^{d_2+1}\left([RP_m(\mu-d_2,R)]_{m-d_1-d_2}\right)^\gamma_\nu\Omega^{\alpha,0}_{\gamma,d_1}\Omega^{\nu,d_2}_{1,0}.
\end{align*}

As a result, we get
\begin{align*}
\mcB^\alpha_\mu A^\mu_m-\d_x A^\alpha_{m+1}=&\left(\frac{1}{2}-\mu^\alpha\right)\sum_{d_1\ge 0,\,d_2\ge -1}(-1)^{d_2+1}\left([P_m(\mu-d_2,R)]_{m-d_1-d_2}\right)^\gamma_\nu\Omega^{\alpha,0}_{\gamma,d_1}\Omega^{\nu,d_2}_{1,0}\\
&-\sum_{d_2\ge -1}(-1)^{d_2+1}\left(\left[P_m(\mu-d_2,R)\left(\mu+m-d_2+\frac{1}{2}\right)\right]_{m+1-d_2}\right)^\alpha_\nu\Omega^{\nu,d_2}_{1,0}=\\
=&\left(\frac{1}{2}-\mu^\alpha\right)\sum_{d_1,d_2\ge -1}(-1)^{d_2+1}\left([P_m(\mu-d_2,R)]_{m-d_1-d_2}\right)^\gamma_\nu\Omega^{\alpha,0}_{\gamma,d_1}\Omega^{\nu,d_2}_{1,0},
\end{align*}
as required.
\end{proof}

For $m\ge n\ge -1$ denote
$$
C_{m,n}:=\sum_{d_1,d_2\ge -1}(-1)^{d_2+1}\Omega^0_{d_1}[P_n(\mu-d_2,R)]_{m-d_1-d_2}\Omega^{d_2}_0.
$$
\begin{lemma}\label{lemma:identity for F-man}
We have $C_{m,n}=0$, $m\ge n\ge -1$.
\end{lemma}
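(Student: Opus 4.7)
My plan is to prove the lemma by induction on $n \ge -1$. For the base case $n = -1$, since $P_{-1}(\mu-d_2, R) = \Id$ and $[\Id]_p = \delta_{p,0}\Id$, the sum collapses to $C_{m,-1} = \sum_{d_1+d_2 = m}(-1)^{d_2+1}\Omega^0_{d_1}\Omega^{d_2}_0$, which after reindexing $a = d_2+1$, $b = d_1+1$ is (up to an overall sign) the coefficient of $z^{m+2}$ in the product $\bigl(\Id + \sum_{d \ge 1}(-1)^d\Omega^0_{d-1}z^d\bigr)\bigl(\Id + \sum_{d \ge 1}\Omega^{d-1}_0 z^d\bigr)$; this product equals $\Id$ by~\eqref{eq:upper-lower relation}, so the coefficient vanishes for $m+2 \ge 1$, i.e.\ for $m \ge -1$.

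For the inductive step I would assume $C_{m', n-1} = 0$ for all $m' \ge n-1$ and all values of $\lambda$ (the statement puts no constraint on $\lambda$, so the inductive hypothesis is automatically $\lambda$-uniform). Inserting the recursion $P_n(\mu - d_2, R) = R\,P_{n-1}(\mu - d_2, R) + P_{n-1}(\mu - d_2, R)(\mu - d_2 + n - \tfrac{1}{2})$, I would split $C_{m,n} = C^{(R)}_{m,n} + C^{(\mu)}_{m,n}$. For $C^{(R)}_{m,n}$, the grading $[Q, R_i] = i R_i$ yields $[R A]_p = \sum_{i \ge 1} R_i [A]_{p-i}$, isolating an $R_i$ between $\Omega^0_{d_1}$ and the graded piece of $P_{n-1}$; the homogeneity identity for $\Omega^0_d$ (derived right after Proposition~\ref{proposition:calibration of conformal F-man}) then allows $\Omega^0_{d_1} R_i$ to be traded for a combination of index shifts in $d_1$ modulo the Euler derivative $E^\theta\partial_{t^\theta}\Omega^0_{d_1}$. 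For $C^{(\mu)}_{m,n}$, since $\mu$ is diagonal it commutes with $[Q, \cdot]$, so the bracket passes freely through right multiplication by $\mu - d_2 + n - \tfrac{1}{2}$; the scalar part $n-\tfrac{1}{2}-d_2$ contributes a $\lambda$-shifted copy of $C_{\cdot, n-1}$, and the $\mu$-part, via $\mu\Omega^{d_2}_0 = \Omega^{d_2}_0\mu - [\Omega^{d_2}_0, Q]$ together with the homogeneity identity for $\Omega^d_0$ from Proposition~\ref{proposition:calibration of conformal F-man}, reduces to shifts in $d_2$ modulo another Euler derivative. After regrouping, the algebraic pieces should reassemble into $C_{m', n-1}$'s (possibly for $\lambda$-shifted $\mu$), all zero by induction; the two Euler-derivative contributions, coming from $[\Omega^0_{d_1}, Q]$ and $[\Omega^{d_2}_0, Q]$ on opposite sides of $[P_{n-1}]$, should cancel pairwise.

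The main obstacle will be the combinatorial bookkeeping of three correction types (index shifts, $R_j$-insertions, and Euler derivatives) coming from the two homogeneity identities, and in particular verifying that the Euler terms cancel with the correct signs. Should the direct manipulation become unwieldy, a safer fallback is to first check $C_{m,n}|_{t = t_{\orig}} = 0$ using the explicit $z$-adic description of $\Omega^{d-1}_0|_{t = t_{\orig}}$ from the proof of Proposition~\ref{proposition:calibration of conformal F-man}, and then use the TRR equations~\eqref{eq:TRR for Omega^d_0} and~\eqref{eq:TRR for Omega^0_d} to show that $C_{m, n}$ satisfies a linear homogeneous first-order system in $t$; the unique solution vanishing at $t_{\orig}$ would then force $C_{m, n} \equiv 0$ on $M$.
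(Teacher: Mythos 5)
Your skeleton (induction on $n$, base case from \eqref{eq:upper-lower relation}, inserting the recursion $P_n=R\,P_{n-1}+P_{n-1}(\mu-d_2+n-\tfrac12)$) is the same as the paper's, and your base case is correct. The gap is in the inductive step. First, the claimed endgame is wrong: the two Euler-derivative contributions do \emph{not} cancel pairwise. Applying \eqref{eq:homogeneity for Omega^0_d} on the left produces, via \eqref{eq:TRR for Omega^0_d}, the term $E^\theta\frac{\d\Omega^0_0}{\d t^\theta}\,C_{m-1,n-1}$ (a left multiple), while applying \eqref{eq:homogeneity for Omega^d_0} on the right produces, via \eqref{eq:TRR for Omega^d_0}, a right multiple of $C_{m-1,n-1}$ with $\lambda$ shifted by $-1$; these vanish separately by the inductive hypothesis, they do not cancel against each other. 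Second, and more importantly, treating the $R$-part and the $\mu$-part by two independent homogeneity applications leaves terms your plan does not account for: from the right you get $[P_{n-1}(\mu-d_2,R)]\,R_i\,\Omega^{d_2-i}_0$, where after reindexing the argument of $P_{n-1}$ is shifted to $\mu-d_2-i$, and from the left you get $(d+1)\Omega^0_d$ (a non-constant factor, so not a multiple of $C_{m,n-1}$) and $[\Omega^0_d,Q]$ terms. Making these recombine is precisely the heart of the proof, and it needs a specific identity, e.g.\ $\sum_{i\ge1}P_{n-1}(\mu-d_2-i,R)\,R_i=R\,P_{n-1}(\mu-d_2,R)$, which shows the $R_i\Omega^{d_2-i}_0$-terms are exactly $-C^{(R)}_{m,n}$, after which the left-homogeneity step is superfluous and one gets $C_{m,n}=C_{m,n-1}(\mu+n+\tfrac12)+C_{m-1,n-1}\big|_{\lambda\mapsto\lambda-1}\,E^\theta\tfrac{\d\Omega^0_0}{\d t^\theta}$ (here your $\lambda$-uniform hypothesis is genuinely needed). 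The paper instead avoids the whole issue by moving the diagonal factor to the left inside the graded bracket, $[X\mu]_p=[(\mu-p)X]_p$, so that a single application of \eqref{eq:homogeneity for Omega^0_d} to $\Omega^0_{d_1}$ makes the generated $\Omega^0_{d_1-i}R_i$ terms cancel $R\,P_{n-1}$ on the nose and yields $C_{m,n}=\left(\mu-m+n-\tfrac32\right)C_{m,n-1}+E^\theta\frac{\d\Omega^0_0}{\d t^\theta}C_{m-1,n-1}$ with no $\lambda$-shift. So your route is repairable, but as written the decisive cancellation is asserted rather than proved, and the asserted mechanism is not the one that occurs.

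The fallback is also unsound. Differentiating with \eqref{eq:TRR for Omega^0_d} and \eqref{eq:TRR for Omega^d_0} gives $\frac{\d C_{m,n}}{\d t^\gamma}=C_\gamma\,C_{m-1,n}-C_{m-1,n}\big|_{\lambda\mapsto\lambda-1}\,C_\gamma$, so the system does not close on the family $\{C_{m,n}\}_{m\ge n}$: for $m=n$ it involves $C_{n-1,n}$, which in general is nonzero --- already for $N=1$, $R=0$ one finds $C_{-1,0}=\Omega^0_0\neq0$, so the statement is genuinely restricted to $m\ge n$ and there is no linear \emph{homogeneous} first-order system for the quantities you want to kill. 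Moreover, verifying $C_{m,n}$ at the base point $t^*_{\orig}$ is not a routine check of an explicit series: it amounts to redoing the same homogeneity manipulation there. The correct propagation mechanism is not an ODE-uniqueness argument but the induction on $n$ via one of the recursions displayed above.
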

\begin{proof}
We proceed by induction on $n$. By equation~\eqref{eq:upper-lower relation}, we have 
$$
C_{m,-1}=\sum_{\substack{d_1,d_2\ge -1\\d_1+d_2=m}}(-1)^{d_2+1}\Omega^0_{d_1}\Omega^{d_2}_0=0,\quad m\ge -1.
$$
Suppose that $m\ge n\ge 0$. We compute
\begin{align*}
&C_{m,n}=\\
=&\sum_{d_1,d_2\ge -1}(-1)^{d_2+1}\Omega^0_{d_1}\left[P_{n-1}(\mu-d_2,R)\left(\mu-d_2+n-\frac{1}{2}\right)+R P_{n-1}(\mu-d_2,R)\right]_{m-d_1-d_2}\Omega^{d_2}_0=\\
=&\sum_{d_1,d_2\ge -1}(-1)^{d_2+1}\Omega^0_{d_1}\left[\left(\mu-m+n+d_1-\frac{1}{2}\right)P_{n-1}(\mu-d_2,R)+R P_{n-1}(\mu-d_2,R)\right]_{m-d_1-d_2}\Omega^{d_2}_0.
\end{align*}
Using property~\eqref{eq:homogeneity for Omega^0_d}, we transform the last expression in the following way:
\begin{align*}
&\left(\mu-m+n-\frac{3}{2}\right)C_{m,n-1}+\sum_{d_1,d_2\ge -1}(-1)^{d_2+1}E^\theta\frac{\d\Omega^0_{d_1}}{\d v^\theta}[P_{n-1}(\mu-d_2,R)]_{m-d_1-d_2}\Omega^{d_2}_0\\
&-\sum_{d_1,d_2\ge -1}\sum_{i=1}^{d_1+1}(-1)^{d_2+1}\Omega^0_{d_1-i}\left[R_iP_{n-1}(\mu-d_2,R)\right]_{m-d_1-d_2+i}\Omega^{d_2}_0\\
&+\sum_{d_1,d_2\ge -1}(-1)^{d_2+1}\Omega^0_{d_1}\left[R P_{n-1}(\mu-d_2,R)\right]_{m-d_1-d_2}\Omega^{d_2}_0.
\end{align*}
One can see that the last two expressions cancel each other and, as a result,
$$
C_{m,n}=\left(\mu-m+n-\frac{3}{2}\right)C_{m,n-1}+E^\theta\frac{\d\Omega^0_0}{\d v^\theta}C_{m-1,n-1},
$$
that, by the induction assumption, is equal to zero. The lemma is proved.
\end{proof}

Lemmas~\ref{lemma:recursion for A_m} and~\ref{lemma:identity for F-man} imply that $\mcB A_m=\d_x A_{m+1}$ for $m\ge -1$. Introduce a differential operator $L:=\sum_{d\ge 0}\tt^\gamma_{d+1}\frac{\d}{\d t^\gamma_d}$.
\begin{lemma}\label{lemma:string for Aalpham}
We have $L A^\alpha_m=(-m-1)A^\alpha_{m-1}$, $m\ge 0$.
\end{lemma}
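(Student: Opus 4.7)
The plan is to apply $L$ directly to $A^\alpha_m$ term by term and to reduce everything to a single sum that, after a reindexing, differs from $A^\alpha_{m-1}$ only by a polynomial identity for the matrices $P_m$.

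First, I will compute how $L$ acts on the two building blocks of $A^\alpha_m$. Using the chain rule and the string equation \eqref{eq:string for vtop}, we get $L(v^\top)^\theta = -\delta^{\theta,1}$. Combining this with the calibration identity $\frac{\partial \Omega^{\alpha,0}_{\gamma,d}}{\partial t^1} = c^\alpha_{1\mu}\Omega^{\mu,0}_{\gamma,d-1} = \Omega^{\alpha,0}_{\gamma,d-1}$, I obtain
\[
L(\Omega^\top)^{\alpha,0}_{\gamma,d_1} = -(\Omega^\top)^{\alpha,0}_{\gamma,d_1-1}, \qquad d_1\ge 0,
\]
and $L(\Omega^\top)^{\alpha,0}_{\gamma,-1}=0$. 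Equation \eqref{eq:string for Falphap} directly yields $L\mathcal F^{\mu,d_2}=-\mathcal F^{\mu,d_2-1}$ for all $d_2\in\mathbb Z$.

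Next, applying Leibniz to the definition of $A^\alpha_m$ and substituting these two identities splits $L A^\alpha_m$ into two sums. Reindexing the first sum by $d_1\mapsto d_1+1$ and the second by $d_2\mapsto d_2+1$, both shift the $P_m$-projection to level $m-2-d_1-d_2$, and one finds
\[
L A^\alpha_m = \sum_{d_1\ge -1,\,d_2\in\mathbb Z}(-1)^{d_2+1}\bigl([P_m(\mu-d_2-1,R)-P_m(\mu-d_2,R)]_{m-2-d_1-d_2}\bigr)^\gamma_\mu(\Omega^\top)^{\alpha,0}_{\gamma,d_1}\mathcal F^{\mu,d_2}.
\]
So everything boils down to comparing $P_m(A,R)$ and $P_m(A-1,R)$.

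The main obstacle — and the only nontrivial computation — is to establish the polynomial identity
\[
P_m(A,R)-P_m(A-1,R)=(m+1)\,P_{m-1}(A,R),\qquad m\ge 0,
\]
as matrix-valued expressions in $A$ and $R$. I would prove this by induction on $m$, using the defining recursion $P_{m+1}(A,R)=R P_m(A,R)+P_m(A,R)(A+m+\tfrac12)$. The base case $m=0$ is a one-line check since $P_0(A,R)=R+A-\tfrac12$. In the inductive step, one writes
\[
P_{m+1}(A,R)-P_{m+1}(A-1,R)=R[P_m(A,R)-P_m(A-1,R)]+[P_m(A,R)-P_m(A-1,R)](A+m-\tfrac12)+P_m(A,R),
\]
applies the inductive hypothesis, and recognises $R P_{m-1}(A,R)+P_{m-1}(A,R)(A+m-\tfrac12)=P_m(A,R)$, obtaining $(m+2)P_m(A,R)$ as required. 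This identity is compatible with the projections $[\,\cdot\,]_p$ since the shift $A\mapsto A-1$ preserves the grading used in $[\,\cdot\,]_p$.

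Finally, substituting the identity with $A=\mu-d_2$ into the expression for $LA^\alpha_m$ produces the factor $-(m+1)$ out front and converts $[P_m(\mu-d_2,R)]_{m-2-d_1-d_2}$ into $[P_{m-1}(\mu-d_2,R)]_{(m-1)-1-d_1-d_2}$, which is exactly the integrand of $A^\alpha_{m-1}$. Hence $LA^\alpha_m=(-m-1)A^\alpha_{m-1}$, as claimed.
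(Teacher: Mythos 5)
Your argument is correct and follows essentially the same route as the paper: apply $L$ via the string equations~\eqref{eq:string for vtop},~\eqref{eq:string for Falphap} and $\frac{\d\Omega^0_d}{\d t^1}=\Omega^0_{d-1}$, reindex the two resulting sums, and conclude from $P_m(\mu-d_2-1,R)-P_m(\mu-d_2,R)=(-m-1)P_{m-1}(\mu-d_2,R)$. The only difference is that you also supply the short induction proving this $P_m$-identity, which the paper states without proof; your inductive step is verified correctly.
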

\begin{proof}
Using formulas~\eqref{eq:string for vtop},~\eqref{eq:string for Falphap} and the formula $\frac{\d\Omega^0_d}{\d v^1}=\Omega^0_{d-1}$, $d\ge 0$, we compute
\begin{align*}
LA^\alpha_m=&-\sum_{d_1\ge 0,\,d_2\in\mbZ}(-1)^{d_2+1}([P_m(\mu-d_2,R)]_{m-1-d_1-d_2})^\gamma_\mu\Omega^{\alpha,0}_{\gamma,d_1-1}\mcF^{\mu,d_2}\\
&-\sum_{d_1\ge -1,\,d_2\in\mbZ}(-1)^{d_2+1}([P_m(\mu-d_2,R)]_{m-1-d_1-d_2})^\gamma_\mu\Omega^{\alpha,0}_{\gamma,d_1}\mcF^{\mu,d_2-1}=\\
=&\sum_{d_1\ge -1,\,d_2\in\mbZ}(-1)^{d_2+1}([P_m(\mu-d_2-1,R)-P_m(\mu-d_2,R)]_{m-2-d_1-d_2})^\gamma_\mu\Omega^{\alpha,0}_{\gamma,d_1}\mcF^{\mu,d_2}.
\end{align*}
Since $P_m(\mu-d_2-1,R)-P_m(\mu-d_2,R)=(-m-1)P_{m-1}(\mu-d_2,R)$, the lemma is proved.
\end{proof}
\begin{lemma}
We have $\left.A^\alpha_m\right|_{t^*_{\ge 1}=0}=0$, $m\ge -1$.
\end{lemma}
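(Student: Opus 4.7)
The plan is to reduce $A^\alpha_m|_{t^*_{\ge 1}=0}$ to a purely matrix-algebraic identity and then derive that identity by a short induction that runs entirely through Lemma~\ref{lemma:identity for F-man} and the polynomial recursion for $P_m$ already exploited in the proof of Lemma~\ref{lemma:string for Aalpham}.

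The first step is to show that under the restriction, each descendent factor collapses to a single entry of a calibration matrix:
\[
\mcF^{\nu,d_2}\bigl|_{t^*_{\ge 1}=0}=\Omega^{\nu,d_2+1}_{1,0}(t_0),\qquad d_2\in\mbZ,
\]
with the natural convention $\Omega^k_0:=0$ for $k\le -2$. This follows from equation~\eqref{eq:string for Falphap} applied with $p=d_2+1$: at $t^*_{\ge 1}=0$ we have $\tt^\beta_{q+1}=-\delta^{\beta,1}\delta_{q,0}$, so the string equation reduces to $\mcF^{\nu,d_2}|_{t^*_{\ge 1}=0}=\frac{\d\mcF^{\nu,d_2+1}}{\d t^1_0}|_{t^*_{\ge 1}=0}$, and equation~\eqref{eq:derivative of Falphap} together with $v^\top|_{t^*_{\ge 1}=0}=t_0$ identifies the right-hand side with $\Omega^{\nu,d_2+1}_{1,0}(t_0)$.

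Plugging this into the definition of $A^\alpha_m$ and relabelling $d_2=e-1$ turns $A^\alpha_m|_{t^*_{\ge 1}=0}$ into the $(\alpha,1)$-entry of the matrix
\[
\tilde C_m:=\sum_{d_1,e\ge -1}(-1)^e\,\Omega^0_{d_1}\,[P_m(\mu-e+1,R)]_{m-d_1-e}\,\Omega^e_0.
\]
Introducing $\tilde C^{(m)}_n:=\sum_{d_1,e\ge -1}(-1)^e\Omega^0_{d_1}[P_n(\mu-e+1,R)]_{m-d_1-e}\Omega^e_0$ for $-1\le n\le m$, one has $\tilde C_m=\tilde C^{(m)}_m$. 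The identity $P_n(\mu-e+1,R)=P_n(\mu-e,R)+(n+1)P_{n-1}(\mu-e+1,R)$, a direct rearrangement of the recursion used in the proof of Lemma~\ref{lemma:string for Aalpham}, splits $\tilde C^{(m)}_n$ as $-C_{m,n}+(n+1)\tilde C^{(m)}_{n-1}$. Since $-1\le n\le m$, Lemma~\ref{lemma:identity for F-man} gives $C_{m,n}=0$, so the recursion simplifies to $\tilde C^{(m)}_n=(n+1)\tilde C^{(m)}_{n-1}$, and iterating yields $\tilde C_m=(m+1)!\,\tilde C^{(m)}_{-1}$. Since $P_{-1}=\Id$, the base case $\tilde C^{(m)}_{-1}$ is (up to sign) exactly $C_{m,-1}$, hence zero by the same lemma.

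The main obstacle will be the very first step, i.e.\ recognising that $\mcF^{\nu,d_2}$ collapses at $t^*_{\ge 1}=0$ all the way down to the single entry $\Omega^{\nu,d_2+1}_{1,0}$ of a calibration matrix; once that substitution is in place, the rest of the argument is a bookkeeping reduction to the matrix identities already proved in Lemmas~\ref{lemma:identity for F-man} and~\ref{lemma:string for Aalpham}.
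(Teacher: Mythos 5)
Your argument is correct, and its decisive step is the same as the paper's: you restrict using $\mcF^{\nu,d_2}\big|_{t^*_{\ge 1}=0}=\Omega^{\nu,d_2+1}_{1,0}$ (obtained from~\eqref{eq:string for Falphap} and~\eqref{eq:derivative of Falphap} at $v^\top|_{t^*_{\ge1}=0}=t_0$), and then everything reduces to the matrix identity of Lemma~\ref{lemma:identity for F-man}. The only genuine difference is the endgame: the paper simply recognizes the restricted sum as $-(C_{m,m})^\alpha_1$ with the free parameter shifted, $\lambda\mapsto\lambda+1$ (so that $\mu$ becomes $\mu+1$), and quotes Lemma~\ref{lemma:identity for F-man} once, whereas you stay at fixed $\lambda$ and instead telescope with the recursion $P_n(\mu-e+1,R)=P_n(\mu-e,R)+(n+1)P_{n-1}(\mu-e+1,R)$ (the same identity as in the proof of Lemma~\ref{lemma:string for Aalpham}), invoking $C_{m,n}=0$ for $n=m,m-1,\ldots,0$ and finishing with $C_{m,-1}=0$; both routes are equally rigorous, the paper's being a one-line identification and yours avoiding the observation that the lemma holds identically in the parameter $\lambda$. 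One small point of hygiene: for $d_2\le -2$ you cannot literally cite~\eqref{eq:derivative of Falphap}, which is stated only for $p\ge 0$; there the collapse $\mcF^{\nu,d_2}\big|_{t^*_{\ge1}=0}=\Omega^{\nu,d_2+1}_{1,0}$ (with $\Omega^{-1}_0=\Id$ and your convention $\Omega^{k}_0=0$ for $k\le -2$) should instead be read off directly from the convention $\mcF^{\alpha,p}=(-1)^{p+1}\tt^\alpha_{-p-1}$ for $p<0$, which gives $\delta^{\nu,1}$ at $d_2=-2$ and $0$ for $d_2\le -3$, exactly as your formula asserts.
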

\begin{proof}
During the proof of this lemma we return to the initial notation, where $\Omega^{\alpha,p}_{\beta,q}$ is a function of $t^1,\ldots,t^N$. Note that, by equations~\eqref{eq:derivative of Falphap} and~\eqref{eq:string for Falphap}, we have $\mcF^{\alpha,p}|_{t^*_{\ge 1}=0}=\Omega^{\alpha,p+1}_{1,0}$, $p\ge 0$. Therefore, we have
\begin{align*}
\left.A^\alpha_m\right|_{t^*_{\ge 1}=0}=&\sum_{d\ge -1}([P_m(\mu+1,R)]_{m-d})^\gamma_\mu t^\mu_0\Omega^{\alpha,0}_{\gamma,d}\\
&-\sum_{d\ge -1}([P_m(\mu+2,R)]_{m+1-d})^\gamma_1\Omega^{\alpha,0}_{\gamma,d}\\
&+\sum_{d_1\ge -1,\,d_2\ge 0}(-1)^{d_2+1}([P_m(\mu-d_2,R)]_{m-1-d_1-d_2})^\gamma_\mu \Omega^{\alpha,0}_{\gamma,d_1}\Omega^{\mu,d_2+1}_{1,0}=\\
=&-\sum_{d_1,d_2\ge -1}(-1)^{d_2+1}([P_m(\mu+1-d_2,R)]_{m-d_1-d_2})^\gamma_\mu \Omega^{\alpha,0}_{\gamma,d_1}\Omega^{\mu,d_2}_{1,0}=\\
=&-(C_{m,m})^\alpha_1\Big|_{\substack{t^*_{\ge 1}=0\\\lambda\mapsto\lambda+1}},
\end{align*}
which, by Lemma \ref{lemma:identity for F-man}, is equal to zero.
\end{proof}

Let us now prove that $A^\alpha_m=0$ by induction on $m$. We have
\begin{gather*}
A^\alpha_{-1}=\sum_{\substack{d\ge -1,\,d_2\in\mbZ\\d_1+d_2=-2}}(-1)^{d_2+1}\Omega^{\alpha,0}_{\gamma,d_1}\mcF^{\gamma,d_2}=t^\alpha_0+\sum_{d\ge 0}\Omega^{\alpha,0}_{\gamma,d}\tt^\alpha_{d+1}=t^\alpha_0+L\mcF^{\alpha,0}\stackrel{\scriptstyle\text{eq. \eqref{eq:string for Falphap}}}{=}0.
\end{gather*}
Suppose that $m\ge 0$. Let us express $A^\alpha_m$ as a power series in the variables $\hatt^\beta_d$ defined by
$$
\hatt^\beta_d:=
\begin{cases}
t^\beta_0-t^\beta_{\orig},&\text{if $d=0$},\\
t^\beta_d,&\text{if $d\ge 1$}.
\end{cases}
$$
From the induction assumption, Lemma~\ref{lemma:string for Aalpham} and the fact that $\mcB A_{m-1}=\d_x A_m$ it follows that
$$
\sum_{d\ge 0}\hatt^\gamma_{d+1}\frac{\d A^\alpha_m}{\d\hatt^\gamma_d}=0.
$$
We also know that $A^\alpha_m|_{\hatt^*_*=0}=0$. By~\cite[Lemma 3.1]{Get99}, this implies that $A^\alpha_m=0$. 
\end{proof}
\begin{remark}
Ezra Getzler has informed us that the proposition can be also proved using the results and the arguments from the papers~\cite{Get99,Get04}.
\end{remark}

\subsection{Proof of Theorem~\ref{theorem:open Virasoro}}\label{subsection:proof of open Virasoro}

Let us apply Proposition~\ref{proposition:Virasoro for F-man} to the flat F-manifold associated to the function $F^o$. We choose $\lambda=\frac{3-\delta}{2}$, then we have
\begin{gather*}
A^{N+1}_m=\sum_{d_1\ge -1,\,d_2\in\mbZ}(-1)^{d_2+1}\left([P_m(\tmu-d_2,\tR)]_{m-1-d_1-d_2}\right)^\gamma_\mu\Phi^\top_{\gamma,d_1}\mcF^{\mu,d_2}.
\end{gather*}
Since $\tR^\alpha_{N+1}=0$, we have $\left([P_m(\tmu-d_2,\tR)]_{m-1-d_1-d_2}\right)^\gamma_{N+1}=0$ for $1\le\gamma\le N$. Therefore,
\begin{align*}
A^{N+1}_m=&\sum_{d_1\ge 0,\,d_2\in\mbZ}\sum_{1\le\gamma,\mu\le N}(-1)^{d_2+1}\left([P_m(\tmu-d_2,\tR)]_{m-1-d_1-d_2}\right)^\gamma_\mu\frac{\d\mcF^o}{\d t^\gamma_{d_1}}\mcF^{\mu,d_2}+\\
&+\sum_{d\ge 0}\left([P_m(\tmu+d+1,\tR)]_{m+d+1}\right)^{N+1}_\alpha\tt^\alpha_d+\\
&+\sum_{d,k\ge 0}\left([P_m(\tmu+d+1,\tR)]_{m+d-k}\right)^{N+1}_\alpha\tt^\alpha_d\frac{\d\mcF^o}{\d s_k}+
\end{align*}
\begin{align*}
&+\sum_{k\ge 0}\sum_{1\le\alpha,\mu\le N}(-1)^{k+1}\left([P_m(\tmu-k,\tR)]_{m-k}\right)^{N+1}_\alpha\eta^{\alpha\mu}\frac{\d\mcF}{\d t^\mu_k}+\\
&+\sum_{d_1,d_2\ge 0}\sum_{1\le\alpha,\mu\le N}(-1)^{d_2+1}\left([P_m(\tmu-d_2,\tR)]_{m-1-d_1-d_2}\right)^{N+1}_\alpha\frac{\d\mcF^o}{\d s_{d_1}}\eta^{\alpha\mu}\frac{\d\mcF}{\d t^\mu_{d_2}}+\\
&+\sum_{d_1\ge -1,\,d_2\ge 0}(-1)^{d_2+1}\left([P_m(\tmu-d_2,\tR)]_{m-1-d_1-d_2}\right)^{N+1}_{N+1}\Phi^\top_{N+1,d_1}\mcF^{N+1,d_2}.
\end{align*}
The first term here is equal to $\Coef_{\eps^{-1}}\left(\frac{L_m e^{\eps^{-2}\mcF+\eps^{-1}\mcF^o}}{e^{\eps^{-2}\mcF+\eps^{-1}\mcF^o}}\right)$. The next four terms correspond to the four summations in the expression~\eqref{eq:exact open Virasoro} for the operators $\mcL_m$. Since $\mu^{N+1}=\frac{1}{2}$, the last term is equal to zero. Thus,
$$
A^{N+1}_m=\Coef_{\eps^{-1}}\left(\frac{\cL_m e^{\eps^{-2}\mcF+\eps^{-1}\mcF^o}}{e^{\eps^{-2}\mcF+\eps^{-1}\mcF^o}}\right)=0,
$$
that proves Theorem~\ref{theorem:open Virasoro}.

%%%%%%%%%%%%%%%%%%%%%%%%%%%%%%%%%%%%%%%%%%%%%%%%%%%%%%%%%%%%%%%%%%%%%
%%%%%%%%%%%%%%%%%%%%%%%%%%%%%%%%%%%%%%%%%%%%%%%%%%%%%%%%%%%%%%%%%%%%%

\section{Open Virasoro constraints in all genera}\label{section:open Virasoro in all genera}

There is a canonical construction that associates to a given semisimple conformal Frobenius manifold and its calibration a sequence of functions $\mcF_0(t^*_*)=\mcF,\mcF_1(t^*_*),\mcF_2(t^*_*),\ldots$, such that for the differential operators $L_m$, given by~\eqref{eq:Virasoro1}, equations~\eqref{eq:Virasoro in all genera} hold~\cite{Giv01,Giv04,Tel12}. If one considers the Gromov--Witten theory of a given target variety, then the functions $\mcF_g(t^*_*)$ are the generating series of intersection numbers on the moduli space of maps from a Riemann surface of genus $g$ to the target variety.   

We conjecture that, under possibly some additional assumptions, there is a canonical way to associate to a solution~$F^o$ of the WDVV equations, satisfying properties~\eqref{eq:unit condition for Fo},~\eqref{eq:homogeneity for Fo}, and its calibration a sequence of functions $\mcF^o_0(t^{\le N}_*,s_*)=\mcF^o,\mcF_1(t^{\le N}_*,s_*),\mcF_2(t^{\le N}_*,s_*),\ldots$, such that for the differential operators $\mcL_m$, given by~\eqref{eq:exact open Virasoro}, the equations
$$
\mcL_m e^{\sum_{g\ge 0}\eps^{2g-2}\mcF_g+\sum_{g\ge 0}\eps^{g-1}\mcF^o_g}=0,\quad m\ge -1,
$$
are satisfied. At the moment the conjecture is verified only in the case, corresponding to the intersection theory on the moduli space of Riemann surfaces with boundary~\cite{PST14,BT17}.

As a step towards the proof of this conjecture, we verify the following commutation relations between the operators $\mcL_m$. 

\begin{proposition}
We have $[\mcL_m,\mcL_n]=(m-n)\mcL_{m+n}$, $m,n\ge -1$.
\end{proposition}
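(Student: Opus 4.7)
The plan is to verify the commutation relations by direct computation, reducing the problem to a finite list of algebraic identities for the matrices $P_m(\tmu + d, \tR)$.

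I would decompose each operator as $\mcL_m = L_m + \Delta_m$, where $L_m$ is the closed Virasoro operator of Section~\ref{subsection:Virasoro constraints} (acting only on the variables $t^{\le N}_*$), and $\Delta_m$ collects the four groups of open correction terms in~\eqref{eq:exact open Virasoro}. Expanding the bracket gives $[\mcL_m, \mcL_n] = [L_m, L_n] + [L_m, \Delta_n] - [L_n, \Delta_m] + [\Delta_m, \Delta_n]$, so, invoking the closed Virasoro relation $[L_m, L_n] = (m-n) L_{m+n}$ from~\cite{DZ99}, the claim reduces to the identity $[L_m, \Delta_n] - [L_n, \Delta_m] + [\Delta_m, \Delta_n] = (m-n) \Delta_{m+n}$.

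To organize the computation I would sort terms by $\epsilon$-degree. The four summands of $\Delta_m$ have degrees $-1, 0, 1, 2$, and the degree-$1$ and degree-$2$ parts are constant-coefficient differential operators in $t^{\le N+1}_*$. Any two such constant-coefficient operators commute, which immediately kills the potentially dangerous $\epsilon^3$ and $\epsilon^4$ contributions to $[\Delta_m, \Delta_n]$. Further cancellations occur in $[L_m, \Delta_n]$ because $L_m$ does not involve the variables $t^{N+1}_*$ or their derivatives, so any portion of $\Delta_n$ whose only non-closed content is a $\partial/\partial t^{N+1}_k$ factor commutes with the corresponding parts of $L_m$.

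After these simplifications, the remaining content of the required identity lives in $\epsilon$-degrees $-1, 0, 1, 2$. At each degree, collecting coefficients of monomials in the Heisenberg algebra generated by $\tt^\alpha_p$ and $\partial/\partial t^\alpha_p$ (with $1 \le \alpha \le N+1$) reduces the operator identity to a finite family of scalar identities involving the entries $\left([P_m(\tmu + d, \tR)]_k\right)^{N+1}_\alpha$. These can all be derived from the defining recursion $P_{m+1}(A, R) = R P_m(A, R) + P_m(A, R)(A + m + 1/2)$ combined with the structural properties $[\tmu, \tR_n] = n \tR_n$ and $\tR^\alpha_{N+1} = 0$; they are the natural extensions to the $(N+1)$-th row of the algebraic identities that drive the closed Virasoro algebra. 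The main obstacle is bookkeeping rather than conceptual: the number of distinct monomial types arising in $[\Delta_m, \Delta_n]$ is substantial, and one must separate contributions involving $\partial/\partial t^{N+1}_*$ from those involving $\partial/\partial t^{\le N}_*$ in order to match them systematically against the right-hand side. A more conceptual alternative would be to realize $\{\mcL_m\}$ as the image of a representation of some abstract Virasoro-like algebra built from the data $(\tmu, \tR)$, which would make the commutation relations manifest, but setting up such a framework rigorously is beyond the scope of the present paper.
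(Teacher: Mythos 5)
Your strategy---split $\mcL_m=L_m+\Delta_m$, invoke the closed relation $[L_m,L_n]=(m-n)L_{m+n}$, sort by powers of $\eps$, and compare coefficients of monomials---is exactly the skeleton of the paper's own proof, which decomposes $\Delta_m$ into the four sums $\mcL_m^1,\dots,\mcL_m^4$ of~\eqref{eq:exact open Virasoro} together with the pure-$s$ terms, treats $m=-1$ separately, and then checks eight types of terms for $m,n\ge 0$. Your observation that the $\eps$- and $\eps^2$-parts are constant-coefficient operators and hence commute among themselves is correct, but it only disposes of contributions that could not appear on the right-hand side in any case; it does not touch the terms that actually require work.

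The genuine gap is that the whole content of the proof is deferred to the claim that the resulting scalar identities ``can all be derived from the defining recursion for $P_m$ combined with $[\tmu,\tR_n]=n\tR_n$ and $\tR^\alpha_{N+1}=0$.'' This is asserted, not proved, and as stated the toolkit is insufficient. The verification in the paper needs three further inputs. First, the conformal-calibration symmetries $\eta R_n\eta^{-1}=(-1)^{n-1}R_n^T$ and $\mu\eta+\eta\mu=0$: these are what make the constant term in $[\mcL_m,\mcL_n]$ vanish and what allow products of $\tR$'s to be moved across $\eta$ in the $\eps$- and $\eps^2$-sectors, so the identities are not consequences of the two structural properties you list alone. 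Second, degree-window incompatibilities (for instance $\mu^\alpha<\tfrac12$ forcing $d>n-1$ while $\mu^\beta<\tfrac12$ forces $d<n$), which kill the cross terms of type $\eps\,\d/\d s_*$ and $\eps^2\,\d^2/\d s_*\d s_*$. Third, and most importantly, the mechanism by which pairs of commutators reassemble: each single commutator, because of the truncation brackets $[\,\cdot\,]_p$, produces only a partial sum (a summation over $k\ge 0$, respectively $k<0$) of a matrix product, and only combinations such as $[L_m^2,\mcL_n^1]+[L_m^1,\mcL_n^3]$, or $[L_m^2,\mcL_n^2]+\bigl[\mcL_m^2,\sum_d\tfrac{(d+n+1)!}{d!}s_d\tfrac{\d}{\d s_{d+n}}\bigr]+[L_m^1,\mcL_n^4]$, recombine into complete products like $\bigl((\tR-n)\prod_{i=0}^{m+n}(\tR-i)\bigr)^{N+1}_\alpha$, after which antisymmetrization in $(m,n)$ produces the factor $(m-n)$. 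Without identifying these specific pairings, ``collecting coefficients of monomials'' does not reduce to identities for single entries of $P_m(\tmu+d,\tR)$: the cross-sector terms mixing $\d/\d s_*$ with $\d/\d t^{\le N}_*$ genuinely couple the $(N+1)$-st row of $\tR$ to the closed block through $\eta$. So the plan points in the right direction, but the part you dismiss as bookkeeping is precisely where the proof lives, and the tools you allow yourself would not suffice to complete it.
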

\begin{proof}
Denote the parts of the expression for the operator $L_m$ from lines~\eqref{eq:Virasoro1},~\eqref{eq:Virasoro2} and~\eqref{eq:Virasoro3} by $L_m^1$, $L_m^2$ and $L_m^3$, correspondingly. One can see that the operators $\mcL_m$ can be written in the following way:
$$
\mcL_m=L_m+\mcL_m^1+\mcL_m^2+\mcL_m^3+\mcL_m^4+\eps^{-1}\delta_{m,-1}s+\delta_{m,0}\frac{3}{4}+\sum_{d\ge 0}\frac{(d+m+1)!}{d!}s_d\frac{\d}{\d s_{d+m}}+\eps\frac{3(m+1)!}{4}\frac{\d}{\d s_{m-1}},
$$
where
\begin{align*}
&\cL_m^1=\eps^{-1}\sum_{d\ge 0}\sum_{\substack{1\le\alpha\le N\\\mu^\alpha=-\frac{1}{2}-m-d}}\left(\prod_{i=0}^m(\tR-i)\right)^{N+1}_\alpha\tt^\alpha_d,\\
&\mcL_m^2=\sum_{d,k\ge 0}\sum_{\substack{1\le\alpha\le N\\\mu^\alpha=\frac{1}{2}-m-d+k}}\left(\prod_{i=0}^m(\tR+k+1-i)\right)^{N+1}_\alpha\tt^\alpha_d\frac{\d}{\d s_k},
\end{align*}
\begin{align*}
&\mcL_m^3=\eps\sum_{k\ge 0}\sum_{\substack{1\le\alpha,\mu\le N\\\mu^\alpha=\frac{1}{2}-m+k}}(-1)^{k+1}\left(\prod_{i=0}^m(\tR-i)\right)^{N+1}_\alpha\eta^{\alpha\mu}\frac{\d}{\d t^\mu_k},\\
&\mcL_m^4=\eps^2\sum_{d_1,d_2\ge 0}\sum_{\substack{1\le\alpha,\mu\le N\\\mu^\alpha=\frac{3}{2}-m+d_1+d_2}}(-1)^{d_2+1}\left(\prod_{i=0}^m(\tR+d_1+1-i)\right)^{N+1}_\alpha\eta^{\alpha\mu}\frac{\d^2}{\d s_{d_1}\d t^\mu_{d_2}}.
\end{align*}

Let us first prove that $[\mcL_{-1},\mcL_n]=(-1-n)\mcL_{n-1}$, for $n\ge 0$. For this we compute
\begin{align}
[\mcL_{-1},L_n]=&(-1-n)L_{n-1},\notag\\
[\mcL_{-1},\mcL^1_n]=&\eps^{-1}\sum_{d\ge 1}\sum_{\substack{1\le\alpha\le N\\\mu^\alpha=\frac{1}{2}-n-d}}\left(\prod_{i=0}^n(\tR-i)\right)^{N+1}_\alpha\tt^\alpha_d,\label{A11}\\
[\mcL_{-1},\mcL^2_n]=&\sum_{d\ge 1,\,k\ge 0}\sum_{\substack{1\le\alpha\le N\\\mu^\alpha=\frac{3}{2}-n-d+k}}\left(\prod_{i=0}^n(\tR+k+1-i)\right)^{N+1}_\alpha\tt^\alpha_d\frac{\d}{\d s_k}\label{A21}\\
&-\eps^{-1}\sum_{d\ge 0}\sum_{\substack{1\le\alpha\le N\\\mu^\alpha=\frac{1}{2}-n-d}}\left(\prod_{i=0}^n(\tR+1-i)\right)^{N+1}_\alpha\tt^\alpha_d\label{A12}\\
&-\sum_{d,k\ge 0}\sum_{\substack{1\le\alpha\le N\\\mu^\alpha=\frac{3}{2}-n-d+k}}\left(\prod_{i=0}^n(\tR+k+2-i)\right)^{N+1}_\alpha\tt^\alpha_d\frac{\d}{\d s_k},\label{A22}\\
[\mcL_{-1},\mcL^3_n]=&\eps^{-1}\sum_{\substack{1\le\alpha\le N\\\mu^\alpha=\frac{1}{2}-n}}\left(\prod_{i=0}^n(\tR-i)\right)^{N+1}_\alpha\tt^\alpha_0\label{A13}\\
&+\eps\sum_{k\ge 0}\sum_{\substack{1\le\alpha,\mu\le N\\\mu^\alpha=\frac{3}{2}-n+k}}(-1)^{k+1}\left(\prod_{i=0}^n(\tR-i)\right)^{N+1}_\alpha\eta^{\alpha\mu}\frac{\d}{\d t^\mu_k},\label{A31}\\
[\mcL_{-1},\mcL^4_n]=&\eps^2\sum_{d_1,d_2\ge 0}\sum_{\substack{1\le\alpha,\mu\le N\\\mu^\alpha=\frac{5}{2}-n+d_1+d_2}}(-1)^{d_2+1}\left(\prod_{i=0}^n(\tR+d_1+1-i)\right)^{N+1}_\alpha\eta^{\alpha\mu}\frac{\d^2}{\d s_{d_1}\d t^\mu_{d_2}}\label{A41}\\
&+\sum_{d\ge 0}\sum_{\substack{1\le\alpha\le N\\\mu^\alpha=\frac{3}{2}-n+d}}\left(\prod_{i=0}^n(\tR+d+1-i)\right)^{N+1}_\alpha\tt^\alpha_0\frac{\d}{\d s_d}\label{A23}\\
&-\eps\sum_{d\ge 0}\sum_{\substack{1\le\alpha,\mu\le N\\\mu^\alpha=\frac{3}{2}-n+d}}(-1)^{d+1}\left(\prod_{i=0}^n(\tR+1-i)\right)^{N+1}_\alpha\eta^{\alpha\mu}\frac{\d}{\d t^\mu_d}\label{A32}\\
&-\eps^2\sum_{d_1,d_2\ge 0}\sum_{\substack{1\le\alpha,\mu\le N\\\mu^\alpha=\frac{5}{2}-n+d_1+d_2}}(-1)^{d_2+1}\left(\prod_{i=0}^n(\tR+d_1+2-i)\right)^{N+1}_\alpha\eta^{\alpha\mu}\frac{\d^2}{\d s_{d_1}\d t^\mu_{d_2}},\label{A42}
\end{align}
and
\begin{align*}
\left[\mcL_{-1},\sum_{d\ge 0}\frac{(d+n+1)!}{d!}s_d\frac{\d}{\d s_{d+n}}\right]=&-\delta_{n,0}\eps^{-1}s+(-1-n)\sum_{d\ge 0}\frac{(d+n)!}{d!}s_d\frac{\d}{\d s_{d+n-1}},\\
\left[\mcL_{-1},\eps\frac{3(n+1)!}{4}\frac{\d}{\d s_{n-1}}\right]=&-\delta_{n,1}\frac{3}{2}-\eps\frac{3(n+1)!}{4}\frac{\d}{\d s_{n-2}}.
\end{align*}
It remains to note that the sum of the expressions in lines~\eqref{A11},~\eqref{A12},~\eqref{A13} is equal to ${(-1-n)\mcL^1_{n-1}}$, the sum of the expressions in lines~\eqref{A21},~\eqref{A22},~\eqref{A23} is equal to $(-1-n)\mcL^2_{n-1}$, the sum of the expressions in lines~\eqref{A31},~\eqref{A32} is equal to $(-1-n)\mcL^3_{n-1}$ and the sum of the expressions in lines~\eqref{A41},~\eqref{A42} is equal to $(-1-n)\mcL^4_{n-1}$. 

Let us now prove the proposition for $m,n\ge 0$. The commutator $[\mcL_m,\mcL_n]$ has the form
\begin{align*}
[\mcL_m,\mcL_n]=(m-n)L_{m+n}&+\eps^{-1}\sum_{d\ge 0}\sum_{1\le\alpha\le N}A_{\alpha,d}\tt^\alpha_d+\\
&+\left(\sum_{p,q\ge 0}\sum_{1\le\alpha\le N}B_{\alpha,p}^q\tt^\alpha_p\frac{\d}{\d s_q}+\sum_{p,q\ge 0}C_p^q s_p\frac{\d}{\d s_q}+D\right)+\\
&+\eps\left(\sum_{d\ge 0}\sum_{1\le\alpha\le N}E^{\alpha,d}\frac{\d}{\d t^\alpha_d}+\sum_{d\ge 0}G^d \frac{\d}{\d s_d}\right)+\\
&+\eps^2\left(\sum_{p,q\ge 0}\sum_{1\le\alpha\le N}H^{\alpha,p;q}\frac{\d}{\d t^\alpha_p\d s_q}+\sum_{p,q\ge 0}I^{p;q}\frac{\d}{\d s_p\d s_q}\right).
\end{align*}
Let us consider separately the terms on the right-hand side of this expression.

1. {\it Term $\eps^{-1}\sum_{d\ge 0}\sum_{1\le\alpha\le N}A_{\alpha,d}\tt^\alpha_d$:}
$$
\eps^{-1}\sum_{d\ge 0}\sum_{1\le\alpha\le N}A_{\alpha,d}\tt^\alpha_d=[L_m^2,\mcL_n^1]+[L_m^1,\mcL_n^3]-[L_n^2,\mcL_m^1]-[L_n^1,\mcL_m^3].
$$
We compute
$$
[L_m^2,\mcL_n^1]=\eps^{-1}\sum_{d,k\ge 0}\sum_{\substack{1\le\alpha\le N\\\mu^\alpha=-\frac{1}{2}-m-n-d}}\sum_{\substack{1\le\beta\le N\\\mu^\beta=-\frac{1}{2}-n-k}}\left(\prod_{i=0}^m(R-n-i)\right)^\beta_\alpha\left(\prod_{i=0}^n(\tR-i)\right)^{N+1}_\beta\tt^\alpha_d.
$$
Similarly, we get
$$
[L_m^1,\mcL_n^3]=\eps^{-1}\sum_{d\ge 0,\,k<0}\sum_{\substack{1\le\alpha\le N\\\mu^\alpha=-\frac{1}{2}-m-n-d}}\sum_{\substack{1\le\beta\le N\\\mu^\beta=-\frac{1}{2}-n-k}}\left(\prod_{i=0}^m(R-n-i)\right)^\beta_\alpha\left(\prod_{i=0}^n(\tR-i)\right)^{N+1}_\beta\tt^\alpha_d.
$$
As a result, 
$$
[L_m^2,\mcL_n^1]+[L_m^1,\mcL_n^3]=\eps^{-1}\sum_{d\ge 0}\sum_{\substack{1\le\alpha\le N\\\mu^\alpha=-\frac{1}{2}-m-n-d}}\left((\tR-n)\prod_{i=0}^{m+n}(\tR-i)\right)^{N+1}_\alpha\tt^\alpha_d,
$$
which finally gives 
\begin{multline*}
[L_m^2,\mcL_n^1]+[L_m^1,\mcL_n^3]-[L_n^2,\mcL_m^1]-[L_n^1,\mcL_m^3]=\\
=(m-n)\eps^{-1}\sum_{d\ge 0}\sum_{\substack{1\le\alpha\le N\\\mu^\alpha=-\frac{1}{2}-m-n-d}}\left(\prod_{i=0}^{m+n}(\tR-i)\right)^{N+1}_\alpha\tt^\alpha_d=(m-n)\mcL_{m+n}^1,
\end{multline*}
as required.

2. {\it Term $\sum_{p,q\ge 0}\sum_{1\le\alpha\le N}B_{\alpha,p}^q\tt^\alpha_p\frac{\d}{\d s_q}$:}
\begin{align*}
\sum_{p,q\ge 0}\sum_{1\le\alpha\le N}B_{\alpha,p}^q\tt^\alpha_p\frac{\d}{\d s_q}=&[L_m^2,\mcL_n^2]+\left[\mcL_m^2,\sum_{d\ge 0}\frac{(d+n+1)!}{d!}s_d\frac{\d}{\d s_{d+n}}\right]+[L_m^1,\mcL_n^4]-\\
&-[L_n^2,\mcL_m^2]-\left[\mcL_n^2,\sum_{d\ge 0}\frac{(d+m+1)!}{d!}s_d\frac{\d}{\d s_{d+m}}\right]-[L_n^1,\mcL_m^4].
\end{align*}
We compute
\begin{align*}
&[L_m^2,\mcL_n^2]+\left[\mcL_m^2,\sum_{d\ge 0}\frac{(d+n+1)!}{d!}s_d\frac{\d}{\d s_{d+n}}\right]=\\
=&\sum_{d,k,l\ge 0}\sum_{\substack{1\le\alpha\le N\\\mu^\alpha=\frac{1}{2}-m-n-d+k}}\sum_{\substack{1\le\beta\le N+1\\\mu^\beta=\frac{1}{2}-n+k-l}}\left(\prod_{i=0}^m(\tR+k+1-n-i)\right)^\beta_\alpha\left(\prod_{i=0}^n(\tR+k+1-i)\right)^{N+1}_\beta\tt^\alpha_d\frac{\d}{\d s_k},
\end{align*}
and then check that this sum, after replacing the summation $\sum_{l\ge 0}$ by the summation $\sum_{l<0}$, is equal to the commutator $[L_m^1,\mcL_n^4]$. This gives
\begin{multline*}
[L_m^2,\mcL_n^2]+\left[\mcL_m^2,\sum_{d\ge 0}\frac{(d+n+1)!}{d!}s_d\frac{\d}{\d s_{d+n}}\right]+[L_m^1,\mcL_n^4]=\\
=\sum_{d,k\ge 0}\sum_{\substack{1\le\alpha\le N\\\mu^\alpha=\frac{1}{2}-m-n-d+k}}\left((\tR+k+1-n)\prod_{i=0}^{m+n}(\tR+k+1-i)\right)^{N+1}_\alpha\tt^\alpha_d\frac{\d}{\d s_k}.
\end{multline*}
As a result, $\sum_{p,q\ge 0}\sum_{1\le\alpha\le N}B_{\alpha,p}^q\tt^\alpha_p\frac{\d}{\d s_q}=(m-n)\mcL_{m+n}^2$, as required.

3. {\it Term $\sum_{p,q\ge 0}C_p^q s_p\frac{\d}{\d s_q}$:}
\begin{align*}
\sum_{p,q\ge 0}C_p^q s_p\frac{\d}{\d s_q}=&\left[\sum_{p\ge 0}\frac{(p+m+1)!}{p!}s_p\frac{\d}{\d s_{p+m}},\sum_{q\ge 0}\frac{(q+n+1)!}{q!}s_q\frac{\d}{\d s_{q+n}}\right]=\\
=&(m-n)\sum_{d\ge 0}\frac{(d+m+n+1)!}{d!}s_d\frac{\d}{\d s_{d+m+n}},
\end{align*}
as required.

4. {\it Constant $D$:}
\begin{align*}
D=&[\mcL_m^1,\mcL_n^3]-[\mcL_n^1,\mcL_m^3],\quad \text{where}\\
[\mcL_m^1,\mcL_n^3]=&\sum_{d\ge 0}\sum_{\substack{1\le\alpha,\beta\le N\\\mu^\alpha=-\frac{1}{2}-m-d\\\mu^\beta=\frac{1}{2}-n+d}}\left(\prod_{i=0}^m(\tR-i)\right)_\alpha^{N+1}\eta^{\alpha\beta}\left(\prod_{i=0}^n(\tR-i)\right)_\beta^{N+1}.
\end{align*}
Because of the property $\mu\eta+\eta\mu=0$, the last expression is equal to zero unless $m+n=0$, which implies that $m=n=0$. Thus, $D=0$, as required.

5. {\it Term $\eps\sum_{d\ge 0}\sum_{1\le\alpha\le N}E^{\alpha,d}\frac{\d}{\d t^\alpha_d}$:} 
\begin{gather*}
\eps\sum_{d\ge 0}\sum_{1\le\alpha\le N}E^{\alpha,d}\frac{\d}{\d t^\alpha_d}=[L_m^2,\mcL_n^3]+[L_m^3,\mcL_n^1]-[L_n^2,\mcL_m^3]-[L_n^3,\mcL_m^1].
\end{gather*}
We first compute
\begin{align}
[L_m^2,\mcL_n^3]=&\eps\sum_{d,k\ge 0}\sum_{\substack{1\le\alpha,\beta,\gamma\le N\\\mu^\beta=m+n-k-\frac{1}{2}\\\mu^\alpha=-\frac{1}{2}+n-d}}(-1)^d\left(\prod_{i=0}^n(\tR-i)\right)^{N+1}_\gamma\eta^{\gamma\alpha}\left(\prod_{i=0}^m(R+n+i)\right)^\beta_\alpha\frac{\d}{\d t^\beta_k}=\notag\\
=&\eps\sum_{d,k\ge 0}\sum_{\substack{1\le\alpha,\beta,\gamma\le N\\\mu^\beta=m+n-k-\frac{1}{2}\\\mu^\gamma=\frac{1}{2}-n+d}}(-1)^{k+1}\left(\prod_{i=0}^n(\tR-i)\right)^{N+1}_\gamma\left(\prod_{i=0}^m(R-n-i)\right)^\gamma_\alpha\eta^{\alpha\beta}\frac{\d}{\d t^\beta_k},\label{eq:commutation,fifth term,1}
\end{align}
where the second equality is obtained using the property~$\eta R_i\eta^{-1}=(-1)^{i-1}R_i^T$. Then one can compute that the commutator $[L_m^3,\mcL_n^1]$ is equal to the expression in line~\eqref{eq:commutation,fifth term,1} with the summation $\sum_{d\ge 0}$ replaced by $\sum_{d<0}$. This implies that
$$
[L_m^2,\mcL_n^3]+[L_m^3,\mcL_n^1]=\eps\sum_{k\ge 0}\sum_{\substack{1\le\alpha,\beta\le N\\\mu^\beta=m+n-k-\frac{1}{2}}}(-1)^{k+1}\left(\prod_{i=0}^n(\tR-i)\prod_{i=0}^m(\tR-n-i)\right)^{N+1}_\alpha\eta^{\alpha\beta}\frac{\d}{\d t^\beta_k},
$$
and, as a result,
\begin{multline*}
[L_m^2,\mcL_n^3]+[L_m^3,\mcL_n^1]-[L_n^2,\mcL_m^3]-[L_n^3,\mcL_m^1]=\\
=(m-n)\eps\sum_{k\ge 0}\sum_{\substack{1\le\alpha,\beta\le N\\\mu^\beta=m+n-k-\frac{1}{2}}}(-1)^{k+1}\left(\prod_{i=0}^{m+n}(\tR-i)\right)^{N+1}_\alpha\eta^{\alpha\beta}\frac{\d}{\d t^\beta_k}=(m-n)\mcL_{m+n}^3,
\end{multline*}
as required.

6. {\it Term $\eps\sum_{d\ge 0}G^d\frac{\d}{\d s_d}$:}
\begin{gather*}
\eps\sum_{d\ge 0}G^d\frac{\d}{\d s_d}=(m-n)\eps\frac{3(m+n+1)!}{4}\frac{\d}{\d s_{m+n-1}}+[\mcL_m^2,\mcL_n^3]+[\mcL_m^1,\mcL_n^4]-[\mcL_n^2,\mcL_m^3]-[\mcL_n^1,\mcL_m^4].
\end{gather*}
We compute
\begin{gather*}
[\mcL_m^2,\mcL_n^3]=\eps\sum_{d\ge 0}(-1)^d\sum_{\substack{1\le\alpha,\beta\le N\\\mu^\alpha=-\frac{1}{2}+n-d\\\mu^\beta=\frac{1}{2}-n+d}}\left(\prod_{i=0}^m(\tR+m+n-i)\right)^{N+1}_\alpha\eta^{\alpha\beta}\left(\prod_{i=0}^n(\tR-i)\right)^{N+1}_\beta\frac{\d}{\d s_{m+n-1}}.
\end{gather*}
A term in this sum is equal to zero unless $\mu^\alpha<\frac{1}{2}\Leftrightarrow d>n-1$ and $\mu^\beta<\frac{1}{2}\Leftrightarrow d<n$, that never happens. Therefore, $[\mcL_m^2,\mcL_n^3]=0$, and, similarly, one can check that $[\mcL_m^1,\mcL_n^4]=[\mcL_n^2,\mcL_m^3]=[\mcL_n^1,\mcL_m^4]=0$. Hence, $\eps\sum_{d\ge 0}G^d\frac{\d}{\d s_d}=(m-n)\eps\frac{3(m+n+1)!}{4}\frac{\d}{\d s_{m+n-1}}$, as required.

7. {\it Term $\eps^2\sum_{p,q\ge 0}\sum_{1\le\alpha\le N}H^{\alpha,p;q}\frac{\d}{\d t^\alpha_p\d s_q}$:}
\begin{align*}
\eps^2\sum_{p,q\ge 0}\sum_{1\le\alpha\le N}H^{\alpha,p;q}\frac{\d}{\d t^\alpha_p\d s_q}=&[L_m^2,\mcL_n^4]+[L_m^3,\mcL_n^2]+\left[\mcL_m^4,\sum_{d\ge 0}\frac{(d+n+1)!}{d!}s_d\frac{\d}{\d s_{d+n}}\right]\\
&-[L_n^2,\mcL_m^4]-[L_n^3,\mcL_m^2]-\left[\mcL_n^4,\sum_{d\ge 0}\frac{(d+m+1)!}{d!}s_d\frac{\d}{\d s_{d+m}}\right].
\end{align*}
We proceed with the computation
\begin{align*}
&[L_m^2,\mcL_n^4]=\\
=&\eps^2\hspace{-0.2cm}\sum_{d_1,d_2,d\ge 0}(-1)^d\hspace{-0.8cm}\sum_{\substack{1\le\alpha,\beta,\mu\le N\\\mu^\alpha=\frac{3}{2}-n+d_1+d\\\mu^\beta=-\frac{3}{2}+m+n-d_1-d_2}}\hspace{-0.6cm}\left(\prod_{i=0}^n(\tR+d_1+1-i)\right)^{N+1}_\alpha\hspace{-0.3cm}\eta^{\alpha\mu}\left(\prod_{i=0}^m(R-1-d_1+n+i)\right)^\beta_\mu\frac{\d^2}{\d s_{d_1}\d t^\beta_{d_2}}.
\end{align*}
Using the relation $\eta R_i\eta^{-1}=(-1)^{i-1}R_i^T$, we convert this sum to
\begin{gather}\label{eq:proof of the commutation,eq1}
\eps^2\hspace{-0.2cm}\sum_{d_1,d_2,d\ge 0}(-1)^{d_2+1}\hspace{-0.8cm}\sum_{\substack{1\le\alpha\le N+1\\\mu^\alpha=\frac{3}{2}-n+d_1+d\\\mu^\beta=-\frac{3}{2}+m+n-d_1-d_2}}\hspace{-0.6cm}\left(\prod_{i=0}^n(\tR+d_1+1-i)\right)^{N+1}_\alpha\left(\prod_{i=0}^m(\tR+1+d_1-n-i)\right)^\alpha_\mu\eta^{\mu\beta}\frac{\d^2}{\d s_{d_1}\d t^\beta_{d_2}}.
\end{gather}
Then one can check that the expression $[L_m^3,\mcL_n^2]+\left[\mcL_m^4,\sum_{d\ge 0}\frac{(d+n+1)!}{d!}s_d\frac{\d}{\d s_{d+n}}\right]$ is equal to the expression~\eqref{eq:proof of the commutation,eq1}, with the summation $\sum_{d\ge 0}$ replaced by $\sum_{d<0}$. As a result,
\begin{align*}
&[L_m^2,\mcL_n^4]+[L_m^3,\mcL_n^2]+\left[\mcL_m^4,\sum_{d\ge 0}\frac{(d+n+1)!}{d!}s_d\frac{\d}{\d s_{d+n}}\right]=\\
=&\eps^2\sum_{d_1,d_2\ge 0}(-1)^{d_2+1}\sum_{\mu^\beta=-\frac{3}{2}+m+n-d_1-d_2}\left((\tR+d_1+1-n)\prod_{i=0}^{m+n}(\tR+d_1+1-i)\right)^{N+1}_\mu\eta^{\mu\beta}\frac{\d^2}{\d s_{d_1}\d t^\beta_{d_2}},
\end{align*}
which gives $\eps^2\sum_{p,q\ge 0}\sum_{1\le\alpha\le N}H^{\alpha,p;q}\frac{\d}{\d t^\alpha_p\d s_q}=(m-n)\mcL_{m+n}^4$, as required.

8. {\it Term $\eps^2\sum_{p,q\ge 0}I^{p;q}\frac{\d^2}{\d s_p\d s_q}$:}
$$
\eps^2\sum_{p,q\ge 0}I^{p;q}\frac{\d^2}{\d s_p\d s_q}=[\mcL_m^4,\mcL_n^2]-[\mcL_n^4,\mcL_m^2],
$$
where
\begin{gather*}
[\mcL_m^4,\mcL_n^2]=\eps^2\hspace{-0.1cm}\sum_{p,q,d\ge 0}\hspace{-0.2cm}\sum_{\substack{1\le\alpha,\beta\le N\\\mu^\alpha=\frac{3}{2}-m+p+d\\\mu^\beta=\frac{1}{2}-n+q-d}}\hspace{-0.4cm}(-1)^{d+1}\left(\prod_{i=0}^m(\tR+p+1-i)\right)^{N+1}_\alpha\hspace{-0.3cm}\eta^{\alpha\beta}\left(\prod_{i=0}^n(\tR+q+1-i)\right)^{N+1}_\beta\hspace{-0.3cm}\frac{\d^2}{\d s_p\d s_q}.
\end{gather*}
A term in this sum is equal to zero unless $\mu^\alpha<\frac{1}{2}\Leftrightarrow p+d\le m-2$, $\mu^\beta<\frac{1}{2}\Leftrightarrow q-d\le n-1$ and $\mu^\alpha+\mu^\beta=0\Leftrightarrow p+q=m+n-2$. The first two condition give $p+q\le m+n-3$, that contradicts the third condition. Thus, $\eps^2\sum_{p,q\ge 0}I^{p;q}\frac{\d^2}{\d s_p\d s_q}=0$, as required. This completes the proof of the proposition.
\end{proof}

%%%%%%%%%%%%%%%%%%%%%%%%%%%%%%%%%%%%%%%%%%%%%%%%%%%%%%%%%%%%%%%%%%%%%
%%%%%%%%%%%%%%%%%%%%%%%%%%%%%%%%%%%%%%%%%%%%%%%%%%%%%%%%%%%%%%%%%%%%%

\section{Examples of solutions of the open WDVV equations}\label{section: examples}

In this section we present several examples of solutions of the open WDVV equations, satisfying conditions~\eqref{eq:unit condition for Fo} and~\eqref{eq:homogeneity for Fo} and, thus, Theorem~\ref{theorem:open Virasoro} can be applied to them.

\subsection{Extended $r$-spin theory}

Let us fix an integer $r\ge 2$. There is a conformal Frobenius manifold that controls the integrals of the so-called Witten class over the moduli space of stable curves of genus $0$ with an $r$-spin structure. This Frobenius manifold has dimension $r-1$ and is described by a potential $F^{\rspin}(t^1,\ldots,t^{r-1})$ with the metric~$\eta$, given by $\eta_{\alpha\beta}=\delta_{\alpha+\beta,r}$, and the Euler vector field
$$
E=\sum_{\alpha=1}^{r-1}\frac{r+1-\alpha}{r}t^\alpha\frac{\d}{\d t^\alpha},\qquad \sum_{\alpha=1}^{r-1}E^\alpha\frac{\d F^\rspin}{\d t^\alpha}=\frac{2r+2}{r}F^\rspin.
$$  
The conformal dimension is $\delta=\frac{r-2}{r}$. The potential $F^{\rspin}$ is a polynomial in the variables~$t^1,\ldots,t^{r-1}$. For more details, we refer a reader, for example, to~\cite{PPZ16,BCT17}.

The generating series of the descendent integrals with Witten's class over the moduli space of curves of genus $0$ with an $r$-spin structure gives the descendent potential $\mcF^{\rspin}(t^*_*)$, corresponding to our Frobenius potential $F^{\rspin}$. This descendent potential corresponds to a calibration with all the matrices $R_i$ being zero. Thus, the Virasoro operators $L_m$ are given by
\begin{align*}
L_m=&\sum_{\substack{1\le\alpha\le r-1\\a\ge 0}}\left(\frac{\alpha}{r}+a\right)^{\overline{m+1}}(t^\alpha_a-\delta^{\alpha,1}\delta_{a,1})\frac{\d}{\d t^\alpha_{a+m}}+\frac{\eps^2}{2}\sum_{\substack{\alpha+\beta=r\\a+b=m-1}}\left(\frac{\alpha}{r}\right)^{\overline{a+1}}\left(\frac{\beta}{r}\right)^{\overline{b+1}}\frac{\d^2}{\d t^\alpha_a\d t^\beta_b}+\\
&+\delta_{m,-1}\frac{1}{2\eps^2}\sum_{\alpha+\beta=r}t^\alpha_0t^\beta_0+\delta_{m,0}\frac{r^2-1}{24r},
\end{align*}
where we use the notation 
$$
x^{\overline{n}}:=
\begin{cases}
\prod_{i=1}^n(x+i-1),&\text{if $n\ge 1$},\\
1,&\text{if $n=0$},
\end{cases}
$$
for a complex number $x$ and an integer $n\ge 0$.

In~\cite{JKV01} the authors considered a generalization of the $r$-spin theory, that we call the extended $r$-spin theory, and in~\cite{BCT17} the authors noticed (see Remark~\ref{remark about open WDVV}) that such a generalization produces a solution $F^{\ext}(t^1,\ldots,t^r)$ of the open WDVV equations, satisfying condition~\eqref{eq:unit condition for Fo} and the homogeneity condition  
$$
\sum_{\alpha=1}^{r-1}E^\alpha\frac{\d F^{\ext}}{\d t^\alpha}+\frac{1-\delta}{2}t^r\frac{\d F^{\ext}}{\d t^r}=\frac{3-\delta}{2}F^{\ext}.
$$  
Recall that we identify $t^r=s$. Note that the function $F^{\ext}$ also controls the open $r$-spin theory, constructed in~\cite{BCT18}. The generating series $\mcF^{\ext}(t^*_*)$ of the descendent integrals in the extended $r$-spin theory is the open descendent potential, corresponding to the function~$F^{\ext}$ and a calibration with all the matrices $\tR_i$ being zero. Thus, the associated open Virasoro operators $\mcL_m$ are
$$
\mcL_m=L_m+\sum_{d\ge 0}\frac{(d+m+1)!}{d!}t^r_d\frac{\d}{\d t^r_{d+m}}+\eps\frac{3(m+1)!}{4}\frac{\d}{\d t^r_{m-1}}+\delta_{m,-1}\eps^{-1}t^r+\delta_{m,0}\frac{3}{4}.
$$ 

\subsection{Solutions given by the canonical coordinates}

Consider a conformal Frobenius manifold given by a potential $F = F(t^1,\dots,t^N)$, a metric $\eta$ and an Euler vector field $E$. Suppose that the Frobenius manifold is semisimple and let $u^1,\dots,u^N$ be the canonical coordinates. It is well-known that in the canonical coordinates the Euler vector field $E$ looks as
$$
E=\sum_{i=1}^N(u^i+a^i)\frac{\d}{\d u^i},
$$
for some constants $a^i\in\mbC$.
\begin{proposition}
For any $1\le k\le N$ the function $F^o=u^k\cdot s$ satisfies the open WDVV equations together with condition~\eqref{eq:unit condition for Fo} and the homogeneity condition
\begin{gather}\label{eq:homogeneity for canonical coordinates}
\sum_{\alpha=1}^N E^\alpha\frac{\d F^o}{\d t^\alpha}+\frac{1-\delta}{2}s\frac{\d F^o}{\d s}=\frac{3-\delta}{2}F^o+a^ks.
\end{gather}
\end{proposition}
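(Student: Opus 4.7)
The plan is to verify the three conditions directly, using the defining property of canonical coordinates as the single nontrivial input. The basic computation is that for $F^o = u^k s$,
\[
\frac{\d^2 F^o}{\d t^\alpha \d t^\beta} = s\,\frac{\d^2 u^k}{\d t^\alpha \d t^\beta}, \qquad \frac{\d^2 F^o}{\d s\,\d t^\alpha} = \frac{\d u^k}{\d t^\alpha}, \qquad \frac{\d^2 F^o}{\d s^2} = 0,
\]
so everything reduces to identities for the function $u^k$ on the Frobenius manifold.

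First I would handle \eqref{eq:unit condition for Fo} and the homogeneity. Since the unit $e = \d/\d t^1$ decomposes in canonical coordinates as $\sum_i \d/\d u^i$, one gets $\d u^k/\d t^1 = 1$, hence $\d^2 u^k / \d t^1\d t^\alpha = 0$, which is exactly \eqref{eq:unit condition for Fo}. Similarly, using $E = \sum_i (u^i+a^i)\d/\d u^i$ in canonical coordinates gives $E\cdot u^k = u^k + a^k$, so
\[
\sum_\alpha E^\alpha \frac{\d F^o}{\d t^\alpha} + \frac{1-\delta}{2}s\frac{\d F^o}{\d s} = s(u^k+a^k) + \frac{1-\delta}{2}su^k = \frac{3-\delta}{2}F^o + a^k s,
\]
which is \eqref{eq:homogeneity for canonical coordinates}.

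The core is the identity
\[
c^\gamma_{\alpha\beta}\frac{\d u^k}{\d t^\gamma} = \frac{\d u^k}{\d t^\alpha}\cdot\frac{\d u^k}{\d t^\beta}.
\]
To see this, expand $\d/\d t^\alpha = \sum_i (\d u^i/\d t^\alpha)\,\d/\d u^i$, use the defining property $\d/\d u^i \circ \d/\d u^j = \delta_{ij}\,\d/\d u^i$ of canonical coordinates, and apply the resulting vector field $\d/\d t^\alpha \circ \d/\d t^\beta$ to the coordinate function $u^k$; only the diagonal $i=j=k$ term survives. Substituting this identity into \eqref{eq:open WDVV,2} gives the equation directly.

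For \eqref{eq:open WDVV,1} I would differentiate the key identity with respect to $t^\gamma$, obtaining
\[
\frac{\d c^\nu_{\alpha\beta}}{\d t^\gamma}\frac{\d u^k}{\d t^\nu} + c^\nu_{\alpha\beta}\frac{\d^2 u^k}{\d t^\nu\d t^\gamma} = \frac{\d^2 u^k}{\d t^\alpha\d t^\gamma}\frac{\d u^k}{\d t^\beta} + \frac{\d u^k}{\d t^\alpha}\frac{\d^2 u^k}{\d t^\beta\d t^\gamma},
\]
swap $\alpha \leftrightarrow \gamma$, and subtract; the terms $\d c^\nu_{\alpha\beta}/\d t^\gamma$ and $\d c^\nu_{\gamma\beta}/\d t^\alpha$ cancel because both equal $\eta^{\nu\mu}\d^4 F/(\d t^\mu\d t^\alpha\d t^\beta\d t^\gamma)$, which is symmetric in $\alpha,\beta,\gamma$. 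Multiplying the resulting identity by $s$ yields \eqref{eq:open WDVV,1}. There is no real obstacle here, the only substantive step being the reading off of $\d u^k/\d t^1 = 1$ and $E\cdot u^k = u^k + a^k$ from the canonical-coordinate picture; the rest is routine symbol manipulation.
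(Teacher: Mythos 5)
Your proposal is correct and follows essentially the same route as the paper: the identity $c^\nu_{\alpha\beta}\frac{\d u^k}{\d t^\nu}=\frac{\d u^k}{\d t^\alpha}\frac{\d u^k}{\d t^\beta}$ gives \eqref{eq:open WDVV,2} directly, differentiating it and using the symmetry $\frac{\d c^\nu_{\alpha\beta}}{\d t^\gamma}=\frac{\d c^\nu_{\gamma\beta}}{\d t^\alpha}$ gives \eqref{eq:open WDVV,1}, and the unit and homogeneity conditions follow from $\frac{\d}{\d t^1}=\sum_i\frac{\d}{\d u^i}$ and the form of $E$ in canonical coordinates. The only (harmless) difference is that you also supply a short derivation of the canonical-coordinate identity, which the paper simply recalls as known.
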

\begin{proof}
Recall that the canonical coordinates satisfy the equations
\begin{equation}\label{eq:canonical coordinates}
c_{\alpha\beta}^\nu\frac{\p u^k}{\p t^\nu} = \frac{\p u^k}{\p t^\alpha}\frac{\p u^k}{\p t^\beta},\quad 1\le\alpha,\beta\le N,
\end{equation}
which immediately imply equations~\eqref{eq:open WDVV,2}. Equations~\eqref{eq:open WDVV,1} for the function $F^o=u\cdot s$ are equivalent to
\begin{gather}\label{eq:open WDVV1 for canonical coordinates}
c_{\alpha\beta}^\nu \frac{\p^2 u^k}{\p t^\nu \p t^\gamma}+\frac{\p^2 u^k}{\p t^\alpha \p t^\beta}\frac{\p u^k}{\p t^\gamma}=c_{\gamma\beta}^\nu \frac{\p^2 u^k}{\p t^\nu \p t^\alpha} + \frac{\p^2 u^k}{\p t^\gamma \p t^\beta}\frac{\p u^k}{\p t^\alpha}.
\end{gather}
Differentiating equation~\eqref{eq:canonical coordinates} with respect to $t^\gamma$, we get
\[
\frac{\p c_{\alpha\beta}^\nu}{\p t^\gamma} \frac{\p u^k}{\p t^\nu} = \frac{\p^2 u^k}{\p t^\gamma \p t^\alpha} \frac{\p u^k}{\p t^\beta} + \frac{\p u^k}{\p t^\alpha} \frac{\p^2 u^k}{\p t^\gamma \p t^\beta} - c_{\alpha\beta}^\nu \frac{\p^2 u^k}{\p t^\nu \p t^\gamma}.
\]
Combining this equation with the similar equation for $\dfrac{\p c_{\gamma\beta}^\nu}{\p t^\alpha} \dfrac{\p u^k}{\p t^\nu}$ and noting that $\dfrac{\p c_{\gamma\beta}^\nu}{\p t^\alpha} = \dfrac{\p c_{\alpha\beta}^\nu}{\p t^\gamma}$, we get equation~\eqref{eq:open WDVV1 for canonical coordinates}.
  
Property~\eqref{eq:unit condition for Fo} follows from the fact that $\frac{\d}{\d t^1}=\sum_{i=1}^n\frac{\d}{\d u^i}$. The homogeneity property~\eqref{eq:homogeneity for canonical coordinates} is obvious.
\end{proof}

\subsection{Open Gromov--Witten theory of $\PP^1$}

Consider the $2$-dimensional Frobenius manifold given by the Gromov--Witten theory of $\PP^1$:
\[
F(t_1,t_2) = \frac{1}{2}t_1^2t_2+e^{t_2}.
\]
The Euler vector field is $E = t_1\frac{\p}{\p t_1} + 2 \frac{\p}{\p t_2}$ and $\delta = 1$. Let us find all solutions~$F^o(t_1,t_2,s)$ of the open WDVV equations~\eqref{eq:open WDVV,1},~\eqref{eq:open WDVV,2}, satisfying condition~\eqref{eq:unit condition for Fo} and the homogeneity condition
$$
t_1\frac{\d F^o}{\d t_1}+2\frac{\d F^o}{\d t_2}=F^o+D_1 t_1+D_2 t_2+\tD s+E.
$$
We consider such solutions up to adding a constant and linear terms in the variables $t_1,t_2$ and~$s$. Then we can assume that $D_1=0$. The general form of such a function $F^o$ is $F^o = t_1 s + e^{\frac{t_2}{2}} \phi(t_2,s)$, for some function $\phi(t_2,s)$, satisfying $2e^{\frac{t_2}{2}}\frac{\d\phi}{\d t_2}=D_2 t_2+\tD s+E$. Making the transformation $\phi\mapsto\phi-(D_2(t_2+2)+\tD s+E)e^{-\frac{t_2}{2}}$, we come to a function $F^o$ of the form
\begin{gather}\label{eq:form for open P1}
F^o = t_1 s + e^{\frac{t_2}{2}} \phi(s),
\end{gather}
where the function $\phi$ depends only on $s$. Such a function $F^o$ satisfies the homogeneity property
$$
t_1\frac{\d F^o}{\d t_1}+2\frac{\d F^o}{\d t_2}=F^o
$$
and condition~\eqref{eq:unit condition for Fo}.

The system of open WDVV equations~\eqref{eq:open WDVV,1},~\eqref{eq:open WDVV,2} for a function $F^o$ of the form~\eqref{eq:form for open P1} is equivalent to the equation $(\phi')^2 - \phi\phi'' = 4$ and, solving this ordinary differential equation, we get the following two-parameter family of solutions: 
\begin{gather*}
F^o_{\alpha,\beta} = t_1s\pm 2\alpha^{-1}e^{\frac{t_2}{2}}\sinh\left(\alpha(s+\beta)\right),\quad\alpha,\beta\in\mbC.
\end{gather*}
For $\alpha=0$ we get the functions
$$
F^o_{0,\beta}=\left.\left(t_1s\pm 2\alpha^{-1}e^{\frac{t_2}{2}}\sinh\left(\alpha(s+\beta)\right)\right)\right|_{\alpha=0}=t_1s\pm 2 e^{\frac{t_2}{2}}(s+\beta),
$$
which correspond to the solutions given by the canonical coordinates $u_1=t_1+2e^{\frac{t_2}{2}}$ and $u_2=t_1-2e^{\frac{t_2}{2}}$ of our Frobenius manifold.

\begin{remark}
We believe that the function $F^o_{\alpha,\beta}$ with a correctly chosen calibration and the corresponding open descendent potential should control the genus $0$ open Gromov--Witten invariants of $\PP^1$, which don't have a rigorous geometric construction at the moment.
\end{remark}

\end{document}